\newcommand{\bZ}{{\mathbb{Z}}}
\newcommand{\bQ}{{\mathbb{Q}}}
\newcommand{\bN}{{\mathbb{N}}}
\newcommand{\vf}{{\bf f}}
\newcommand{\lc}{\operatorname{lc}}
\newcommand{\im}{\operatorname{im}}
\newcommand{\id}{\operatorname{id}}
\newcommand{\ind}{\operatorname{ind}}
\newcommand{\den}{\operatorname{den}}
\newcommand{\num}{\operatorname{num}}
\newcommand{\polypart}{\operatorname{poly}}
\newcommand{\proppart}{\operatorname{proper}}
\newtheorem{thm}{Theorem}
\newtheorem{convention}[thm]{Convention}
\begin{document}

\title*{Telescoping Algorithms for $\Sigma^*$-Extensions via Complete Reductions}
\author{Shaoshi Chen\orcidID{0000-0001-8756-3006
	}, \\ Yiman Gao \thanks{Corresponding author}\orcidID{0009-0005-2070-6232}, \\ Hui Huang\orcidID{0000-0003-1067-1564} and\\ Carsten Schneider\orcidID{0000-0002-5703-4530}}
\institute{Shaoshi Chen \at KLMM, AMSS, Chinese Academy of Sciences, Beijing, China\\  \email{schen@amss.ac.cn}
\and Yiman Gao \at Johannes Kepler University Linz, Research Institute for Symbolic Computation (RISC), Linz, Austria \\ \email{ymgao@risc.jku.at} 
\and Hui Huang \at School of Mathematics and Statistics, Fuzhou University, Fuzhou, China \\ \email{huanghui@fzu.edu.cn}
\and Carsten Schneider \at Johannes Kepler University Linz, Research Institute for Symbolic Computation (RISC), Linz, Austria \\ \email{Carsten.Schneider@risc.jku.at}
}

%
%
\maketitle

\vspace*{-8cm}\begin{flushleft}
	RISC Report Series  25-05 
\end{flushleft}

\vspace*{8cm}

\abstract{A complete reduction on a difference field is a linear operator that enables one to decompose an element of the field as the sum of a summable part and a remainder such that
the given element is summable if and only if the remainder is equal to zero.
In this paper, we present a complete reduction  in a tower of $\Sigma^*$-extensions that turns to a new efficient framework for the parameterized telescoping problem. Special instances of such $\Sigma^*$-extensions cover iterative sums such as the harmonic numbers and generalized versions that arise, e.g., in combinatorics, computer science or particle physics. Moreover, we illustrate how these new ideas can be used to reduce the depth of the given sum and provide structural theorems that connect complete reductions to Karr's Fundamental Theorem of symbolic summation.
}

\section{Introduction}

The telescoping problem is a fundamental paradigm in symbolic summation. 
Given a sequence $f(k)$ that belongs to some domain of sequences, decide constructively whether there is a $g(k)$ in the same domain or a suitable extension such that
\begin{equation}\label{EQ:telescoping} 
f(k)=g(k+1)-g(k).
\end{equation}
If such a solution can be derived, one obtains the identity
$$\sum_{k=a}^{b}f(k)=g(b+1)-g(a)$$
by choosing an appropriate non-negative integer $a$. The telescoping problem has been solved for various classes of functions including rational functions \cite{Abra1971}, hypergeometric terms\cite{Gosp1978,PS1995}, and $q$-hypergeometric terms \cite{Koor1993, PR1997}. Karr \cite{Karr1981,Karr1985} introduced  so-called $\Pi\Sigma^*$-fields covering big classes of indefinite nested sums and products, and provided an algorithm to solve the telescoping problem for a given $f(k)$ that can be defined in a $\Pi\Sigma^*$-field. 
Later Karr's algorithm was improved to refined (parameterized) telescoping algorithms
that can find sum representations with optimal nesting depth~\cite{Schn2008,Schn2007,Schneider:2015,Schneider:2023} in $\Pi\Sigma^*$-fields. Furthermore, the difference field setting has been generalized to ring versions~\cite{Schn2016a,Schn2017} involving algebraic objects like $(-1)^k$ in which one can represent any indefinite nested sums defined over \hbox{($q$--)}hypergeometric products fully algorithmically. Since also the parameterized telescoping problem, a special case of Zeilberger's creative telescoping paradigm~\cite{Zeil1990b,Schn2001,PS2003} and recurrence solving~\cite{ABPS2021} can be solved in such difference fields and rings, a flexible summation machinery implemented in the Mathematica package {\tt Sigma} is available for practical problem solving. For details on all these refinements and improvements implemented in the Mathematica package {\tt Sigma} we refer to~\cite{Schn2006,Schn2021}.

An alternative method to handle the telescoping problem is to split the given summand into a summable part and a non-summable part which is minimal in some sense. The given summand is summable if and only if the latter is zero. Such a decomposition is called an additive decomposition. Additive decompositions are given for rational functions \cite{Abra1975,Paul1995}, hypergeometric terms and their $q$-analogues \cite{CHKL2015,DHL2018} without solving any difference equations explicitly. They act as preprocessors in reduction-based creative telescoping. Compared with Zeilberger's algorithm \cite{Zeil1990b,Zeil1991,PWZ1996} for computing telescopers, reduction-based method can separate the computation of telescopers and certificates. Since the size of certificates are much bigger than that of telescopers, this may lead to significant speed-ups.

A refined construction of an additive decomposition can be given in terms of linear algebra as follows~\cite[Definition 5.67]{Kaue2023}.

\begin{definition}\label{Def:CompleteReductionVS}
Let $C$ be a field, $V$ a $C$-linear space and $U$ a $C$-subspace of $V$. A $C$-linear operator $\phi:V\to V$ is called a {\em complete reduction} for $U$ if $f-\phi(f)\in U$ for all $f\in V$ and $\ker(\phi)=U$.
\end{definition}

\noindent Note that $\phi$ induces the direct sum $V=U \oplus\im(\phi)$. 
In this way, a complete reduction can be used to establish an additive decomposition ~\cite[\S 1.2]{GaoPhD2024} by choosing appropriate $C$-bases of $U$ and $\im(\phi)$ where $\im(\phi)$ contains the non-integrable/non-summable contributions.
 However, additive decompositions (as mentioned above) are usually not closed under addition, i.e., the set of minimal non-integrable/non-summable elements do not form a subspace of $V$. This refined construction appeared recently in symbolic integration.
More precisely, complete reductions have been developed for hyperexponential functions \cite{BCCLX2013}, algebraic functions \cite{CKK2016,CDK2021}, fractions of differential polynomials \cite{BLLRR2016},
Fuchsian D-finite functions \cite{CHKK2018}
and D-finite functions \cite{BCPS2018,vdHJ2021,CDK2023}. Recently, in order to compute elementary integrals effectively and discuss how to construct telescopers for non-D-finite functions, complete reductions are constructed for derivatives in exponential and primitive towers \cite{GaoPhD2024,DGLL2025}.

A natural question is how these complete reductions can be carried over from the differential case to the difference case, i.e., to a difference field $(F, \sigma)$ where $F$ is a field,  $\sigma:F\to F$ is a field automorphism and $C=\{c\in F\mid \sigma(c)=c\}$ is the field of constants. In the following we denote $\{\sigma(f)-f\mid f\in F\}$ by $\Delta(F)$ which forms a $C$-linear subspace of $F$. So there is a complementary space $R$ such that $F=\Delta(F) \oplus R$ where $R$ contains the non-summable elements in $F$. Usually, $R$ is infinite-dimensional. Hence, to obtain algorithmically a complete reduction $\phi:F\to F$ for $\Delta(F)$, we are faced with two tasks:
\begin{itemize}
\item [(1)] ~fix a complementary space $R$;
\item [(2)] ~develop an algorithm that solves the following problem: given $f\in F$, find $g\in F$ and $r\in R$ such that $f=\Delta(g)+r$.
\end{itemize}

Suppose that we can construct such a complete reduction with $(g,r)\in F^2$ and can interpret $f,r,g$ as sequences $f(k)$, $r(k)$ and $g(k)$ in terms of summation objects. Then this yields the refined telescoping equation
\begin{equation}\label{Equ:RefinedTele}
f(k)=g(k+1)-g(k)+r(k).
\end{equation}
Hence summing this equation over $k$ from $a$ to $b$ yields
\begin{equation}\label{Equ:RefinedTeleSummed}
	\sum_{k=a}^bf(k)=g(b+1)-g(a)+\sum_{k=a}^b r(k).
\end{equation}
We note further that $r\notin\Delta(F)$, i.e., the sum on the right-hand side of~\eqref{Equ:RefinedTele} is not summable in $F$. 

Some first steps related to this problem have been elaborated for $\Pi\Sigma^*$-extensions in~\cite{Schn2007,Schneider:2023} where an additive decomposition (but not a complete reduction) has been elaborated.  There we could show that $r$ is an element in the smallest possible subfield (w.r.t.\  the order of the generators producing the field $F$). Furthermore the degree of the top-most generator on which $r$ depends is optimized.
In this article we will substantially improve and refine these constructions for difference fields $(F,\sigma)$ that are built by a tower of $\Sigma^*$-extensions. More precisely, we will provide a complete reduction method that solves the telescoping problem and more generally the parameterized telescoping algorithm in $\Sigma^*$-extensions, covering creative telescoping as a special case.
Typical examples of such $\Sigma^*$-towers are difference fields generated by harmonic numbers and their generalized harmonic sums~\cite{Bluemlein:99,ABS:2011}. They play a key role in combinatorics\cite{Knut1997,Schneider:2021}, number theory \cite{boyd1994,PS2003} and particle physics \cite{BMSS:22b}. A key feature of our new reduction based method is that one does not have to solve any linear difference equation by denominator/degree boundings and rather involved linear system solving~\cite{Karr1981,Bron2000,Schn2001}. In particular, one finds not only optimal representations for $r$ in~\eqref{Equ:RefinedTeleSummed} w.r.t.\ degrees in the top variable (as derived in~\cite{Schn2007}) but also optimal representations of the coefficients of the polynomials given in the subfields.
As a consequence, our advanced algorithm can outperform the existing implementation of the summation package \texttt{Sigma} for various classes of inputs concerning simplification and time complexity aspects.

The article is organized as follows. In Section \ref{SECT:pre}, we present some basic notions related to symbolic summation and review some dual techniques that are only relevant to linear algebra, in preparation for the next two sections.  The algorithmic construction of a complete reduction for a given $\Sigma^*$-monomial (a generator of a $\Sigma^*$-extension modeling an indefinite sum) is derived in Section \ref{SECT:construct}. Applying these ideas iteratively (recursively) yields an algorithmic approach to construct a complete reduction in a tower of $\Sigma^*$-extensions in Section \ref{SECT:towers}. We also compare our method with the built-in algorithm of the Mathematica package {\tt Sigma}, show that this new framework can be used to solve the parameterized telescoping problem and relate complete reductions to reduced $\Sigma^*$-extensions which play important roles in Karr's structural theorem. A conclusion is given in Section~\ref{Sec:Conclusion}.

\section{Preliminaries}\label{SECT:pre}

Let $\bZ^{-}$ and $\bN$ denote the sets of negative and nonnegative integers, respectively. For an Abelian group $(G,\,+,\,0)$, we set $G^\times:=G \setminus \{0\}$. Throughout the paper, all fields are of characteristic zero. 
Let $F$ be a field and $t$ be an indeterminate over $F$. For  $p \in F[t]$, denote its leading coefficient and degree by $\lc_t(p)$ and $\deg_t(p)$, respectively. Note that $\lc_t(0):=0$ and $\deg_t(0):=-\infty$. We say that $p$ is monic with respect to $t$ if $\lc_t(p)=1$.  
For a rational function $f$ in $F(t)$, we call $a/b$ the {\em reduced representation} of $f$ if $a,b \in F[t]$ with $\gcd(a,\,b)=1$ and $b$ is monic with respect to $t$. Moreover, $a$ and $b$ are denoted by $\num(f)$ and $\den(f)$, respectively. If $\deg_t(\num(f))<\deg_t(\den(f))$, we say that $f$ is {\em $t$-proper}. In particular, $0$ is $t$-proper. The set consisting of $t$-proper elements of $F(t)$ is a linear space over $F$,  which is denoted by $F(t)_{(r)}$. 
Using polynomial division, each element $f\in F(t)$ can be uniquely written as the sum of a polynomial in $t$ and a $t$-proper element denoted by $\polypart(f)$ and $\proppart(f)$, respectively. Furthermore, we get the direct sum 
\begin{equation}\label{Equ:PolyFracSum}
F(t)=F[t] \oplus F(t)_{(r)}
\end{equation}
of $F$-vector spaces.

If it turns to algorithms, we assume that a field $F$ is \emph{computable}. This means that the standard field operations in $F$ are \emph{computable} (and thus one can apply the extended Euclidean algorithm in $F[t]$ and solve linear systems in $F(t)$). Furthermore, we assume that one can factorize polynomials in $E[x]$ over any multivariate rational function field extension $E=F(x_1,\dots,x_r)$ of $F$. Note that with our definition a rational function field $E$ over $F$ is computable if $F$ is computable.

Let $\sigma: F\to F$ be a field automorphism. The pair $(F,\,\sigma)$ is called a {\em difference field}.  We call $c \in F$ a {\em constant} if $\sigma(c)=c$. The set of constants $C_F=\{c\in F\mid \sigma(c)=c\}$ forms a subfield of $F$, which is called the \emph{constant field} of $(F,\sigma)$. The forward shift operator is defined by   

\begin{equation} 
 \begin{array}{cccc}
\Delta: & F & \longrightarrow & F \\
      &  f   & \mapsto         &  \sigma(f)-f
\end{array}
\end{equation}
which forms a $C_F$-linear map and satisfies for all $f_1,f_2 \in F$ the rule
\begin{equation}\label{Equ:Leibnitzrule}
\Delta(f_1f_2)=f_1\Delta(f_2)+\Delta(f_1)f_2+\Delta(f_1)\Delta(f_2).
\end{equation}
For a $C_F$-linear space $S\subset F$ we define the {\em summable space} of $S$ by
$$\Delta(S):=\{\sigma(f)-f\mid f\in S\}$$
which is again a $C_F$-linear space. 

 We call a difference field $(F,\, \sigma)$ \emph{computable} if the field $F$ is computable and the function $\sigma:F\to F$ is computable (i.e., can be executed by an algorithm).

Let $(F,\,\sigma)$ and $(E,\, \tilde{\sigma})$ be two difference fields. If $F$ is a subfield of $E$ and $\tilde{\sigma}|_{F}=\sigma$, then $(E,\, \tilde{\sigma})$ is called a {\em difference field extension} of $(F,\,\sigma)$. To introduce $\Sigma^*$-extensions, let $F(t)$ be the rational function field extension of $F$, i.e., $t$ is transcendental over $F$. Then for a given $a\in F$, there is a unique difference field extension $(F(t),\sigma)$ of $(F,\sigma)$ defined by $\sigma(t)=t+a$.

\begin{definition}
$(F(t),\sigma)$ as given above is called a {\em$\Sigma^*$-extension} of $(F,\sigma)$ if the set of constants remain unchanged, i.e, $C_{F(t)}=C_F$. Then $t$ is also called a {$\Sigma^*$-monomial} over $(F,\sigma)$ (or in short over $F$).
\end{definition}

We heavily rely on the following result~\cite{Karr1981}; for a proof see, e.g.,~\cite{Schn2016a}.

\begin{theorem}\label{Thm:SigmaKarr}
Let $(F(t),\sigma)$ be a difference field extension of $(F,\sigma)$ as given above. Then $t$ is a $\Sigma^*$-monomial over $(F,\sigma)$, i.e., $C_{F(t)}=C_F$ if and only if $\Delta(t)\notin\Delta(F)$.
\end{theorem}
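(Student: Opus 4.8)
The plan is to prove both directions by relating the statement ``$C_{F(t)}=C_F$'' to the solvability of a first-order difference equation. Recall that $\sigma(t)=t+a$ with $a\in F$, $a\neq 0$ (if $a=0$ then $t$ is a new constant and both sides fail, so we may assume $a\in F^\times$). For the forward direction, I would argue by contraposition: suppose $\Delta(t)=a\in\Delta(F)$, say $a=\sigma(w)-w$ for some $w\in F$. Then $\sigma(t-w)=t+a-\sigma(w)=t-w$, so $t-w$ is a constant of $(F(t),\sigma)$. Since $t$ is transcendental over $F$ and $w\in F$, the element $t-w$ lies in $F(t)\setminus F$, hence $C_{F(t)}\supsetneq C_F$, contradicting that $t$ is a $\Sigma^*$-monomial.

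For the reverse direction, assume $\Delta(t)\notin\Delta(F)$ and suppose for contradiction that there exists $g\in C_{F(t)}\setminus C_F$. I would first reduce to the case $g\in F[t]$: writing $g=p/q$ in reduced form with $q$ monic in $t$, the relation $\sigma(g)=g$ gives $\sigma(p)q=p\,\sigma(q)$; since $\sigma$ permutes irreducible factors and preserves monicity and degree (note $\deg_t\sigma(h)=\deg_t h$ because $\sigma(t)=t+a$), one deduces $\sigma(q)=q$, and then a short argument (e.g.\ looking at the leading behaviour, or using that $\sigma$ fixes $\deg_t q$ and the coefficients cannot shift indefinitely) forces $q\in F$; rescaling, $g\in F[t]\setminus F$. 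Now write $g=\sum_{i=0}^n c_i t^i$ with $c_i\in F$, $c_n\neq0$, $n\geq1$. Expanding $\sigma(g)=\sum_i \sigma(c_i)(t+a)^i$ and comparing coefficients of $t^n$ gives $\sigma(c_n)=c_n$, so $c_n\in C_F$; comparing coefficients of $t^{n-1}$ yields $\sigma(c_{n-1})+n\,a\,\sigma(c_n)=c_{n-1}$, i.e.\ $n\,c_n\,a=c_{n-1}-\sigma(c_{n-1})=-\Delta(c_{n-1})$. Since $c_n\in C_F^\times$, dividing by $n\,c_n$ shows $a=\Delta\!\left(-\tfrac{1}{n\,c_n}c_{n-1}\right)\in\Delta(F)$, contradicting the hypothesis. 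Hence $C_{F(t)}=C_F$.

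The main obstacle is the reduction from a general rational solution $g\in F(t)$ to a polynomial one. One must rule out genuine denominators: the key facts are that $\sigma$ acts on $F[t]$ by an $F$-algebra-ish twisted substitution preserving $t$-degree and monic-ness, so from $\sigma(p)\,q=p\,\sigma(q)$ with $\gcd(p,q)=1$ one gets $q\mid\sigma(q)$ and hence $\sigma(q)=q$ (degrees agree, both monic); then an orbit/telescoping argument on the irreducible factors of $q$ — using that $\sigma^k(t)=t+ka+(\text{lower})$ cannot fix a nonconstant monic polynomial unless its ``shift'' is trivial — forces $q\in F$. I would isolate this as a small lemma. Everything after that is the elementary coefficient comparison sketched above, and the transcendence of $t$ is what guarantees these coefficient equations in $F$ are not vacuous.
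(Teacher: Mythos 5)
The paper does not prove Theorem~\ref{Thm:SigmaKarr} itself but cites \cite{Karr1981,Schn2016a}; your argument is essentially the standard proof found in those references, and both directions are sound. The forward direction (if $a=\Delta(w)$ then $t-w$ is a constant outside $F$ by transcendence of $t$) and the polynomial coefficient comparison (constancy of $c_n$, then $n\,c_n\,a=-\Delta(c_{n-1})$, hence $a\in\Delta(F)$ since $n\,c_n\in C_F^{\times}$ and the characteristic is zero) are both correct.

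One caveat concerns the denominator reduction. The claim that $\sigma(q)=q$ with $q$ monic and nonconstant is impossible is \emph{not} a hypothesis-free fact, so the ``orbit/telescoping argument on the irreducible factors of $q$'' you allude to cannot succeed on its own: with $F=C(x)$, $\sigma(x)=x+1$ and $a=1=\Delta(x)$, the monic degree-one polynomial $q=t-x$ satisfies $\sigma(q)=q$. The correct route is the one you mention only in passing (``looking at the leading behaviour''): if $\sigma(q)=q$ with $q$ monic of degree $m\geq 1$, comparing the coefficients of $t^{m-1}$ yields $m\,a+\Delta(q_{m-1})=0$, hence $a\in\Delta(F)$, contradicting the standing hypothesis of the reverse direction. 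This is exactly the computation of your final step applied to $q$ instead of $g$, so the lemma you propose to isolate must carry the hypothesis $\Delta(t)\notin\Delta(F)$ rather than being a general structural statement about $\sigma$-fixed polynomials. The step $\sigma(p)q=p\,\sigma(q)$ together with $\gcd(p,q)=\gcd(\sigma(p),\sigma(q))=1$ giving $q\mid\sigma(q)$ and $\sigma(q)\mid q$, hence $\sigma(q)=q$ by monicity and degree preservation, is fine. With the caveat addressed, the proof is complete.
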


In the following, we recall some notions (and properties) for the polynomials $p,q\in F[t]$ where $t$ is a $\Sigma^*$-monomial over $(F,\sigma)$.
\begin{itemize}
\item If $\gcd(p, \sigma^{\ell}(p))=1$ for any nonzero integer $\ell$, then it is said to be {\em $\sigma$-normal}; compare~\cite[Definition 2.2]{CDGHL2025}. Note that each monic irreducible polynomial with positive degree in $t$ is $\sigma$-normal by~\cite{Karr1981}; for a proof see~\cite[Cor.~1]{Bron2000} or \cite[Lemma~2.2.4]{Schn2001}. 
\item $f=\frac{p}{q}$ in reduced representation is said to be {\em $\sigma$-simple} if $f$ is $t$-proper and $q$ is $\sigma$-normal. 
\item We say that $p, q\in F[t]$ are {\em $\sigma$-coprime} if $\gcd(p, \sigma^{\ell}(q))=1$ for any nonzero integer~$\ell$.  
Note that the product of two 	
$\sigma$-normal and $\sigma$-coprime polynomials is $\sigma$-normal.
\item $p$ and $q$ are said to be {\em $\sigma$-equivalent} if $p=c\sigma^{\ell}(q)$ for some  $c \in F^{\times}$ and $\ell \in \bZ$. If this is the case, we also write $p\stackrel{\sigma}{\sim} q$. Note that $\stackrel{\sigma}{\sim}$ forms an equivalence relation. Further note that if $p$ and $q$ are $\sigma$-equivalent and $p,q\not\in F$ then $\ell\in\bZ$ and $c \in F^{\times}$ are unique: suppose we find different $\ell_1$ and $\ell_2$ with $c_1,c_2\in F^{\times}$ such that $p=c_1\sigma^{\ell_1}(q)$ and $p=c_2\sigma^{\ell_2}(q)$. Then $\sigma^{\ell_1-\ell_2}(q)=\sigma^{-\ell_2}(c_2/c_1) q$ which is not possible by~\cite[Cor.~1]{Bron2000} (i.e., there does not exist an element $f\in F[t]\setminus F$ with $f\mid \sigma^{\ell}(f)$ for some $\ell\in\bZ^{\times}$). 
\end{itemize} 

Finally, we recall effective bases introduced in \cite{DGLL2025}. They will help us to construct a complementary space of the given linear space in a dual manner in Sections \ref{SECT:construct} and \ref{SECT:towers}.

Let $K$ be a field with a subfield $E$, and $\Theta$ be an $E$-basis of $K$. 
For every $\theta  \in \Theta  $, we define the $E$-linear map $\theta^*: K \rightarrow E$ given the linear extension of
\[\theta^*(f)  =
\left\{
\begin{array}{ll}
1, &  \text{if $f =\theta$}, \\ \\
0,  & \text{if $f \in \Theta$ and $f \neq \theta$}.
\end{array} \right.
 \]
Let $\theta \in \Theta$ and $a \in K^{\times}$.
 We call $\theta$ {\em effective} for $a$ if $\theta^*(a) \neq 0$. 

\begin{definition}[\cite{DGLL2025}]\label{DEF:effective0}
The basis $\Theta$ is said to be effective  if there are two algorithms:
\begin{itemize}
\item [(i)]~given $a\in K^{\times}$, find $\theta \in \Theta$ with $\theta^*(a)\neq0$ (i.e., $\theta$ is effective for $a$); and 
\item [(ii)]~given $\theta \in \Theta  $ and $a \in K^{\times}$, compute $\theta^*(a)$.
\end{itemize}
\end{definition}

Let $K=F(t)$ be the field of rational functions in $t$ over the field $F$.
Set $T = \left\{ t^i \mid i \in \bN \right\}$ and
$M_{t}$ to be the set consisting of all monic and irreducible polynomials with positive degrees.
By the irreducible partial fraction decomposition, we see that
\begin{equation} \label{EQ:basis0}
  \Theta =  T \cup \left\{ \frac{t^i}{q^j} \mid q \in M_{t}, 0 \le i < \deg_t(q), j \in \bZ^+ \right\}
\end{equation}
is an $F$-basis of $F(t)$, which is called the {\em canonical $F$-basis} of $F(t)$. In our concrete case $\Theta$ is effective
with our assumption that $F$ is computable.
 More precisely, for the first task, we will use a variant of Algorithm {\tt BasisElement} from~\cite{DGLL2025} which takes care that $t$ appears in $\theta \in \Theta$ if $t$ appears in $a\in F(t)$. 

\medskip \noindent
{\tt Algorithm BasisElementForSummation}

\smallskip \noindent
{\tt Input:} $a\in F(t)^{\times}$ 

\smallskip \noindent
{\tt Output:} $(\theta, c) \in  \Theta \times F^\times$ with $c = \theta^*(a)$

\begin{enumerate}

	\item[(1)] $p \leftarrow \polypart(a),$ $r \leftarrow \proppart(a)$, $d \leftarrow$ $\den(r)$
    \item[(2)] {\tt if} $p \neq 0$ and $\deg_t(p)>0$ {\tt then} {\tt return} $\left(t^{\deg_t(p)}, \, \lc_t(p) \right)$ {\tt end if} 
    \item[(3)]  {\tt if} $r \neq 0$ {\tt then}
    \begin{itemize}
    
    \item[] $q \leftarrow$ a factor of $d$ in $M_t$, $m \leftarrow$ the multiplicity of $q$ in $d$
    \item[]  $h \leftarrow$ the coefficient of $q^{-m}$ in the $q$-adic expansion of $r$
    \item[]  {\tt return} $\left( t^{\deg_t(h)}/q^m,  \, \lc_t(h) \right)$
    \end{itemize}
    \item[] {\tt end if}
    \item[(4)]\label{BES:Step:Trivial} {\tt return} $(1, p)$
\end{enumerate}

We note that Algorithm  {\tt BasisElementForSummation} is nondeterministic and choosing different $q$'s in $d$ lead to different outputs. In practice, we choose $q$ to be the first member in the list of irreducible factors of $d$ computed by a factorization algorithm.

For the second task in Definition~\ref{DEF:effective0} we can use Algorithm~2.6 introduced in~\cite{DGLL2025}; for completeness we repeat it here.

\medskip \noindent
{\tt Algorithm Coefficient}

\smallskip \noindent
{\tt Input:} $(b, \theta) \in F(t) \times \Theta$

\smallskip \noindent
{\tt Output:} $\theta^*(b)$

\begin{enumerate}
	\item[(1)] $p \leftarrow \polypart(b),$ $r \leftarrow \proppart(b)$
	\item[(2)] Write $\theta = t^k/q^m$ for some $k, m \in \bN$, $q \in M_t$, $\gcd(t, q)=1$
	\item[(3)] {\tt if} $m = 0$ {\tt then}
	{\tt return} the coefficient of $t^k$ in $p$
	{\tt end if}
	\item[(4)] $h \leftarrow$ the coefficient of $q^{-m}$ in the $q$-adic expansion of $r$
	\item[(5)] {\tt return} the  coefficient of $t^k$ in $h$
\end{enumerate} 

Both algorithms will be used in Section \ref{SECT:towers} below.

\begin{example}\label{EX:BasisCoeff}
Let $E=\bQ$ and
$$f=\frac{3-x^2}{x^2+3x+2} \in E(x).$$
By the irreducible partial fraction decomposition, we have
$$f= -1+\frac{2}{x+1}+\frac{1}{x+2}.$$
Note that $\polypart(f)\in E$ and $\proppart(f) \neq 0$. So both $1/(x+1)$ and $1/(x+2)$ can be chosen as basis elements of $f$. If we take $\theta=1/(x+1)$, then $\theta^*(f)=2$.
Otherwise, if we take $f=1/(x+2)$, $\theta^*(f)$ is equal to $1$. So the function \texttt{BasisElementForSummation}($f$) may return $(\frac{1}{x+1}, 2)$ or $(\frac{1}{x+2}, 1)$. In particular, we have $\texttt{Coefficient}(f,\frac{1}{x+1})=2$, $\texttt{Coefficient}(f,\frac{1}{x+2})=1$. An other example is $\texttt{Coefficient}(f,x)=0$ or $\texttt{Coefficient}(f,1)=-1$.
\end{example}

Let $F_n=F_0(t_1,t_2,\cdots,t_n)$ be a field of multivariate rational functions with respect to $t_1,\cdots,t_n$ and let $C$ be a subfield of $F_0$ (later $C$ will take over the role of the constant field). For each $i$ with $0 \le i \le e$, set $F_i=F_{i-1}(t_i)$. Let $\Theta_i$ be the $F_{i-1}$-canonical basis of $F_i$ (given in~\eqref{EQ:basis0} with $F=F_{i-1}$ and $t=t_i$) and let $\Theta_0$ be a $C$-basis of $F_0$. It follows that
\begin{equation}\label{Equ:FullCThetaBasis}
\Xi=\{ \theta_0\theta_1\cdots \theta_n\mid \theta_i \in \Theta_i \}
\end{equation}
is a $C$-basis of $F_n$, which is called the {\em $\Theta_0$-canonical basis} of $F_n$. Here we define for $\theta=\theta_0\theta_1\cdots \theta_n\in \Xi$ the $C$-linear map $\theta^*:F_n\to C$ by $\theta^*=\theta^*_{0}\circ\dots\circ\theta^*_{n-1}\circ\theta_n^*$.
For $f\in F_n$, we define the {\em indicator} of $f$ to be 
$$\ind_n(f)=\min\{0 \le i \le n \mid f\in F_i\}.$$ 
Note that any $f\in F_n$ can be written in the form $f=\sum_{\xi_i \in \Xi} e_i \xi_i$ with $e_i\in C$ in the $\Theta_0$-canonical basis $\Xi$. As a consequence, 
\begin{equation}\label{Equ:IndToBasisEl}
\ind_n(f)=\max\{\ind_n(\xi_i) \mid e_i \neq 0\}.
\end{equation}

\begin{lemma}\label{Lemma:EffectiveBasisLifting}
Let $F_n$ be the rational function field over $F_0$ as introduced above, let $\Theta_0$ be a $C$-basis of $F_0$ and take the $\Theta_0$-canonical basis $\Xi$ of $F_n$ 
given in~\eqref{Equ:FullCThetaBasis}.
Then for any $f\in F_n^{\times}$ there is an effective $\theta\in \Xi$ for $f$ with $\ind_n(f)=\ind_n(\theta)$. In particular, if $F_0$ is computable and $\Theta_0$ is effective, $\Xi$ is effective and one can compute such a $\theta\in\Xi$ for $f$ together with $c=\theta^*(f)\in C$.
\end{lemma}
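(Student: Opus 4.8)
The plan is to prove the statement by induction on $n$, peeling off one variable at a time and using the recursive structure of the $\Theta_0$-canonical basis together with the sub-algorithms \texttt{BasisElementForSummation} and \texttt{Coefficient}. The base case $n=0$ is exactly the hypothesis that $\Theta_0$ is an effective $C$-basis of $F_0$: given $f\in F_0^\times$, part (i) of Definition~\ref{DEF:effective0} supplies $\theta\in\Theta_0$ with $\theta^*(f)\neq0$, and since $F_0=F_{\ind_0(f)}$ and $\ind_0(\theta)=0$, the indicator condition is vacuous.

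For the inductive step, suppose the claim holds for $F_{n-1}$ with its $\Theta_0$-canonical basis $\Xi'=\{\theta_0\cdots\theta_{n-1}\mid\theta_i\in\Theta_i\}$, and let $f\in F_n^\times=F_{n-1}(t_n)^\times$. First I would distinguish two cases according to $\ind_n(f)$. If $\ind_n(f)=n$, i.e.\ $t_n$ genuinely occurs in $f$, then I apply \texttt{BasisElementForSummation} to $f$ (viewed in $F_{n-1}(t_n)$ with $F=F_{n-1}$, $t=t_n$); by construction that algorithm returns $(\theta_n,c_{n-1})$ with $\theta_n\in\Theta_n$ such that $t_n$ appears in $\theta_n$ and $c_{n-1}=\theta_n^*(f)\in F_{n-1}$ is nonzero. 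If instead $\ind_n(f)<n$, then $f\in F_{n-1}$ already, and I set $\theta_n=1\in\Theta_n$ so that $\theta_n^*$ is the identity on $F_{n-1}$ and $c_{n-1}=f\in F_{n-1}^\times$; in this case $\ind_n(\theta_n)=0$. In both cases $c_{n-1}\in F_{n-1}^\times$, so by the induction hypothesis there is an effective $\theta'=\theta_0\cdots\theta_{n-1}\in\Xi'$ for $c_{n-1}$ with $\ind_{n-1}(c_{n-1})=\ind_{n-1}(\theta')$, and one can compute it together with $c=(\theta')^*(c_{n-1})\in C$. I then take $\theta=\theta'\theta_n=\theta_0\cdots\theta_{n-1}\theta_n\in\Xi$. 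By the definition $\theta^*=\theta_0^*\circ\cdots\circ\theta_{n-1}^*\circ\theta_n^*$ we get $\theta^*(f)=(\theta')^*\bigl(\theta_n^*(f)\bigr)=(\theta')^*(c_{n-1})=c\neq0$, so $\theta$ is effective for $f$ and $c=\theta^*(f)$ is computed along the way.

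It remains to verify the indicator equality $\ind_n(f)=\ind_n(\theta)$. In the case $\ind_n(f)=n$: since $t_n$ occurs in $\theta_n$ we have $\ind_n(\theta)=n=\ind_n(f)$ directly. In the case $\ind_n(f)<n$: here $\theta_n=1$, so $\ind_n(\theta)=\ind_n(\theta')=\ind_{n-1}(\theta')$, and $\ind_n(f)=\ind_{n-1}(f)=\ind_{n-1}(c_{n-1})$ since $c_{n-1}=f$; the induction hypothesis gives $\ind_{n-1}(c_{n-1})=\ind_{n-1}(\theta')$, closing the chain. For the final ``in particular'' claim, effectivity of $\Xi$ as a basis — i.e.\ both tasks in Definition~\ref{DEF:effective0} — follows because the recursion above is an algorithm that uses only the effectivity of $\Theta_0$ (base case), the computability of $F_0$ (hence of each $F_i$, as noted in the excerpt), and the two explicitly stated sub-algorithms \texttt{BasisElementForSummation} and \texttt{Coefficient}; task (ii) for $\Xi$ is handled analogously by peeling off $\theta_n^*$ via \texttt{Coefficient} and recursing.

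The main obstacle I anticipate is bookkeeping rather than conceptual: one must be careful that \texttt{BasisElementForSummation} really does return a $\theta_n$ in which $t_n$ appears whenever $t_n$ appears in $f$ (this is exactly the point of using the ``variant'' flagged in the excerpt, as opposed to the plain \texttt{BasisElement}), and that the decomposition $f=\polypart(f)+\proppart(f)$ plus irreducible partial fractions does not accidentally drop the top variable — e.g.\ when $\polypart(f)\in F_{n-1}$ but $\proppart(f)\neq0$, the algorithm correctly branches into step~(3) and produces a genuine $t_n$-monomial. One should also confirm that the coefficient $c_{n-1}$ extracted at the top level lies in $F_{n-1}$ and not merely in $F_n$, which is immediate from the shape of the basis elements $t_n^i/q^j$ with $q\in M_{t_n}$, $0\le i<\deg_{t_n}(q)$, since applying $\theta_n^*$ collects an $F_{n-1}$-coefficient by definition. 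Once these routine checks are in place, the induction goes through cleanly.
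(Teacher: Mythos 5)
Your proof is correct and follows essentially the same route as the paper's: recursively apply \texttt{BasisElementForSummation} at the top-most variable actually occurring in $f$, recurse on the resulting coefficient in the field below, and use the fact that the returned basis element genuinely involves that variable to get the indicator equality. The only cosmetic difference is that you induct on $n$ and insert $\theta_j=1$ at levels where $f$ does not depend on $t_j$, whereas the paper inducts directly on $k=\ind_n(f)$ and skips those levels.
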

\begin{proof}
Let $f\in F_n^{\times}$ and define $k=\ind_n(f)$. We show the lemma by induction (recursion).
If $k=0$ then $f\in F_0^{\times}$ and we can write $f$ as a linear combination in the basis~$\Theta_0$. Then take any $\theta\in F_0$ that appears in this linear combination with the coefficient $c\in C^{\times}$. Note that $\theta \in F_0$ and thus $\ind_n(f)=0=\ind_n(\theta)$. If the basis $\Theta_0$ is effective, such a $\theta$ and $c$ can be computed which completes the base case. 
If $k\geq 1$ it follows by the constructions in Algorithm {\tt BasisElementForSummation} applied to $f$ and $F_{k-1}(t_k)$ (without claiming that one can compute the steps explicitly) that there are $\theta\in\Theta$ given in~\eqref{EQ:basis0} (with $F=F_k$ and $t=t_k$) and $c\in F_{k-1}$ with $\theta^*(f)=c\in F_{k-1}^{\times}$. Note that Algorithm {\tt BasisElementForSummation} can be executed if $F_{0}$ (and thus $F_{k-1}$) is computable.
By construction $c\in F_{k-1}$ and in the $F_{k-1}$-basis $\Theta_k$ the element $\theta\in\Theta_k$ occurs with the coefficient $c\in F_{k-1}^{\times}$. So by the induction assumption there exists  an effective $\theta'\in\Xi$ of $c$ with $\theta'^*(c)\in C^{\times}$; in particular, they can be computed if $F_0$ is computable.
Thus in the representation of $f$ in the $C$-basis $\Xi$ the coefficient of the basis element $\tilde{\theta}=\theta'\,\theta$ is $c'$, i.e., $\tilde{\theta}^*(f)=\theta'^*(\theta^*(f))=\theta'^*(c)=c'$. By construction $\ind_n(\tilde{\theta})=\ind_n(\theta'\theta)=\ind_n(\theta)=k$, which proves the lemma.
\end{proof}

\section{Complete Reductions in a $\Sigma^*$-extension}\label{SECT:construct}

We start with the following definition.

\begin{definition}
Let $(F,\sigma)$ be a difference field with constant field $C$. We call $(g, r)\in F^2$ a  {\em $\Sigma$-pair} of $f\in F$ if $f=\Delta(g)+r$.
Let $V$ be a $C$-subspace of $F$ such that $\Delta(V) \subset V$ and let $\phi: V \rightarrow V$ be a complete reduction for $\Delta(V)$ on $V$ (see Definition~\ref{Def:CompleteReductionVS}). In short, we will also say that $\phi$ is a {\em complete reduction of $\Delta(V)$ on $V$}. Furthermore, we say that $V$ is {\em $\phi$-computable}, if there is an algorithm which computes for given $f\in V$ a $\Sigma$-pair $(g, \phi(f))\in V^2$ of $f$, i.e.,  
$$f=\Delta(g)+\phi(f).$$
We call $\phi(f)$ the \emph{$\phi$-remainder} (or non-summable part w.r.t.\ $\phi$) of $f$ in $F$. If $V=F$ is $\phi$-computable, we also say that $(F,\sigma)$ is $\phi$-computable.
\end{definition}

Throughout this article we exploit the property that a complete reduction $\phi$ is idempotent: for all $f\in V$, $\phi(\phi(f))=\phi(f)$.
Before we present algorithms to compute a complete reduction for a $\Sigma^*$-monomial (and later for a tower of such extensions), we start with the following motivation in the context of additive decompositions. Suppose that we succeed in computing a $\Sigma$-pair $(g,r)$ for $f\in F$ in a given difference field $(F,\sigma)$ with $r=\phi(f)$ w.r.t.\ a complete reduction $\phi$ of $\Delta(F)$. Reinterpreting $f$, $f'$ and $g$ as sequences $f(k)$, $f'(k)$ and $g(k)$ in terms of summation objects, one gets~\eqref{Equ:RefinedTele} and summing this equation $k$ from $a$ to $b$ yields~\eqref{Equ:RefinedTeleSummed}.
Since $r=\phi(f)\notin\Delta(F)$, the sum on the left-hand side of~\eqref{Equ:RefinedTeleSummed} is not summable in $F$. 
This relation can be also connected to the following corollary.

\begin{corollary}\label{Cor:SigmaCheck}
Let $(F,\sigma)$ be a difference field and $\phi$ be a complete reduction of $\Delta(F)$ on~$F$. Let $(F(t),\sigma)$ be the difference field extension of $(F,\sigma)$ with $t$ transcendental over~$F$ and $\Delta(t)=f\in F^{\times}$. Then $t$ is a $\Sigma^*$-monomial over $(F,\sigma)$ if and only if $\phi(f)\neq0$. If this is the case, there exists also a $\Sigma^*$-monomial $s$ over $(F,\sigma)$ with $\Delta(s)=\phi(f)$.
\end{corollary}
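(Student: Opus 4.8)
The plan is to reduce the statement entirely to Karr's criterion, Theorem~\ref{Thm:SigmaKarr}, which says that $t$ is a $\Sigma^*$-monomial over $(F,\sigma)$ if and only if $\Delta(t)=f\notin\Delta(F)$. So the first and main observation I would establish is that $f\in\Delta(F)$ if and only if $\phi(f)=0$. One direction is immediate from the definition of a complete reduction: since $\ker(\phi)=\Delta(F)$, we have $f\in\Delta(F)\iff\phi(f)=0$. Combining this equivalence with Theorem~\ref{Thm:SigmaKarr} gives directly that $t$ is a $\Sigma^*$-monomial over $(F,\sigma)$ if and only if $\phi(f)\neq 0$, which is the first assertion.

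For the second assertion, assume $\phi(f)\neq 0$ and set $r:=\phi(f)\in F^{\times}$. I would first note that $r\notin\Delta(F)$: indeed $\phi(r)=\phi(\phi(f))=\phi(f)=r\neq 0$ by idempotency of $\phi$, so $r\notin\ker(\phi)=\Delta(F)$. Now introduce the difference field extension $(F(s),\sigma)$ of $(F,\sigma)$ with $s$ transcendental over $F$ and $\sigma(s)=s+r$, i.e.\ $\Delta(s)=r$; such an extension exists and is unique as recalled before the definition of $\Sigma^*$-extensions. Since $\Delta(s)=r\notin\Delta(F)$, Theorem~\ref{Thm:SigmaKarr} applies and shows that $s$ is a $\Sigma^*$-monomial over $(F,\sigma)$ with $\Delta(s)=\phi(f)$, as desired.

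One should also record the connection to the refined telescoping picture for context, though it is not strictly needed for the statement: because $V=F$ carries the complete reduction $\phi$ with $f=\Delta(g)+\phi(f)$ for a suitable $g\in F$, the two $\Sigma^*$-monomials $t$ and $s$ differ by the summable element $\Delta(g)$, so in the bigger difference field they satisfy $\Delta(t-g)=\Delta(s)$; this is the structural reason why replacing $t$ by $s$ is harmless while lowering the ``size'' of the shift $\Delta(s)=r=\phi(f)$.

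The only genuinely non-routine point is the very first equivalence, and even there the real content is already packaged: $\ker(\phi)=\Delta(F)$ is part of the hypothesis that $\phi$ is a complete reduction of $\Delta(F)$ on $F$, and the non-trivial mathematical input $\bigl(f\notin\Delta(F)\iff t \text{ is a }\Sigma^*\text{-monomial}\bigr)$ is exactly Theorem~\ref{Thm:SigmaKarr}. Thus there is no real obstacle; the proof is a short chain of the definition of complete reduction, idempotency of $\phi$, and Karr's theorem applied twice (once to $t$, once to $s$).
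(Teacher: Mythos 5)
Your proposal is correct and follows essentially the same route as the paper: the paper's proof also just observes that a nonzero $\phi$-remainder cannot lie in $\Delta(F)=\ker(\phi)$ and then invokes Theorem~\ref{Thm:SigmaKarr} twice, once for $t$ and once for $s$. Your write-up merely spells out the details (the equivalence $f\in\Delta(F)\iff\phi(f)=0$ and the idempotency argument for $\phi(f)\notin\Delta(F)$) that the paper leaves implicit.
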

\begin{proof}
Note that $\phi(f)\notin\Delta(F)$ and thus $\phi(F)\cap\Delta(F)=\{0\}$. Thus both statements follow by Theorem~\ref{Thm:SigmaKarr}. 
\end{proof}

Loosely speaking, the sum on the left-hand side \eqref{Equ:RefinedTeleSummed} establishes a $\Sigma^*$-monomial $t$ over $(F,\sigma)$ with $\Delta(t)=f$ if and only if the sum on the right-hand side is represented by a $\Sigma^*$-monomial $s$ over $(F,\sigma)$ with $\Delta(s)=r$.
Since $r$ is usually simpler than the given $f\in F^{\times}$, the $\Sigma^*$-monomial $s$ turns out to be more suitable for the design of an appropriate tower of $\Sigma^*$-extensions; these aspects will be explored further in Section~\ref{SUBSECT:wgt} below.

\medskip

In the following, we let $(F,\, \sigma)$ be a difference field with constant field $C$ and $t$ be a $\Sigma^*$ monomial over $F$ with $\Delta(t)=a\in F$. In addition, we assume that \begin{enumerate}
\item $(F,\sigma)$ is computable and $\phi$-computable for an explicitly given complete reduction $\phi: F \rightarrow F$ for $\Delta(F)$;
\item $F$ has an effective $C$-basis $\Theta$;
\item one can solve the shift-problem in $F[t]$; see Problem~SE below.
\end{enumerate}
With these algorithmic assumptions we will show in Theorem~\ref{TH:decomp} below that $(F(t),\sigma)$ is $\phi'$-computable for an explicitly given complete reduction $\phi':F(t)\to F(t)$ for~$\Delta(F(t))$. 
More precisely, we will decompose $F(t)$ to the direct sum~\eqref{Equ:PolyFracSum} with the proper rational part $F(t)_{(r)}$ and the polynomial part $F[t]$ utilizing the following property.

\begin{lemma}\label{LM:closure}
$\Delta(F[t])$ and $\Delta(F(t)_{(r)})$ are $C$-subspaces of $F[t]$ and $F(t)_{(r)}$, respectively.
\end{lemma}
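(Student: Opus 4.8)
The plan is to show that if $p \in F[t]$ then $\Delta(p) = \sigma(p) - p \in F[t]$, and if $f \in F(t)_{(r)}$ is $t$-proper then $\Delta(f) = \sigma(f) - f$ is again $t$-proper; combined with the fact that $\Delta$ is $C$-linear (so the images are $C$-subspaces of $F(t)$), this gives exactly the claim. Both parts rest on understanding how $\sigma$ acts: since $t$ is a $\Sigma^*$-monomial with $\sigma(t) = t + a$ for some $a \in F$, and $\sigma|_F$ is an automorphism of $F$, the map $\sigma$ restricts to a ring homomorphism on $F[t]$ sending $t$ to $t+a$, hence $\deg_t(\sigma(p)) = \deg_t(p)$ for every $p \in F[t]$ (the leading term is $\sigma(\lc_t(p))\, t^{\deg_t p}$, and $\sigma(\lc_t(p)) \neq 0$ since $\sigma$ is injective).

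For the polynomial part, the argument is immediate: $\sigma(p) \in F[t]$ and $p \in F[t]$, so $\Delta(p) = \sigma(p) - p \in F[t]$. Hence $\Delta(F[t]) \subseteq F[t]$, and being the $C$-linear image of the $C$-subspace $F[t]$, it is a $C$-subspace.

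For the proper part, let $f = u/v$ be the reduced representation with $\deg_t(u) < \deg_t(v)$ and $v$ monic. Then $\sigma(f) = \sigma(u)/\sigma(v)$; I would note that $\gcd(\sigma(u), \sigma(v)) = 1$ because $\sigma$ is a field automorphism fixing nothing new (applying $\sigma^{-1}$ to a common factor would produce a common factor of $u, v$), and that $\deg_t \sigma(u) = \deg_t u < \deg_t v = \deg_t \sigma(v)$ by the degree-preservation observation above, so $\sigma(f)$ is $t$-proper. Now $\Delta(f) = \sigma(f) - f$ is a difference of two $t$-proper elements; since $F(t)_{(r)}$ is an $F$-vector space (stated in the excerpt, right after~\eqref{Equ:PolyFracSum}), it is in particular closed under subtraction, so $\Delta(f) \in F(t)_{(r)}$. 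Thus $\Delta(F(t)_{(r)}) \subseteq F(t)_{(r)}$, and again it is a $C$-subspace as the $C$-linear image of one.

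I do not expect any serious obstacle here — the statement is essentially bookkeeping on degrees and the closure of the two summands in~\eqref{Equ:PolyFracSum} under $\sigma$. The only point requiring a moment's care is checking that $\sigma$ preserves $t$-degree and does not destroy coprimality or monicity-up-to-scalar; both follow from $\sigma$ being a field automorphism with $\sigma(t) = t + a$, so that $\sigma$ acts as an invertible $F$-semilinear substitution on $F[t]$. Once that is in hand, both inclusions drop out and the $C$-subspace assertions are automatic from $C$-linearity of $\Delta$.
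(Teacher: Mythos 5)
Your proof is correct and follows essentially the same route as the paper's: the key observation in both is that $\sigma(t)=t+a$ with $a\in F$ forces $\sigma$ to preserve $t$-degree, whence both summands of~\eqref{Equ:PolyFracSum} are $\sigma$-stable and the $C$-subspace claims follow from $C$-linearity of $\Delta$. (Your check that $\sigma$ preserves coprimality is harmless but unnecessary — $t$-properness of $\sigma(u)/\sigma(v)$ already follows from the degree comparison without the representation being reduced.)
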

\begin{proof}
 Since $\sigma(t)=t+a$ with $a\in F$, $\sigma$ preserves the degree of every polynomial in~$t$, we have
$\Delta(F[t])\subset F[t]~\text{and}~\Delta(F(t)_{(r)}) \subset F(t)_{(r)}$. Since $\Delta(F[t])$ and $\Delta(F(t)_{(r)})$ are closed under addition and multiplication with elements of $C$, they form subspaces of $F[t]$ and $F(t)_{(r)}$, respectively.
\end{proof}

We will construct complete reductions for $\Delta(F(t)_{(r)})$ on $F(t)_{(r)}$ and for $\Delta(F[t])$ on $\Delta(F[t])$ in Sections~\ref{SUBSECT:PR} and~\ref{SUBSECT:polyre}, respectively. Combining them, a complete reduction for $\Delta(F(t))$ on $F(t)$ is given in Section~\ref{Sec:CRForRatFul}.

\subsection{A complete reduction for proper rational functions}\label{SUBSECT:PR}

In this subsection, we generalize the additive decomposition~\cite{Abra1975} of proper rational functions in $C(x)$ with $\sigma(x)=x+1$ to general $\Sigma^*$-extensions. Here we will streamline and adapt the additive decomposition given in~\cite{Schn2007} to obtain a complete reduction.

We start with the following observation; the result follows immediately by~\cite[Cor.~4]{Bron2000}; see also~\cite[Lemma~5.1]{Schn2007}.

\begin{lemma}\label{LM:proppart}
	Let $h \in F(t)_{(r)}$ be $\sigma$-simple and assume that $h\in \Delta(F(t))$. Then $h=0$.
\end{lemma}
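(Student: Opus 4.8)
The plan is to argue by contradiction using the degree-of-denominator structure of $t$-proper rational functions together with the fact that $\sigma$ preserves degrees in $t$. Suppose $h \in F(t)_{(r)}$ is $\sigma$-simple and nonzero, yet $h \in \Delta(F(t))$, say $h = \sigma(g) - g$ for some $g \in F(t)$. Write $g = \polypart(g) + \proppart(g)$. First I would observe that $\Delta(\polypart(g)) \in F[t]$ by Lemma~\ref{LM:closure} while $\Delta(\proppart(g)) \in F(t)_{(r)}$, and since $h \in F(t)_{(r)}$, comparing polynomial parts in the direct sum~\eqref{Equ:PolyFracSum} forces $\Delta(\polypart(g)) = 0$; thus without loss of generality we may assume $g = \proppart(g) \in F(t)_{(r)}$ is itself $t$-proper. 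Write $g = p/q$ in reduced representation with $q$ monic in $t$ and $\deg_t p < \deg_t q$.

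Next I would analyze the denominator of $\sigma(g) - g = \sigma(p)/\sigma(q) - p/q$. Since $t$ is a $\Sigma^*$-monomial, $\sigma$ is a degree-preserving $F$-automorphism of $F[t]$, so $\sigma(q)$ is (up to a unit in $F$) monic of the same degree as $q$; I may normalize so $\sigma(q)$ is monic. The combined denominator, before cancellation, is $\operatorname{lcm}(q, \sigma(q))$. The key point is that if $q$ has an irreducible factor $\pi$ that is \emph{not} $\sigma$-equivalent to any other irreducible factor of $q$ — which is exactly what $\sigma$-normality of $q$ guarantees, since $\gcd(q, \sigma^{\ell}(q)) = 1$ for all $\ell \neq 0$ means the irreducible factors of $q$ lie in distinct $\sigma$-equivalence classes and moreover $q$ and $\sigma(q)$ share no common factor — then in $\sigma(p)/\sigma(q) - p/q$ the poles coming from $q$ and from $\sigma(q)$ cannot cancel against each other. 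Concretely, $\gcd(q,\sigma(q)) = 1$, so the reduced denominator of $h$ is $q\,\sigma(q)$ divided only by whatever cancels with the numerator $\sigma(p)\,q - p\,\sigma(q)$; but $h$ being $\sigma$-simple means $\den(h)$ is $\sigma$-normal, and $q\,\sigma(q)$ is manifestly \emph{not} $\sigma$-normal (it is divisible by both $q$ and its shift $\sigma(q)$) unless $q \in F$. This is essentially the content of~\cite[Cor.~4]{Bron2000}, which I would invoke directly: the only way a telescoping solution $g$ exists with $\sigma$-simple right-hand side is if that right-hand side is zero.

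More carefully, and this is the step I expect to be the main obstacle to make fully rigorous without simply quoting Bronstein, one must rule out partial cancellations in the numerator that could shrink $\den(h)$ down to something $\sigma$-normal. The cleanest route is: take any irreducible factor $\pi$ of $q$ of maximal multiplicity $m \geq 1$ in $q$. By $\sigma$-normality, $\pi \nmid \sigma(q)$, so in $\sigma(p)/\sigma(q) - p/q$ the term $p/q$ contributes a pole of order exactly $m$ at $\pi$ (the order cannot drop because $\gcd(p,q)=1$), and $\sigma(p)/\sigma(q)$ contributes no pole at $\pi$; hence $\pi^{m} \mid\mid \den(h)$. But then $\sigma^{-1}(\pi)$, which is irreducible of the same degree, divides $\sigma(q)$ with the same multiplicity $m$ and does not divide $q$ (again by $\sigma$-normality, as $\gcd(q, \sigma(q)) = 1$), so $\sigma^{-1}(\pi)^{m} \mid\mid \den(h)$ as well. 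Thus $\den(h)$ is divisible by the two $\sigma$-equivalent irreducible polynomials $\pi$ and $\sigma^{-1}(\pi)$, contradicting $\sigma$-normality of $\den(h)$ — unless there is no such $\pi$, i.e.\ $q \in F^{\times}$, in which case $g \in F(t)_{(r)}$ with denominator in $F$ means $g$ is a polynomial in $t$ of negative degree, so $g = 0$ and hence $h = 0$. I would present the short version citing~\cite[Cor.~4]{Bron2000} and~\cite[Lemma~5.1]{Schn2007} as the excerpt already signals, keeping the degree-counting sketch above as the justification.
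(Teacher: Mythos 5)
Your overall strategy (produce two distinct $\sigma$-equivalent irreducible factors of $\den(h)$, contradicting $\sigma$-normality) is the right one, and your fallback of simply citing \cite[Cor.~4]{Bron2000} is exactly what the paper does -- its proof of this lemma is just that citation. The initial reduction to a $t$-proper certificate $g$ via the direct sum $F(t)=F[t]\oplus F(t)_{(r)}$ and Lemma~\ref{LM:closure} is also correct.

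However, the self-contained justification you sketch has a genuine gap: you repeatedly use ``$\sigma$-normality of $q$'' where $q=\den(g)$, e.g.\ to conclude $\gcd(q,\sigma(q))=1$ and $\pi\nmid\sigma(q)$ for a maximal-multiplicity factor $\pi$ of $q$. But nothing in the hypotheses makes $\den(g)$ $\sigma$-normal -- only $\den(h)$ is. In fact the certificate of a $\sigma$-simple telescopand is typically \emph{not} $\sigma$-normal: the $g=\sum_{i=1}^{\ell}\sigma^{-i}(v)/\sigma^{\ell-i}(p)$ constructed in Lemma~\ref{LM:reduce1} has denominator $p\,\sigma(p)\cdots\sigma^{\ell-1}(p)$, a product of consecutive shifts. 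For such $q$ one has $\gcd(q,\sigma(q))\neq 1$, and at an arbitrary irreducible factor $\pi$ of maximal multiplicity the poles of $p/q$ and of $\sigma(p)/\sigma(q)$ can cancel, so your pole-counting step fails as stated. The repair is to work within one $\sigma$-equivalence class of irreducible factors of $q$, say $\sigma^{k_1}(\pi_0),\dots,\sigma^{k_r}(\pi_0)$ with $k_1<\dots<k_r$, and look at the \emph{extremal} shifts: $\sigma^{k_r+1}(\pi_0)$ divides $\sigma(q)$ but not $q$, and $\sigma^{k_1}(\pi_0)$ divides $q$ but not $\sigma(q)$, so both survive as (distinct, $\sigma$-equivalent) factors of $\den(h)$ with their full multiplicities -- contradiction unless $q\in F$. (There is also a small slip of the same flavour: the factor of $\sigma(q)$ that is $\sigma$-equivalent to a factor $\pi$ of $q$ is $\sigma(\pi)$, not $\sigma^{-1}(\pi)$.)
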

%
%

The above lemma suggests to reduce proper rational functions to alternative representations where the denominator is $\sigma$-normal. Here we will use the following lemma; compare~\cite[Theorem 5.1(i)]{Schn2007}.

\begin{lemma}\label{LM:reduce1}
Let $v, p \in F[t]$ with $\deg_t(v)<\deg_t(p)$ and $\ell \in \bZ$. Then there exists a $g\in F(t)_{(r)}$ such that
$$\frac{v}{\sigma^{\ell}(p)}=\Delta(g)+\frac{\sigma^{-\ell}(v)}{p}.$$
\end{lemma}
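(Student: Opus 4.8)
The statement asserts that for $v,p \in F[t]$ with $\deg_t(v) < \deg_t(p)$ and any $\ell \in \bZ$, one can ``shift'' the exponent on $p$ in the denominator of $v/\sigma^\ell(p)$ at the cost of a summable correction term, landing at $\sigma^{-\ell}(v)/p$. The key observation is that $\sigma$ is a field automorphism and the shift operator $\Delta$ satisfies the telescoping identity $\Delta(\sigma^{j}(w)) = \sigma^{j+1}(w) - \sigma^{j}(w)$ for any $w \in F(t)$ and $j \in \bZ$. So the plan is to build $g$ as a telescoping sum of shifts of $w := v/\sigma^{\ell}(p)$, or more conveniently of an appropriately normalized element.

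First I would treat the case $\ell > 0$ (the case $\ell < 0$ is symmetric, and $\ell = 0$ is trivial with $g = 0$). Set $w_j := \sigma^{-j}(v)/\sigma^{\ell - j}(p)$ for $j = 0, 1, \dots, \ell$, so that $w_0 = v/\sigma^{\ell}(p)$ is the given left-hand side and $w_\ell = \sigma^{-\ell}(v)/p$ is the desired right-hand side. Since $\sigma$ preserves degrees, each $w_j$ is $t$-proper, hence lies in $F(t)_{(r)}$. Now observe that $w_j = \sigma^{-j}\!\bigl(v/\sigma^{\ell}(p)\bigr) = \sigma^{-j}(w_0)$, so consecutive differences telescope: $w_{j} - w_{j-1} = \sigma^{-j}(w_0) - \sigma^{-(j-1)}(w_0)$. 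Writing $g_j := -\sigma^{-j+1}(w_0) = -\sigma^{1-j}(w_0)$ one has $\Delta(g_j) = \sigma(g_j) - g_j = -\sigma^{2-j}(w_0) + \sigma^{1-j}(w_0) = w_{j-1} - w_j$; actually it is cleaner to just sum directly. Setting $g := -\sum_{j=1}^{\ell} \sigma^{1-j}(w_0) = -\bigl(w_0 + \sigma^{-1}(w_0) + \cdots + \sigma^{-(\ell-1)}(w_0)\bigr)$, a direct computation gives $\Delta(g) = \sigma(g) - g = -\bigl(\sigma(w_0) + w_0 + \cdots + \sigma^{-(\ell-2)}(w_0)\bigr) + \bigl(w_0 + \cdots + \sigma^{-(\ell-1)}(w_0)\bigr) = \sigma^{-(\ell-1)}(w_0) - \sigma(w_0)$. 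This is not quite the shape wanted, so I would instead index the telescoping sum so that the two surviving endpoints are exactly $w_0$ and $w_\ell$: take $g := \sum_{j=1}^{\ell} \sigma^{-j}(w_0) = \sigma^{-1}(w_0) + \cdots + \sigma^{-\ell}(w_0)$ — no; the correct bookkeeping is $g := w_1 + w_2 + \cdots$ arranged so $\Delta(g)$ collapses. Concretely, set $g := -\sum_{j=0}^{\ell-1} w_j$ when $\ell>0$; then since $\sigma(w_j) = \sigma(\sigma^{-j}(w_0)) = \sigma^{1-j}(w_0) = w_{j-1}$ for $j \geq 1$ and $\sigma(w_0) = \sigma(w_0)$, one gets $\sigma(g) = -\sum_{j=0}^{\ell-1}\sigma(w_j)$, and after reindexing, $\Delta(g) = \sigma(g) - g = w_0 - w_\ell$, i.e.\ $w_0 = \Delta(-g) + w_\ell$, which is precisely the claim with $g$ replaced by $-g \in F(t)_{(r)}$ (a finite sum of $t$-proper elements is $t$-proper).

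The symmetric case $\ell < 0$: write $\ell = -m$ with $m > 0$, set $w_j := \sigma^{j}(v)/\sigma^{m+j}(p)$ for the relevant range, or simply apply the already-proved case to the pair obtained by replacing $p$ with $\sigma^{-\ell}(p)$ and $v$ with $\sigma^{-\ell}(v)$ and then conjugating by $\sigma^{\ell}$; since $\Delta$ commutes with $\sigma$ up to the identity $\sigma \circ \Delta = \Delta \circ \sigma$, applying $\sigma^{\ell}$ to the identity $\sigma^{-\ell}(v)/p = \Delta(g') + v/\sigma^{\ell}(p)$ (with $-\ell > 0$) and using $\sigma^\ell \Delta = \Delta \sigma^\ell$ yields $v/\sigma^\ell(p) = \Delta(\sigma^\ell(g')) + \sigma^{-\ell}(v)/p$, closing the argument.

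\textbf{Main obstacle.} There is no deep obstacle here: the content is entirely the telescoping identity plus the fact that $\sigma$ preserves $t$-properness (because $\sigma(t) = t + a$ with $a \in F$ preserves degrees in $t$, as recorded in Lemma~\ref{LM:closure}). The only thing requiring genuine care is the \emph{indexing of the telescoping sum}: one must choose the summation bounds so that exactly the two endpoints $v/\sigma^{\ell}(p)$ and $\sigma^{-\ell}(v)/p$ survive, with the correct sign, and this requires handling $\ell > 0$ and $\ell < 0$ with mirror-image bookkeeping (or deriving one from the other by conjugation). I would present the $\ell > 0$ computation in full and dispatch $\ell < 0$ by the $\sigma^\ell$-conjugation trick to avoid repeating essentially the same manipulation.
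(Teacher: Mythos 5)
Your overall strategy --- realize both sides as shifts $w_j=\sigma^{-j}(w_0)$ of $w_0=v/\sigma^{\ell}(p)$ and take $g$ to be a telescoping sum of these shifts --- is exactly the paper's proof, which simply exhibits $g=\sum_{i=1}^{\ell}\sigma^{-i}(v)/\sigma^{\ell-i}(p)$ for $\ell\ge 0$ and the analogous sum for $\ell<0$. However, the concrete formula you finally commit to is off by one and does not satisfy the claimed identity. With $g:=-\sum_{j=0}^{\ell-1}w_j$ you have $\sigma(w_j)=w_{j-1}$, so $\sigma(g)=-\sum_{k=-1}^{\ell-2}w_k$ and hence $\Delta(g)=\sigma(g)-g=w_{\ell-1}-w_{-1}=w_{\ell-1}-\sigma(w_0)$, which is precisely the ``wrong shape'' you had already rejected earlier in the paragraph --- not $w_0-w_\ell$. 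The correct index range is $g:=\sum_{j=1}^{\ell}w_j$: then $\sigma(g)=\sum_{k=0}^{\ell-1}w_k$ and $\Delta(g)=w_0-w_\ell$, i.e.\ $w_0=\Delta(g)+w_\ell$ as required (this $g$ is exactly the paper's sum). The $t$-properness of $g$ follows, as you say, because $\sigma$ preserves degrees in $t$.

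The reduction of the case $\ell<0$ is also garbled as written. Applying $\sigma^{\ell}$ to the identity $\sigma^{-\ell}(v)/p=\Delta(g')+v/\sigma^{\ell}(p)$ does \emph{not} return you to the desired statement: the left-hand side becomes $v/\sigma^{\ell}(p)$, but the last term becomes $\sigma^{\ell}(v)/\sigma^{2\ell}(p)$, not $\sigma^{-\ell}(v)/p$. No conjugation is needed at all: apply the already-proved case with exponent $-\ell>0$ to the polynomials $V=\sigma^{-\ell}(v)$ and $P=\sigma^{\ell}(p)$, obtaining $\sigma^{-\ell}(v)/p=\Delta(g')+v/\sigma^{\ell}(p)$, and then simply move terms across: $v/\sigma^{\ell}(p)=\Delta(-g')+\sigma^{-\ell}(v)/p$. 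With these two corrections your argument coincides with the paper's.
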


\begin{proof}
If $\ell\geq0$, we take $g=\sum_{i=1}^{\ell}\frac {\sigma^{-i}(v)}{\sigma^{\ell-i}(p)}$. Otherwise, if $l<0$, we take  $g$  to be $\sum_{i=0}^{-\ell-1} -\frac{\sigma^i(v)}{\sigma^{\ell+i}(p)}$. The desired result follows by a simple check. 
 \end{proof}

Note that two monic irreducible polynomials $p,q\in F[t]$ are $\sigma$-equivalent (i.e., $p\stackrel{\sigma}{\sim} q$) if there is a unique $k\in\bZ$ such that $\sigma^k(p)=q$. To turn our constructions below to algorithmic versions, we assume that this $k$, in case it exists, can be computed.

\medskip

\noindent{\tt Problem SE~(Shift Equivalence) in $F[t]$}. Given a $\Sigma^*$-monomial $t$ over $(F,\sigma)$ and irreducible monic polynomials $p,q\in F[t]$. Decide if $p$ and $q$ are shift-equivalent. If yes, return the unique $k\in\bZ$ such that $\sigma^k(p)=q$.

\medskip

Using this $\sigma$-equivalence, one can now refine a complete factorization 
$$f=f_1^{n_1}f_2^{n_2}\dots f_r^{n_r}$$
with monic irreducible pairwise coprime polynomials $f_i\in F[t]$ and $n_i\in\bN^{\times}$. Namely, one may pick among $f_1,\dots,f_r$ a subset $T=\{p_1,\dots,p_s\}$ which are pairwise $\sigma$-coprime and where all other irreducible factors in $f$ are $\sigma$-equivalent to $T$. This will lead to Karr's $\sigma$-factorization~\cite[Definition 23]{Karr1981}:
\begin{equation}\label{Equ:SigmaFacFull}
f=\prod_{j=u_1}^{v_1}\sigma^{j}(p_1)^{m_{1,j}}\dots\prod_{j=u_s}^{v_s}\sigma^{j}(p_s)^{m_{s,j}}
\end{equation}
with $m_{i,j}\in\bN$. Note that such a $\sigma$-factorization can be computed explicitly if $F$ is computable and one can solve Problem~SE in $F[t]$. We further remark that this construction is closely related to the greatest factorial factorization introduced in~\cite{Paul1995}.

For convenience, we will introduce the following notation. 
Let $p$ be a nonzero polynomial in $F[t]$ and $\alpha=\sum_{i=u}^{v} m_i \sigma^i \in \bN[\sigma,\sigma^{-1}]$, we write 
$$p^\alpha :=\prod_{i=u}^{v} \sigma^{i}(p)^{m_i}$$
where $\alpha$ is called the {\em $\sigma$-exponent} of $p$. Then we can write~\eqref{Equ:SigmaFacFull} in the compact form
\begin{equation}\label{EQ:SigmaFactorization}
	f=p_1^{\alpha_1} p_2^{\alpha_2}\cdots p_s^{\alpha_s},
\end{equation}
where $\alpha_i =\sum_{j=u_i}^{v_i} m_{i,j}\sigma^j \in \bN[\sigma,\,\sigma^{-1}]$.

We remark that the $\sigma$-factorization is unique up to $\sigma$-equivalent factors in $\{p_1,\dots,p_s\}$. E.g., taking $\{q_1,\dots,q_r\}$ with $q_i=\sigma^{-l_i}(p_i)$ with $l_i\in\bZ$, one gets
$$f=q_1^{\beta_1} q_2^{\beta_2}\cdots q_s^{\beta_s}$$
with $\beta_i=\sigma^{l_i}\cdot\alpha_i:=\sum_{j=u_i}^{v_i} m_{i,j}\sigma^{j+l_i} \in \bN[\sigma,\,\sigma^{-1}]$.

In our construction below it will be essential that also the elements for $T$ are fixed. To accomplish this task, we 
let $M_t$ to be the set consisting of all monic irreducible polynomials with positive degrees in $F[t]$ and fix a set $S \subset M_t$ of representatives of the equivalence classes induced by $\stackrel{\sigma}{\sim}$.
It follows that elements in $S$ are pairwise $\sigma$-coprime and for any $q\in M_t$ there is a unique $k\in\bZ$ and $p\in S$ such that $\sigma^{k}(p)=q$.
In particular, any $\sigma$-factorization of $f\in F[t]$ can be given in the form~\eqref{EQ:SigmaFactorization} where $p_1,\dots,p_r\in S$.

With this preparation we define the set of quotients whose denominators are built by factors of $S$:
\begin{equation}\label{EQ:ProperCompl}
	U_S:=\left\{\frac{a}{p_1^{m_1}\cdots p_{\ell}^{m_{\ell}}}\in F(t)_{(r)}\mid p_i \in S~\text{and}~m_i\in \bN  \right\}.
\end{equation}
Note that each element in $U_S$ is $\sigma$-simple, i.e., the denominators are $\sigma$-normal. Furthermore, we observe that $U_S$ is a $C$-linear space because any finite product of  $\sigma$-normal polynomials that are mutually $\sigma$-coprime is still $\sigma$-normal. We can now state the following result; compare~\cite[Thm.~5.1]{Schn2007}.

\begin{proposition}\label{Prop:decomposeProperpart}
We have
$$F(t)_{(r)}=\Delta(F(t)_{(r)})\oplus U_S.$$
In particular, for any $f\in F(t)_{(r)}$, there exist $g\in F(t)_{(r)}$ and $h \in U_S$ with $\deg_t(\den(h)) \le \deg_t(\den(f))$ such that  
\begin{equation}\label{EQ:Properreduction}
	f=\Delta(g)+h,
\end{equation}
and there is a complete reduction $\phi_S$ for $\Delta(F(t)_{(r)})$ on $F(t)_{(r)}$ with $\phi_S(f)=h$.\\
If $(F,\sigma)$ is computable and one can solve Problem~SE in $F[t]$, such $g$ and $h$ can be computed and $F(t)_{(r)}$ is $\phi_S$-computable. 
\end{proposition}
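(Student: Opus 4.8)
The plan is to prove the direct sum decomposition in two halves — first that $\Delta(F(t)_{(r)}) + U_S = F(t)_{(r)}$ constructively (this gives the existence of $g$ and $h$, hence the reduction map), and then that $\Delta(F(t)_{(r)}) \cap U_S = \{0\}$ (this gives that the map is well-defined as a complete reduction, i.e.\ that $\ker \phi_S = \Delta(F(t)_{(r)})$). For the first half I would take $f\in F(t)_{(r)}$ with reduced representation $f=v/d$, compute a $\sigma$-factorization $d=p_1^{\alpha_1}\cdots p_s^{\alpha_s}$ with $p_i\in S$ (possible by the discussion around~\eqref{EQ:SigmaFactorization} since $(F,\sigma)$ is computable and Problem~SE is solvable in $F[t]$), and perform an irreducible partial fraction decomposition of $f$ into terms of the shape $w/\sigma^{\ell}(p)^{j}$ with $p\in S$, $\deg_t(w)<\deg_t(p^{j})$. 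To each such term apply Lemma~\ref{LM:reduce1} (with $p$ replaced by $p^j$, which is legitimate since $\deg_t(w) < \deg_t(p^j)$) to produce some $g_\ell\in F(t)_{(r)}$ with $w/\sigma^{\ell}(p)^{j} = \Delta(g_\ell) + \sigma^{-\ell}(w)/p^{j}$; summing over all terms collapses everything onto denominators that are products of powers of elements of $S$, i.e.\ onto $U_S$, while the accumulated $g=\sum g_\ell$ lies in $F(t)_{(r)}$. The degree bound $\deg_t(\den(h))\le\deg_t(\den(f))$ holds because each step replaces $\sigma^{\ell}(p)$ by $p$, not changing the degree of the relevant denominator factor, and no new prime factors are introduced.

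For the second half — the intersection is trivial — I would argue that any $h\in U_S$ is $\sigma$-simple (its denominator, being a product of pairwise $\sigma$-coprime $\sigma$-normal polynomials, is $\sigma$-normal, and it is $t$-proper by construction), so if in addition $h\in\Delta(F(t))$ then $h=0$ by Lemma~\ref{LM:proppart}; since $\Delta(F(t)_{(r)})\subseteq\Delta(F(t))$, this forces $\Delta(F(t)_{(r)})\cap U_S=\{0\}$. Combined with the first half this yields $F(t)_{(r)}=\Delta(F(t)_{(r)})\oplus U_S$. The decomposition being a direct sum means that for each $f$ the component $h\in U_S$ in~\eqref{EQ:Properreduction} is uniquely determined, so setting $\phi_S(f):=h$ defines a map; it is $C$-linear because the decomposition~\eqref{EQ:Properreduction} can be taken to respect $C$-linear combinations (uniqueness forces linearity), satisfies $f-\phi_S(f)=\Delta(g)\in\Delta(F(t)_{(r)})$, and has $\ker(\phi_S)=\Delta(F(t)_{(r)})$ — indeed $\phi_S(f)=0$ iff $f=\Delta(g)\in\Delta(F(t)_{(r)})$, and conversely $\Delta(g)=\Delta(g)+0$ is its own decomposition so $\phi_S$ vanishes on it. Hence $\phi_S$ is a complete reduction for $\Delta(F(t)_{(r)})$ on $F(t)_{(r)}$ in the sense of Definition~\ref{Def:CompleteReductionVS}.

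Finally, for the algorithmic claim, the entire construction in the first half is effective under the stated hypotheses: computing the $\sigma$-factorization needs only field operations in $F$ plus solving Problem~SE in $F[t]$; the irreducible partial fraction decomposition needs factorization over $F(t)$ and linear algebra, both available since $(F,\sigma)$ is computable; and Lemma~\ref{LM:reduce1} is a finite explicit formula involving only $\sigma^{\pm i}$ and field arithmetic. Thus one computes $g\in F(t)_{(r)}$ and $h=\phi_S(f)\in U_S$ with $f=\Delta(g)+h$, i.e.\ $F(t)_{(r)}$ is $\phi_S$-computable. The main obstacle I anticipate is bookkeeping in the first half: one must be careful that after applying Lemma~\ref{LM:reduce1} the resulting terms over a common power $p^j$ (with $p\in S$) combine so that the numerator still has degree below $\deg_t(p^j)$ — this needs the observation that $\sigma^{-\ell}$ preserves degrees and that the partial-fraction structure is stable — and one must verify that distinct primes in $S$ stay $\sigma$-coprime through the process so that the final denominator is genuinely $\sigma$-normal and lies in the form~\eqref{EQ:ProperCompl}.
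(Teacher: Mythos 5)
Your proposal is correct and follows essentially the same route as the paper: a $\sigma$-factorization of the denominator, partial fractions into terms $v/\sigma^{\ell}(p^m)$ with $p\in S$, Lemma~\ref{LM:reduce1} to shift each term onto the denominator $p^m$, and Lemma~\ref{LM:proppart} to see that $\Delta(F(t)_{(r)})\cap U_S=\{0\}$. The bookkeeping concern you raise at the end is handled exactly as you suspect: $\sigma$ preserves $\deg_t$, so $\sigma^{-\ell}(v)/p^m$ remains proper and the collected remainder stays in $U_S$.
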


\begin{proof}
Let \eqref{EQ:SigmaFactorization} be a $\sigma$-factorization of the denominator of $f\in F(t)_{(r)}$ with $p_1,\dots,p_s\in S$; as remarked above, it can be computed if one can solve Problem~SE in $F[t]$.
Since $t$ is a $\Sigma^*$-monomial over $F$, all irreducible polynomials in $F[t]$ are $\sigma$-normal, which implies that $\sigma^{\ell_{i,j}}(p_i^{m_{i,j}})$ are pairwise coprime for all $i, j$. So by the extended Euclidean algorithm,
$$f=\sum_{i=1}^{s}\sum_{j=u_i}^{v_i} f_{i,j},$$
 where $f_{i,j}=v_{i,j}/\sigma^{\ell_{i,j}}(p_i^{m_{i,j}})$
for some $v_{i,j} \in F[t]$ with $\deg_t(v_{i,j})<m_{i,j}\deg_t(p_i)$.
Applying Lemma \ref{LM:reduce1} to each $f_{i,j}$, there is a $g_{i,j} \in F(t)_{(r)}$ such that 
$$f_{i,j}=\Delta\left(g_{i,j}\right)+\frac{\sigma^{-\ell_{i,j}}(v_{i,j})}{{p_i}^{m_{i,j}}}.$$
Summing up all the equations, we obtain
$f=\Delta(g)+h$
for some $g\in F(t)_{(r)}$ and $h\in U_S$ with $\deg_t(\den(h)) \le \deg_t(\den(f))$. 
Furthermore, it implies that $F(t)_{(r)}=\Delta(F(t)_{(r)})+U_S$. It follows from Lemma \ref{LM:proppart} that $\Delta(F((t)_{(r)})\cap U_S =\{0\}$ which establishes the direct sum $F(t)_{(r)}=\Delta(F(t)_{(r)})\oplus U_S.$
Summarizing, we get a complete reduction $\phi_S:F(t)_{(r)}\to F(t)_{(r)}$ for $\Delta(F(t)_{(r)})$ which maps $f$ to $h$. Furthermore, if $(F,\sigma)$ is computable and one can solve Problem~SE in $F[t]$, such $g$ and $h$ can be computed and thus $:F(t)_{(r)}$ is $\phi_S$-computable.
\end{proof}

We remark that for $(C(x),\sigma)$ with $\sigma(x)=x+1$ we have that $C_{C(x)}=C$, i.e., $x$ is a $\Sigma^*$-monomial over $C$. 
Since $\Delta(C[x])=C[x]$, Proposition~\ref{Prop:decomposeProperpart} leads to the following well known special case~\cite{Abra1975}.

\begin{corollary}\label{COR:ADRational}
	For the rational difference field $(C(x),\sigma)$ with $\sigma(x)=x+1$ we have 
	$$C(x)=\Delta(C(x)) \oplus U_S.$$
\end{corollary}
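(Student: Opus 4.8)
The plan is to derive Corollary~\ref{COR:ADRational} as an immediate specialization of Proposition~\ref{Prop:decomposeProperpart}, taking $F=C$ and $t=x$ with $\sigma(x)=x+1$. First I would verify the hypotheses: since $C$ is the constant field, we trivially have $\Delta(C)=\{0\}\neq\{\Delta(x)\}=\{1\}$ (more precisely $1\notin\Delta(C)=\{0\}$), so by Theorem~\ref{Thm:SigmaKarr} $x$ is indeed a $\Sigma^*$-monomial over $(C,\sigma)$, i.e.\ $C_{C(x)}=C$. Hence Proposition~\ref{Prop:decomposeProperpart} applies with $F=C$, yielding the direct sum $C(x)_{(r)}=\Delta(C(x)_{(r)})\oplus U_S$, where $U_S$ is the space of $\sigma$-simple proper fractions built from the fixed set $S\subset M_x$ of representatives of $\stackrel{\sigma}{\sim}$-classes of monic irreducible polynomials in $C[x]$.

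Next I would handle the polynomial part, which is the one genuinely different feature of this base case. Since $\sigma(x)=x+1$, for any $p\in C[x]$ of degree $d\geq 1$ with leading coefficient $c$ we have $\sigma(p)-p\in C[x]$ of degree $d-1$ with leading coefficient $cd$; iterating, $\Delta$ restricted to $C[x]$ is surjective onto $C[x]$. (Concretely, one checks $\Delta$ sends the basis $1,x,\binom{x}{1},\binom{x}{2},\dots$ — or just $x,x^2,\dots$ — onto a spanning set of $C[x]$, using that the characteristic is zero so the integers $d$ are invertible.) Therefore $\Delta(C[x])=C[x]$, so $C[x]$ contributes nothing to a complement: the decomposition~\eqref{Equ:PolyFracSum}, namely $C(x)=C[x]\oplus C(x)_{(r)}$, combined with $\Delta(C[x])=C[x]$ and the proper-part direct sum, gives
$$C(x)=C[x]\oplus C(x)_{(r)}=\Delta(C[x])\oplus\Delta(C(x)_{(r)})\oplus U_S=\Delta(C(x))\oplus U_S,$$
where the last equality uses Lemma~\ref{LM:closure} (so that $\Delta(C[x])\subset C[x]$ and $\Delta(C(x)_{(r)})\subset C(x)_{(r)}$ sit in complementary summands, whence $\Delta(C(x))=\Delta(C[x])\oplus\Delta(C(x)_{(r)})$). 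This recovers Abramov's classical additive decomposition of rational functions.

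There is no real obstacle here; the only point requiring a word of care is the identity $\Delta(C[x])=C[x]$, which is special to the shift $x\mapsto x+1$ (for a general $\Sigma^*$-monomial $t$ with $\sigma(t)=t+a$ the polynomial part $\Delta(F[t])$ need not be all of $F[t]$, which is why Proposition~\ref{Prop:decomposeProperpart} only treats the proper part and the general polynomial reduction is deferred to Section~\ref{SUBSECT:polyre}). I would also note in passing that the complete reduction furnishing this decomposition is $\phi_S$ extended by zero on $C[x]$, matching the general $\phi'$ promised before Lemma~\ref{LM:closure}; but since the corollary only asserts the vector-space direct sum, it suffices to assemble the three displayed equalities above.
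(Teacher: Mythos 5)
Your proof is correct and follows essentially the same route as the paper, which disposes of the corollary in the remark immediately preceding it: note that $x$ is a $\Sigma^*$-monomial over $C$, observe that $\Delta(C[x])=C[x]$, and combine this with Proposition~\ref{Prop:decomposeProperpart} via the split $C(x)=C[x]\oplus C(x)_{(r)}$. You merely supply the details (the surjectivity of $\Delta$ on $C[x]$ in characteristic zero and the compatibility of $\Delta$ with the polynomial/proper decomposition via Lemma~\ref{LM:closure}) that the paper leaves implicit.
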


\begin{remark}\label{Remark:ExpandSetS}

In general, one may fix the infinite set $S$ for $F[t]$ a priori. More practically oriented, one may start with $S=\emptyset$ and adjoin stepwise new elements. More precisely, if an irreducible monic factor $q\in M_t$ arises in our constructions below, one checks if there is a $p\in S$ and $k\in\bZ$ with $\sigma^k(p)=q$ by solving Problem~SE. If this is not the case, one adjoins $p$ to $S$ (or any other factor $\sigma^l(q)$ with $\ell\in\bZ$) and proceeds. 
\end{remark}

The proof of Proposition~\ref{Prop:decomposeProperpart} and Remark~\ref{Remark:ExpandSetS} lead to the following algorithm.

\medskip \noindent
{\tt Algorithm ReductionForProperRationalFunctions}

\smallskip \noindent
{\tt Input:} a $\Sigma^*$-monomial $t$ over $(F,\sigma)$ in which one can solve Problem~SE; $f\in F(t)_{(r)}$. 

\smallskip\noindent From outside accessible: \textit{a finite set} $S\subseteq M_t$ whose elements are pairwise $\sigma$-coprime.

\smallskip \noindent
{\tt Output:} $(g,h) \in F(t)_{(r)} \times U_{S}$ such that \eqref{EQ:Properreduction} holds.

\smallskip \noindent
\begin{enumerate}
	\item[(1)] Compute a $\sigma$-factorization $p_1^{\alpha_1} \cdots p_s^{\alpha_s}$
	of $\den(f)$.
	\item[(2)]  {\tt for} $i$ {\tt from} $1$ {\tt to} $s$ {\tt do}
	
	\begin{itemize}
		\item[] Determine whether there exist $p \in S$ and $\ell\in\bZ$ such that $p_i=\sigma^\ell(p)$
		\item[] If yes, update $p_i$ to be $p$, and $\alpha_i$ to be $\sigma^{\ell}\cdot \alpha_i$, otherwise, append $p_i$ to $S$
	\end{itemize}
	{\tt end do}
	
	\item[(3)] Compute the partial fraction decomposition (using, e.g., the extended Euclidean algorithm)
	$$f=\sum_{i=1}^{s}\sum_{j=u_i}^{v_i} f_{i,j}$$
	where $f_{i,j}=v_{i,j}/\sigma^{\ell_{i,j}}(p_i^{m_{i,j}})$ with $\deg_t(v_{i,j})<m_{i,j}\deg_t(p_i)$
	\item[(4)] For each $f_{i,j}$, apply Lemma \ref{LM:reduce1} to find $g_{i,j} \in F(t)$ and $\tilde{v}_{i,j}\in F[t]$ such that
	$$f_{i,j}=\Delta(g_{i,j})+\frac{\tilde{v}_{i,j}}{p_i^{m_{i,j}}}$$
	
	\item[(5)] $g \leftarrow \sum_{i=1}^{s}\sum_{j=u_i}^{v_i}g_{i,j}$,\,   $h \leftarrow \sum_{i=1}^{s}\sum_{j=u_i}^{v_i} \tilde{v}_{i,j}/p_i^{m_{i,j}}$
	\item[(6)] {\tt return}  $(g,\, h)$ 
\end{enumerate}

\begin{example}\label{EX:pr}
	Let $F=\bQ(x)$ with the shift operator $\sigma(x)=x+1$ and $t$ be a $\Sigma^*$-monomial over $F$ with $\sigma(t)=t+1/(x+1)$ over $(F,\sigma)$; note that $t$ models the harmonic numbers $H_k=\sum_{i=1}^k\frac1{i}$ with $H_{k+1}=H_k+\frac1{k+1}$. Let
	$$f=-\frac{1}{(x+1)t^2+t} \in F(t)_{(r)}.$$
	Executing {\tt ReductionForProperRationalFunctions}$(F(t),f)$ with $S=\emptyset$ we proceed as follows. We get $\num(f)=-1/(x+1)$ and $\den(f)=t^2+t/(x+1)$. Further we compute a $\sigma$-factorization, say $t\sigma(t)$, of $\den(f)$. Thus we set $S=\{t\}$. Next, we find the partial fraction decomposition of $f$ as $1/t-1/\sigma(t)$. Applying Lemma \ref{LM:reduce1} to $1/\sigma(t)$ and $1/t$, we obtain $f=\Delta(1/t)$. Hence $f$ is summable in $F(t)_{(r)}$. In particular, we return $(g,h)=(1/t,0)$. 
\end{example}

\subsection{A complete reduction for polynomials}\label{SUBSECT:polyre}

In Section~\ref{SUBSECT:PR} we have established a complete reduction for the subspace $\Delta(F(t)_{(r)})$ summarized in Proposition~\ref{Prop:decomposeProperpart}.
In this subsection we will focus on constructing algorithmically  a $C$-linear subspace $V$ such that $F[t]=\Delta(F[t])\oplus V$.
This will yield the desired result
$$F(t)=\im(\Delta)\oplus (U_S+V).$$
Utilizing new techniques from \cite{DGLL2025}(see, e.g., Section~3 therein) for differential fields and extracting ideas from~\cite{Schn2007} (see, e.g., Lemma~4.3 therein) for difference fields we will proceed as follows:
\begin{enumerate}
\item Construct an auxiliary subspace $A$ such that $F[t] =\Delta(F[t])+ A$.
\item Compute a basis of $ \Delta(F[t]) \cap A$.
\item Fix a complementary subspace of $\Delta(F[t]) \cap A$ contained in $A$.
\end{enumerate}

For step~1 we define the $C$-subspace 
$$A=\bigoplus_{i\in\bN} \im(\phi) \cdot t^i$$
of $F[t]$ called the \emph{auxiliary subspace} for $\Delta(F[t])$ in $F[t]$.


The lemma given below reduces a polynomial in $F[t]$ to $A$ modulo $\Delta(F[t])$. In other words, each coefficient of a polynomial in $F[t]$ is reduced to a $\phi$-remainder.

\begin{proposition}\label{PROP:auxiliarySpace}
We have
$$F[t]=\Delta(F[t])+A.$$
In particular, for $p \in F[t]$ with $\deg_t(p)=d$, there exist $q \in F[t]$ with $\deg_t(q)\leq d$ and $r \in A$ with $\deg_t(r) \leq d$ such that
\begin{equation}\label{EQ:AuxPair}
 p=\Delta(q)+r.
 \end{equation}
If $(F,\sigma)$ is computable and $\phi$-computable, such $q$ and $r$ can be computed.
\end{proposition}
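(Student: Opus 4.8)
The plan is to prove the two statements — the decomposition $F[t]=\Delta(F[t])+A$ and the existence of the explicit $\Sigma$-pair $(q,r)$ with degree bounds — simultaneously, by induction on the degree $d=\deg_t(p)$. The base case $d\le 0$ is immediate: a polynomial of degree $0$ lies in $F=\im(\phi)\oplus\Delta(F)$, so writing $p=\Delta(g)+\phi(p)$ with $g\in F$ gives $q=g\in F[t]$, $r=\phi(p)\in\im(\phi)\cdot t^0\subset A$, and both have degree $\le 0$ (and if $(F,\sigma)$ is $\phi$-computable this $g$ can be computed). The case $p=0$ is trivial.

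For the inductive step, suppose $d\ge 1$ and write $p=c\,t^d+(\text{lower order})$ with $c=\lc_t(p)\in F^\times$. The key idea, extracted from \cite[Lemma~4.3]{Schn2007} and adapted as in \cite{DGLL2025}, is to peel off the leading coefficient using the complete reduction $\phi$ on $F$: choose $b\in F$ with $c=\Delta(b)+\phi(c)$. Now consider $\Delta(b\,t^d)$. Using $\sigma(t)=t+a$ and the binomial expansion, $\Delta(b\,t^d)=\sigma(b)(t+a)^d-b\,t^d=(\sigma(b)-b)\,t^d+(\text{terms of degree}<d)=\Delta(b)\,t^d+(\text{lower order})$. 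Hence
$$
p-\Delta(b\,t^d)=c\,t^d-\Delta(b)\,t^d+(\text{lower order})=\phi(c)\,t^d+\tilde p,
$$
where $\tilde p\in F[t]$ has $\deg_t(\tilde p)<d$ (the lower-order correction from $\Delta(bt^d)$ and the original lower-order part of $p$ are absorbed into $\tilde p$). Apply the induction hypothesis to $\tilde p$: there are $\tilde q\in F[t]$ with $\deg_t(\tilde q)\le d-1$ and $\tilde r\in A$ with $\deg_t(\tilde r)\le d-1$ such that $\tilde p=\Delta(\tilde q)+\tilde r$. Setting $q=b\,t^d+\tilde q$ and $r=\phi(c)\,t^d+\tilde r$ yields $p=\Delta(q)+r$ with $\deg_t(q)\le d$, and $r\in A$ because $\phi(c)\in\im(\phi)$ so $\phi(c)\,t^d\in\im(\phi)\cdot t^d\subseteq A$, while $\tilde r\in A$; also $\deg_t(r)\le d$. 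Summing the contributions across the recursion shows $F[t]=\Delta(F[t])+A$. The computability claim follows because each step — computing $c$, computing the $\phi$-remainder and $\Sigma$-pair $(b,\phi(c))$ of $c$ in $F$ (possible since $F$ is $\phi$-computable), expanding $\Delta(bt^d)$ (possible since $(F,\sigma)$ and hence $\sigma$ is computable and $a\in F$ is known), and the recursive call on $\tilde p$ — is effective.

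The step I expect to be the main obstacle, or at least the one requiring care, is the bookkeeping of the degree drop: one must verify that $\Delta(b\,t^d)$ genuinely has $t$-degree exactly $d$ with leading coefficient $\Delta(b)$ and no hidden contribution of degree $\ge d$, which relies precisely on $\sigma$ being degree-preserving on $F[t]$ (Lemma~\ref{LM:closure}) — this is where $\Delta(t)=a\in F$ rather than something of higher degree is used. Once that is in hand, the recursion is clean and the direct-sum refinement ($\Delta(F[t])\oplus V$) is deferred to the subsequent steps 2 and 3 announced before the proposition; here we only claim the (non-direct) sum $F[t]=\Delta(F[t])+A$, so there is no need to argue minimality or intersection-triviality at this stage. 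It is worth noting that $A$ is generally \emph{not} a complementary space — $\Delta(F[t])\cap A$ can be nonzero — which is exactly why steps 2 and 3 are needed afterwards; the present proposition is only the first of the three announced reductions.
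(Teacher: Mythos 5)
Your proof is correct and follows essentially the same route as the paper's: induction on $\deg_t(p)$, peeling off the leading coefficient $c=\lc_t(p)$ via a $\Sigma$-pair $(b,\phi(c))$ in $F$ and using that $\Delta(b\,t^d)=\Delta(b)\,t^d+(\text{lower order})$ because $\sigma(t)=t+a$ preserves degree. Your $b$ and $\phi(c)$ are exactly the paper's $q_d$ and $r_d$, and the degree bookkeeping and computability argument match as well.
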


\begin{proof}
 We proceed by induction on $d$. If $d=0$, there are $q\in F$ and $r \in \im(\phi)$ such that $p=\Delta(q)+r$ by assumption. 
 
Assume that the proposition holds for $d-1$ with $d>0$. Set $p_d$ to be the leading coefficient of $p$. By the assumption again, there is a $\Sigma$-pair $(q_d,\,r_d)$ of $p_d$ with $r_d=\phi(p_d)$. It follows from~\eqref{Equ:Leibnitzrule} that
$$\Delta(q_dt^d)=\Delta(q_d)t^d+\Delta(q_d) \Delta(t^d)+q_d \Delta(t^d).$$
Note that $\Delta(t^d) \in F[t]$ is of degree less than $d$ and $\Delta(t)$ is a monic linear polynomial in $t$. So $\Delta(q_dt^d)=\Delta(q_d)t^d+ \text{lower terms}$. Thus 
\begin{align*}
p&=p_d\,t^d+\text{lower terms}\\
&=(\Delta(q_d)+r_d)t^d+\text{lower terms}\\
&=\Delta(q_d)t^d+r_dt^d+\text{lower terms}\\
&=\Delta(q_dt^d)+r_dt^d+\text{lower terms},
\end{align*}
i.e., $p=\Delta(q_dt^d)+r_dt^d+\tilde{p}$ for some $\tilde{p} \in F[t]$ with $\deg_t(\tilde{p})<d$. By the induction hypothesis, there are $\tilde{q} \in F[t]$ with $\deg_t(\tilde{q})<d$ and $\tilde{r} \in A$ with $\deg_t(\tilde{r}) <d$ such that $\tilde{p}=\Delta(\tilde{q})+\tilde{r}$. Consequently, the conclusion follows by setting $q=q_d\,t^d+\tilde{q}$ and $r=r_d\,t^d+\tilde{r}$, and observing that
$$\Delta(q)+r=\Delta(q_d\,t^d+\tilde{q})+r_d\,t^d+\tilde{r}=\Delta(q_d\,t^d)+r_d\,t^d+(\Delta(\tilde{q})+\tilde{r})=\Delta(q_d\,t^d)+r_d\,t^d+\tilde{p}=p.$$
In particular, if $(F,\sigma)$ is computable and $\phi$-computable, the above construction can be carried out explicitly.
\end{proof}

The underlying algorithm of Proposition \ref{PROP:auxiliarySpace} can be given as follows.

\medskip \noindent
{\tt Algorithm AuxiliaryReduction}

\smallskip \noindent
{\tt Input:} a $\Sigma^*$-monomial $t$ over $(F,\sigma)$ which is computable and $\phi$-computable, i.e., there is an algorithm \texttt{CompleteReduction} that computes $\Sigma$-pairs in $(F,\sigma)$ w.r.t.\ $\phi$; $p\in F[t]$

\smallskip \noindent
{\tt Output:} $(q,r)\in F[t] \times A$ such that \eqref{EQ:AuxPair} holds
\begin{enumerate}
\item[(1)] $\tilde{p} \leftarrow p$, $q \leftarrow 0$, $r \leftarrow 0$
\item[(2)]  {\tt while} $\tilde{p} \neq 0$ {\tt do}
\begin{itemize}
\item[] $(d,\,p_d) \leftarrow (\deg_t(\tilde{p}), \lc_t(\tilde{p}))$
\item[]  compute a $\Sigma$-pair $(q_d, \, r_d)$ of $p_d$ w.r.t. $\phi$ by executing\\
 Algorithm \texttt{CompleteReduction} in $(F,\sigma)$.
\item[] $(q,\,r) \leftarrow (q+q_dt^d, r+r_dt^d)$
\item[] $\tilde{p} \leftarrow \tilde{p}-\Delta(q_d\,t^d)-r_d\,t^d$
\end{itemize}
{\tt end do}
\item[(3)] {\tt return} $(q,\, r)$
\end{enumerate}

\begin{example}\label{EX:aux}
Let $F$ and $t$ be given in Example \ref{EX:pr}. By Corollary \ref{COR:ADRational}, there is a complete reduction $\phi$ for $\Delta(F)$ on $F$, and 
 $$\im(\phi)=\left\{\frac{a}{p_1^{m_1}\cdots p_{\ell}^{m_{\ell}}}\in \bQ(x)_{(r)}\mid p_i \in S_0~\text{and}~m_i\in \bN  \right\},$$
 where $S_0$ is a set of representatives of $\stackrel{\sigma}{\sim}$ in $\bQ[x]$. In particular, $(\bQ(x),\sigma)$ is $\phi$-computable.
 Here we suppose that $x\in S_0$, i.e., $x$ represents the class of all elements that are $\sigma$-equivalent to $x$.
We apply the algorithm \texttt{AuxiliaryReduction} to
$$p=- \frac{1}{x(1+x)}t^2+\frac{x^2+4x+1}{x(1+x)^2}t \in F[t]$$
and get
\begin{equation}\label{Equ:ExpAuxSplit}
q= \frac{t^2}{x}-\frac{1}{x^3}\in F[t] \quad \text{and} \quad r=\frac{t}{x}-\frac{1}{x^3}\in A 
\end{equation}
such that $p=\Delta(q)+r$.
\end{example}

We proceed with step~2 and are going to construct a $C$-basis of $\Delta(F(t)) \cap A$.

\begin{lemma}\label{LM:lc}
Let $0\neq p \in \Delta(F[t]) \cap A$. Then $\lc_t(p)=c\phi(\Delta(t))$ for some $c \in C^{\times}$.
\end{lemma}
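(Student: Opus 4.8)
The plan is to take an arbitrary nonzero $p \in \Delta(F[t]) \cap A$ and compare leading terms from the two descriptions of $p$. Write $p = \Delta(q)$ for some $q \in F[t]$, and let $d = \deg_t(q)$; note $d \geq 1$, since if $q \in F$ then $\Delta(q) = \sigma(q) - q \in \Delta(F) \subseteq F$, but a nonzero element of $\Delta(F)$ cannot lie in $A = \bigoplus_{i} \im(\phi)\,t^i$ with zero as its only constant-term contribution unless the element is $0$ — more carefully, the degree-$0$ component of $A$ is $\im(\phi)$ and $\im(\phi) \cap \Delta(F) = \{0\}$ because $\phi$ is a complete reduction for $\Delta(F)$; so $\Delta(q) \neq 0$ forces $\deg_t(q) \geq 1$. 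Write $q = q_d t^d + (\text{lower order})$ with $q_d \in F^\times$.

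Next I would extract the leading coefficient of $\Delta(q)$. By the Leibniz-type rule~\eqref{Equ:Leibnitzrule} and the computation already carried out in the proof of Proposition~\ref{PROP:auxiliarySpace}, we have $\Delta(q_d t^d) = \Delta(q_d)\,t^d + (\text{terms of degree} < d)$ because $\Delta(t^d) \in F[t]$ has degree $< d$ (as $\sigma$ preserves degrees and $\Delta(t) = a$ is linear, indeed $\Delta(t^d)$ has degree $d-1$). The lower-order terms of $q$ contribute only to degrees $< d$. Hence $\deg_t(p) \leq d$ and the coefficient of $t^d$ in $p$ is exactly $\Delta(q_d)$. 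There are two cases. If $\Delta(q_d) \neq 0$, then $\deg_t(p) = d$ and $\lc_t(p) = \Delta(q_d) \in \Delta(F)$; but on the other hand $p \in A$, so $\lc_t(p) \in \im(\phi)$, and since $\im(\phi) \cap \Delta(F) = \{0\}$ we get $\lc_t(p) = 0$, a contradiction. So we must have $\Delta(q_d) = 0$, i.e. $q_d \in C^\times$.

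With $q_d \in C^\times$ fixed, the coefficient of $t^d$ in $\Delta(q_d t^d) = q_d \Delta(t^d)$ vanishes and we must look at the coefficient of $t^{d-1}$. Since $\Delta(t) = a$, a direct expansion gives $\Delta(t^d) = (t+a)^d - t^d = d\,a\,t^{d-1} + (\text{lower order})$, so the coefficient of $t^{d-1}$ in $q_d \Delta(t^d)$ is $d\,a\,q_d$. The remaining part of $q$, namely $q - q_d t^d$, has degree $\leq d-1$, so its $\Delta$ has degree $\leq d-1$ and contributes some coefficient $b \in F$ at $t^{d-1}$; writing $q - q_d t^d = q_{d-1} t^{d-1} + \cdots$ with $q_{d-1} \in F$, this coefficient is $\Delta(q_{d-1}) + q_{d-1}(\text{something in }F)$ — in any case an element of $F$ coming from $\Delta$ applied to a lower-degree polynomial. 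The key point: the coefficient of $t^{d-1}$ in $p$ equals $d\,a\,q_d + (\text{coefficient of }t^{d-1}\text{ in }\Delta(q - q_d t^d))$. If $d - 1 \geq 1$, I would argue inductively (on $d$, or by repeating the case analysis) that $p$ itself, after subtracting $\Delta(q_d t^d)$, again lies in $\Delta(F[t]) \cap A$ with strictly smaller top degree, so the analysis recurses; this forces $d = 1$ eventually, i.e. the genuine leading term of $p$ sits in degree $0$... but that would make $p$ of degree $0$, contradicting $p \in \Delta(F[t]) \cap A$ nonzero with the same argument as the base case unless... Let me instead argue directly: applying $\phi$ and using that $p \in A$ means $\lc_t(p) \in \im(\phi)$. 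Since $p = \Delta(q)$ with $q_d \in C^\times$ and $\deg_t(q) = d$, the true leading coefficient of $p$ is the coefficient of $t^{d-1}$, which as computed is $d\,a\,q_d + \Delta(w)$ for some $w \in F$ (coming from the lower-degree part of $q$). Apply $\phi$: $\phi(\lc_t(p)) = \phi(d\,a\,q_d + \Delta(w)) = d\,q_d\,\phi(a) = d\,q_d\,\phi(\Delta(t))$, using $C$-linearity of $\phi$, $\ker\phi = \Delta(F)$, and $\Delta(t) = a$. But $\lc_t(p) \in \im(\phi)$ and $\phi$ is idempotent, so $\phi(\lc_t(p)) = \lc_t(p)$. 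Therefore $\lc_t(p) = d\,q_d\,\phi(\Delta(t))$, and setting $c = d\,q_d \in C^\times$ completes the proof.

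The main obstacle I anticipate is bookkeeping the lower-order terms cleanly: one must be sure that when $d \geq 2$ the ``leading'' behaviour is correctly located in degree $d-1$ and that the contribution of the lower part of $q$ to the $t^{d-1}$-coefficient is genuinely of the form $\Delta(w)$ with $w \in F$, so that $\phi$ annihilates it. Handling $d = 1$ versus $d \geq 2$ uniformly — and ruling out the degenerate possibility $\deg_t(p) = 0$ by the same $\im(\phi) \cap \Delta(F) = \{0\}$ argument used for $q \in F$ — is the delicate part; everything else is the routine binomial expansion of $\Delta(t^d)$ together with idempotency of $\phi$.
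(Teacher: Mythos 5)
Your argument is correct and is essentially the paper's proof: write $p=\Delta(q)$, show the top coefficient of $q$ lies in $C^{\times}$, identify $\lc_t(p)$ as $d\,q_d\,\Delta(t)+\Delta(q_{d-1})$, and apply $\phi$ using idempotency on $\im(\phi)$; the only cosmetic differences are that you index by $\deg_t(q)$ rather than $\deg_t(p)$ and deduce $q_d\in C$ from $\im(\phi)\cap\Delta(F)=\{0\}$ applied to the $t^d$-coefficient instead of citing Karr's degree bound. The abandoned inductive digression should simply be deleted, and the one loose end --- that the $t^{d-1}$-coefficient really is $\lc_t(p)$ --- closes itself, since that coefficient lies in $\im(\phi)$ and $\phi$ sends it to $d\,q_d\,\phi(\Delta(t))\neq 0$, so it cannot vanish.
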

\begin{proof}
Since $0\neq p \in \Delta(F[t])^{\times}$, there exists a $q \in F[t]$ such that $p=\Delta(q)$. Let $\deg_t(p)=d\geq0$. Then $\deg_t(q) \le d+1$ by \cite[Theorem 14]{Karr1981}. Write $q$ as $q_{d+1}t^{d+1}+\cdots+q_0$, where $q_i\in F$. By a straightforward computation, we have
$$p=\Delta(q_{d+1})t^{d+1}+(\Delta(t)(d+1)\sigma(q_{d+1})+\Delta(q_{d}))t^d+\text{lower terms}.$$
 Furthermore, $q_{d+1}=c$ for some $c\in C$ because $\deg_t(p)=d$. Note that $\lc_t(p)=c(d+1)\Delta(t)+\Delta(q_d)\neq0$. 
Since $p\in A$, $\lc_t(p)\in\im(\phi)$. Applying $\phi$ to both sides and using the fact that $\phi$ is idempotent it follows that
$$\lc_t(p)=\phi(\lc_t(p))=\phi(c(d+1)\Delta(t)+\Delta(q_d)) =c(d+1)\phi(\Delta(t))+\phi(\Delta(q_d))=\tilde{c}\phi(\Delta(t))$$
with $\tilde{c}=c(d+1)\in C$. Since $\lc_t(p)\neq0$, $\tilde{c}\in C^{\times}$. 
\end{proof}

As it turns out, $\phi(\Delta(t))$ arising in Lemma~\ref{LM:lc} will be crucial. More precisely, we need the following tuple.

\begin{definition}\label{Def:FirstPair}
A $\Sigma$-pair $(g_t, \phi(\Delta(t)))$ of $\Delta(t)$ in $F$ is called {\em a first pair associated to~$F(t)$}. 
\end{definition}

Note that such a first pair $(g_t, \phi(\Delta(t)))$ associated to $F(t)$ is not unique. More precisely, also $(g_t+c, \phi(\Delta(t)))$ with $c\in C$ is a first pair. For the second component the following property holds.

\begin{lemma}\label{Lemma:SecondEntryIsNonzero}
$\phi(\Delta(t))\in F^{\times}\cap\Delta(F[t])$. 
\end{lemma}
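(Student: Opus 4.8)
The plan is to show two things: that $\phi(\Delta(t)) \in \Delta(F[t])$, and that $\phi(\Delta(t)) \neq 0$.

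First I would handle membership in $\Delta(F[t])$. By hypothesis $(F,\sigma)$ is $\phi$-computable with $\phi$ a complete reduction for $\Delta(F)$ on $F$, so applying $\phi$ to $\Delta(t) \in F$ (note $\Delta(t) = a \in F$) yields a $\Sigma$-pair $(g_t, \phi(\Delta(t)))$ with $\Delta(t) = \Delta(g_t) + \phi(\Delta(t))$ in $F$. Hence $\phi(\Delta(t)) = \Delta(t) - \Delta(g_t) = \Delta(t - g_t) \in \Delta(F[t])$, since $t - g_t \in F[t]$. So membership is essentially immediate from the definition of a first pair.

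The substantive part is $\phi(\Delta(t)) \neq 0$. Suppose for contradiction $\phi(\Delta(t)) = 0$. Then $\Delta(t) = \Delta(g_t)$ with $g_t \in F$, i.e.\ $\sigma(t) - t = \sigma(g_t) - g_t$, so $\sigma(t - g_t) = t - g_t$, meaning $t - g_t \in C_{F(t)}$. But $t$ is a $\Sigma^*$-monomial over $(F,\sigma)$, so $C_{F(t)} = C_F \subseteq F$, forcing $t \in F$, contradicting the transcendence of $t$ over $F$. Equivalently and more in the spirit of the paper: since $t$ is a $\Sigma^*$-monomial, Theorem~\ref{Thm:SigmaKarr} gives $\Delta(t) \notin \Delta(F)$; but $\phi(\Delta(t)) = 0$ means $\Delta(t) \in \ker(\phi) = \Delta(F)$, a contradiction. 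This shows $\phi(\Delta(t)) \in F^{\times}$, and combined with the first paragraph we conclude $\phi(\Delta(t)) \in F^{\times} \cap \Delta(F[t])$.

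I do not anticipate a real obstacle here — both pieces follow directly from the definition of a complete reduction ($\ker(\phi) = \Delta(F)$, $f - \phi(f) \in \Delta(F)$) together with Theorem~\ref{Thm:SigmaKarr}. The only point requiring a little care is keeping the two ambient spaces straight: $\phi$ maps into $F$ and $\ker \phi = \Delta(F)$ (not $\Delta(F[t])$), so the nonvanishing must be argued against $\Delta(F)$, while the membership claim $\phi(\Delta(t)) \in \Delta(F[t])$ uses only $t - g_t \in F[t] \setminus F$. One should also note that independence of the choice of first pair is irrelevant to the statement, since any two first pairs have the same second component $\phi(\Delta(t))$.
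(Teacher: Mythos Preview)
Your proposal is correct and follows essentially the same approach as the paper's proof: the paper first invokes Theorem~\ref{Thm:SigmaKarr} to get $\Delta(t)\notin\Delta(F)$ (which is your second argument for nonvanishing, using $\ker(\phi)=\Delta(F)$), and then writes $\Delta(t)=\phi(\Delta(t))+\Delta(g)$ to conclude $\phi(\Delta(t))=\Delta(t-g)\in\Delta(F[t])$, exactly as you do. Your additional commentary and your first (constant-field) argument for nonvanishing are superfluous but not wrong.
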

\begin{proof}
By Theorem~\ref{Thm:SigmaKarr}, $\Delta(t)\notin\Delta(F)$. Thus $\phi(\Delta(t))\in F^{\times}$.  Furthermore, $\Delta(t)=\phi(\Delta(t))+\Delta(g)$ for some $g\in F$. Thus $\phi(\Delta(t))=\Delta(t-g)$ and hence $\phi(\Delta(t))\in F^{\times}\cap\Delta(F[t])$.
\end{proof}

With the help of a first pair, a $C$-basis of the intersection can be constructed. For this task we start with the following lemma.

\begin{proposition}\label{PROP:intersection}
Let $(g_t, \phi(\Delta(t)))$ be a first pair associated to $F(t)$. Then
there exist
$$w_i=\frac{t^{i+1}}{i+1}-g_t t^i+\text{lower terms},\, i\in \bN$$
such that $B:=\{\Delta(w_i) \mid i\in \bN\}$ is a $C$-basis of $\Delta(F[t])\cap A$. Moreover, $\deg_t(\Delta(w_i))=i$ and the leading coefficient of $\Delta(w_i)$ is $\phi(\Delta(t))$ for each $i\in\bN$.
\end{proposition}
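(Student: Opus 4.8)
The plan is to construct the $w_i$ recursively in $i$ by a degree-lowering argument, then prove that $B=\{\Delta(w_i)\mid i\in\bN\}$ is linearly independent over $C$ and spans $\Delta(F[t])\cap A$. The construction of $w_i$ goes as follows. For $i=0$ we want $w_0$ with $\Delta(w_0)\in A$ of degree $0$ and leading (i.e. constant) coefficient $\phi(\Delta(t))$; since $\Delta(t)=\phi(\Delta(t))+\Delta(g_t)$, the element $w_0=t-g_t$ works, because $\Delta(w_0)=\Delta(t)-\Delta(g_t)=\phi(\Delta(t))\in F^{\times}\cap\Delta(F[t])$, and $\phi(\Delta(t))\in\im(\phi)\subseteq A$ by Lemma~\ref{Lemma:SecondEntryIsNonzero}. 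For the inductive step, suppose $w_0,\dots,w_{i-1}$ have been built. Start with the ansatz $\tilde w_i=\frac{t^{i+1}}{i+1}-g_t\,t^i$. Using the Leibniz-type rule~\eqref{Equ:Leibnitzrule} and the fact that $\sigma$ preserves $t$-degree with $\Delta(t)$ a monic linear polynomial, one computes
$$\Delta\!\left(\tfrac{t^{i+1}}{i+1}\right)=\Delta(t)\,t^i+\text{lower terms},\qquad \Delta(g_t\,t^i)=\Delta(g_t)\,t^i+\text{lower terms},$$
so $\Delta(\tilde w_i)=(\Delta(t)-\Delta(g_t))\,t^i+\text{lower terms}=\phi(\Delta(t))\,t^i+\text{lower terms}$. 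The leading coefficient $\phi(\Delta(t))$ already lies in $\im(\phi)$, but the lower-degree coefficients need not lie in $\im(\phi)$, so $\Delta(\tilde w_i)$ need not be in $A$. To fix this, write $\Delta(\tilde w_i)=\phi(\Delta(t))\,t^i+p$ with $\deg_t(p)<i$, and apply Proposition~\ref{PROP:auxiliarySpace} to $-p$ to obtain $\hat q\in F[t]$ with $\deg_t(\hat q)<i$ and $\hat r\in A$ with $\deg_t(\hat r)<i$ such that $-p=\Delta(\hat q)+\hat r$. Then set $w_i=\tilde w_i+\hat q$; since $\deg_t(\hat q)<i$, the two top coefficients of $w_i$ are unchanged, so $w_i=\frac{t^{i+1}}{i+1}-g_t\,t^i+\text{lower terms}$, and $\Delta(w_i)=\phi(\Delta(t))\,t^i+p+\Delta(\hat q)=\phi(\Delta(t))\,t^i+(-\hat r)+p-(-p)$... more carefully, $\Delta(w_i)=\Delta(\tilde w_i)+\Delta(\hat q)=\phi(\Delta(t))\,t^i+p+(-p-\hat r)=\phi(\Delta(t))\,t^i-\hat r$, which lies in $A$ (leading term in $\im(\phi)\cdot t^i$, remainder $-\hat r\in A$) and in $\Delta(F[t])$, with $\deg_t(\Delta(w_i))=i$ and leading coefficient $\phi(\Delta(t))\neq0$. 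This settles the construction and the degree/leading-coefficient claims.

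Next, linear independence of $B$ over $C$: any finite nontrivial $C$-combination $\sum_{j} c_j\Delta(w_j)$ with $c_j\in C$ and top index $j_0$ having $c_{j_0}\neq0$ has $t$-degree exactly $j_0$ (the top terms cannot cancel since all leading coefficients equal $\phi(\Delta(t))\in F^{\times}$ and the degrees $\deg_t(\Delta(w_j))=j$ are distinct), hence is nonzero. So $B$ is $C$-linearly independent.

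Finally, $B$ spans $\Delta(F[t])\cap A$. The inclusion $\operatorname{span}_C(B)\subseteq \Delta(F[t])\cap A$ is immediate from the construction. For the reverse inclusion, take $0\neq p\in\Delta(F[t])\cap A$ and induct on $d=\deg_t(p)\geq0$. By Lemma~\ref{LM:lc}, $\lc_t(p)=c\,\phi(\Delta(t))$ for some $c\in C^{\times}$. Then $p-c\,\Delta(w_d)$ lies in $\Delta(F[t])\cap A$ (both summands do, using $\Delta(V)\subset V$-type closure from Lemma~\ref{LM:closure}) and has $t$-degree $<d$ because the leading coefficients match: $\lc_t(c\,\Delta(w_d))=c\,\phi(\Delta(t))=\lc_t(p)$. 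If $p-c\,\Delta(w_d)=0$ we are done; otherwise apply the induction hypothesis to it, obtaining $p-c\,\Delta(w_d)\in\operatorname{span}_C(B)$, hence $p\in\operatorname{span}_C(B)$. (The base case $d=0$ is the same argument: $p=c\,\phi(\Delta(t))=c\,\Delta(w_0)$.) This completes the proof.

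The main obstacle I anticipate is the inductive construction of $w_i$ — specifically the need to correct the lower-order coefficients of $\Delta(\tilde w_i)$ so that the whole polynomial lands in $A$, which requires invoking Proposition~\ref{PROP:auxiliarySpace} and carefully tracking that the degree bound $\deg_t(\hat q)<i$ preserves the prescribed shape $\frac{t^{i+1}}{i+1}-g_t\,t^i+\text{lower terms}$ of $w_i$. The spanning direction is then routine given Lemma~\ref{LM:lc}, and linear independence is essentially a degree/leading-coefficient bookkeeping argument.
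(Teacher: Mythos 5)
Your proposal is correct and follows essentially the same route as the paper's proof: compute $\Delta\bigl(\tfrac{t^{i+1}}{i+1}-g_t t^i\bigr)=\phi(\Delta(t))t^i+(\text{lower terms})$, use Proposition~\ref{PROP:auxiliarySpace} to absorb the lower-order part into $\Delta(F[t])+A$ so that $\Delta(w_i)\in A$ with the prescribed degree and leading coefficient, and then get independence from the distinct degrees and spanning by induction on degree via Lemma~\ref{LM:lc}. The only differences are cosmetic (you reduce $-p$ instead of $p$, so $w_i=\tilde w_i+\hat q$ rather than $\tilde w_i-q_i$), and the sign bookkeeping, after your self-correction, comes out right.
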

\begin{proof}
Recall that $\Delta(t)=\sigma(t)-t=a$. For $i\in\bN$, we have
\begin{align*}
	\Delta\left(\frac{t^{i+1}}{i+1}-g_tt^i \right) & = \frac{(t+a)^{i+1}-t^{i+1}}{i+1}-(\sigma(g_t)(t+a)^{i}-g_tt^i) \\
	& = \frac{(i+1)a t^i+\text{lower terms}}{i+1}-((\sigma(g_t)-g_t)t^i+\text{lower terms}) \\
	& =(\Delta(t)-\Delta(g_t))t^i+\tilde{w}_i \quad \text{for some $ \tilde{w}_i \in F[t]$ with $\deg_t(\tilde{w}_i)<i$}\\
	& =\phi(\Delta(t))t^i+\tilde{w}_i.
\end{align*}
It follows from Proposition~\ref{PROP:auxiliarySpace} that there exist a $q_i \in F[t]$ with $\deg_t(q_i)<i$ and an $r_i \in A$ with $\deg_t(r_i) <i$ such that $\tilde{w}_i=\Delta(q_i) +r_i$. So
\begin{equation}\label{EQ:basis}
\Delta\underbrace{\left(\frac{t^{i+1}}{i+1}-g_tt^i -q_i\right)}_{w_i}=\phi(\Delta(t))t^i+r_i.
\end{equation}
By Lemma~\ref{Lemma:SecondEntryIsNonzero}, $\phi(\Delta(t))t^i\neq0$. Thus $\deg(\Delta(w_i))=i$ and $\lc_t(\Delta(w_i))=\phi(\Delta(t))$.  
Since both $\phi(\Delta(t))t^i$ and $r_i$ belong to  $A$, we have $\Delta(w_i) \in A$. 
Thus $B=\{\Delta(w_i)\mid i\in\bN\}$ is contained in $I:=\Delta(F[t]) \cap A$. Moreover, $B$ is a $C$-linear independent set because $\Delta(w_i)$ is of degree $i$ for each $i\in \bN$.
Finally, we will show that $B$ spans $I$. Assume that $q \in I$ with $q \neq 0$. By Lemma \ref{LM:lc}, $\lc_t(q)=c\phi(\Delta(t))$ for some $c\in C^{\times}$. Let $d=\deg_t(q)$ and $\tilde{q}=q-c\Delta(w_d)$. Then $\tilde{q} \in I$ with $\deg_t(\tilde{q})<d$. By induction on $d$, we conclude that $q$ can be expressed as a $C$-linear combination of the elements in $B$.
\end{proof}

We call $B$ given in Proposition \ref{PROP:intersection} an {\em echelon basis} of $\Delta(F[t]) \cap A$ induced by a first-pair $(g_t, \phi(\Delta(t)))$. Furthermore, looking at the proof of Proposition~\ref{PROP:intersection} we obtain the following algorithm to compute such an echelon basis.
 
\medskip \noindent 
{\tt Algorithm EchelonBasis} 

\smallskip \noindent
\text{\tt Input:} $k\in \bN$; a $\Sigma^*$-monomial $t$ over $(F,\sigma)$ which is computable and $\phi$-computable;
a first pair  $(g_t, v)$ with $v=\phi(\Delta(t))$ associated to $F(t)$

\smallskip \noindent
\text{\tt Output:} a list $L=[(w_0,b_0),\cdots, (w_k, b_k)]$, where $b_i=\Delta(w_i)$ and $w_i$ is given in \eqref{EQ:basis} \

\begin{itemize}
  \item[(1)]  $L \leftarrow [(t - g_t, v)]$
  \item[(2)]~{\tt for} $i$ {\tt from} 1 {\tt to} $k$  {\tt do}
\begin{itemize}
 \item[] $a \leftarrow t^{i+1}/(i+1) - g_t t^i$, \,\, $ \tilde{w}\leftarrow \Delta(a)-v\,t^i$
\item[] $(q, r)\leftarrow$ {\tt AuxiliaryReduction}$(F,\tilde{w})$ 
\item[] $(w, b) \leftarrow (a - q, \, v t^i + r)$
\item[] $L \leftarrow$ the list obtained by appending $(w,b)$ to $L$
  \end{itemize}
  {\tt end do}

\item[(3)]~ {\tt return} $L$
\end{itemize}

\begin{example} \label{EX:basis}
Let $F(t)$ be the same as Example \ref{EX:pr}. A first pair associated to $F(t)$ is $(\frac{1}{x}, \frac{1}{x})$. Applying the algorithm with this first pair and $k=1$ returns the first 2 elements of the echelon basis of $\Delta(F[t]) \cap A$:
$$b_0=\frac{1}{x} \quad \text{and} \quad b_1=\frac{t}{x}-\frac{1}{2x^2}.$$
The above algorithm also computes the corresponding pre-images:
$$w_0=t-\frac{1}{x}\quad \text{and} \quad w_1=\frac{t^2}{2}-\frac{t}{x}+\frac{1}{2x^2}.$$
\end{example}

In the last step we define a complementary space of $\Delta(F[t])$ to represent elements in the desired decomposition. For this final step we introduce the following subspace of $A$. Here we suppose that $\Theta$ is a $C$-basis of $F$.

\begin{definition}\label{Def:ComplementSpaceForPoly}
For $\theta\in\Theta$ 
we define the {\em $\theta$-complement} of $\Delta(F[t])$ in $F[t]$ by
\begin{equation}\label{EQ:Polycompl}
	V_{\theta}=\bigoplus_{i\in \bN} (\im(\phi) \cap \ker(\theta^*)) t^i\subseteq A.
\end{equation}
\end{definition}

\begin{proposition}\label{PROP:complement}
Let $\theta \in \Theta$ be effective for $\phi(\Delta(t))$ and let $V_{\theta}$ be the $\theta$-complement of $\Delta(F[t])$ in $F[t]$. Then
$A =(\Delta(F[t]) \cap A) \oplus V_\theta$
and
\begin{equation}\label{Equ:DirectSumPolyPart}
F[t]=\Delta(F[t]) \oplus V_{\theta}.
\end{equation}
In particular, for $p\in F[t]$ with $\deg_t(p)=d$, there are $q\in F[t]$ and $v\in V_{\theta}$ with $\deg(v) \le d$ such that
\begin{equation}\label{Equ:pDelta(q)Rel}
p=\Delta(q)+v.
\end{equation}
Further, there is a complete-reduction $\phi_{\theta}$ for $\Delta(F[t])$ on $F[t]$ with $\phi_{\theta}(f)=v$.\\
If $(F,\sigma)$ is computable and $\phi$-computable, such $q$ and $v$ can be computed and $F[t]$ is $\phi_{\theta}$-computable. 
\end{proposition}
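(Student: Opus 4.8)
The plan is to assemble the desired decomposition~\eqref{Equ:DirectSumPolyPart} from the two previously established pieces: Proposition~\ref{PROP:auxiliarySpace}, which gives $F[t]=\Delta(F[t])+A$, and Proposition~\ref{PROP:intersection}, which provides an echelon basis $B=\{\Delta(w_i)\mid i\in\bN\}$ of the intersection $\Delta(F[t])\cap A$ with $\deg_t(\Delta(w_i))=i$ and leading coefficient $\phi(\Delta(t))$. First I would prove the inner statement $A=(\Delta(F[t])\cap A)\oplus V_\theta$. For the sum $A=(\Delta(F[t])\cap A)+V_\theta$: given $a\in A$ of degree $d$, I subtract suitable $C$-multiples of $\Delta(w_d),\Delta(w_{d-1}),\dots$ to kill, degree by degree, the $\theta^*$-component of each coefficient; this is possible precisely because $\theta$ is effective for $\phi(\Delta(t))$, so $\theta^*(\lc\text{-coefficient of }\Delta(w_i))=\theta^*(\phi(\Delta(t)))\neq0$, making each reduction step invertible in $C$. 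What remains after this process has every coefficient in $\im(\phi)\cap\ker(\theta^*)$, i.e.\ lies in $V_\theta$. For the directness: if $p\in(\Delta(F[t])\cap A)\cap V_\theta$ is nonzero of degree $d$, then by Lemma~\ref{LM:lc} its leading coefficient is $c\,\phi(\Delta(t))$ with $c\in C^\times$, but being in $V_\theta$ forces that leading coefficient into $\ker(\theta^*)$, contradicting $\theta^*(\phi(\Delta(t)))\neq0$.

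Next I would lift this to $F[t]$. From Proposition~\ref{PROP:auxiliarySpace}, any $p\in F[t]$ of degree $d$ can be written $p=\Delta(q_1)+r$ with $r\in A$, $\deg_t(q_1)\le d$, $\deg_t(r)\le d$; applying the inner decomposition to $r$ gives $r=\sum_j c_j\Delta(w_j)+v$ with $v\in V_\theta$ and (tracking degrees) $\deg_t(v)\le d$. Then $p=\Delta\bigl(q_1+\sum_j c_j w_j\bigr)+v$, which is~\eqref{Equ:pDelta(q)Rel}, so $F[t]=\Delta(F[t])+V_\theta$. For the directness $\Delta(F[t])\cap V_\theta=\{0\}$: an element of this intersection lies in $\Delta(F[t])$ and also in $V_\theta\subseteq A$, hence in $\Delta(F[t])\cap A$; but $(\Delta(F[t])\cap A)\cap V_\theta=\{0\}$ was just shown. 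This establishes~\eqref{Equ:DirectSumPolyPart}. The complete reduction $\phi_\theta:F[t]\to F[t]$ is then defined by $\phi_\theta(p)=v$, the $V_\theta$-component of $p$; it is $C$-linear (being projection onto $V_\theta$ along $\Delta(F[t])$), satisfies $p-\phi_\theta(p)=\Delta(q)\in\Delta(F[t])$, and has $\ker(\phi_\theta)=\Delta(F[t])$, so it is a complete reduction for $\Delta(F[t])$ by Definition~\ref{Def:CompleteReductionVS}.

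For the algorithmic claim, I would note that all ingredients are effective under the stated hypotheses: if $(F,\sigma)$ is computable and $\phi$-computable, then Algorithm \texttt{AuxiliaryReduction} computes the pair $(q_1,r)$, Algorithm \texttt{EchelonBasis} computes $w_0,\dots,w_d$ and $b_0,\dots,b_d$ up to the needed degree, the coefficients $c_j$ are obtained by successively applying $\theta^*$ (computable since $\Theta$ is an effective $C$-basis of $F$ and $\theta$ is effective for $\phi(\Delta(t))$, which can itself be found by the first-pair computation together with Lemma~\ref{Lemma:EffectiveBasisLifting}), and finally $q$ and $v$ are assembled by the formulas above. Hence $F[t]$ is $\phi_\theta$-computable.

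\textbf{Main obstacle.} The routine parts are the degree bookkeeping and the linear-algebra assembly. The one genuinely delicate point is the degree-by-degree reduction inside $A$: one must verify that after subtracting the $C$-multiple $c_d\Delta(w_d)$ that cancels the top $\theta^*$-component, the resulting polynomial still lies in $A$ (which is clear, since $A$ and $\Delta(F[t])\cap A$ are $C$-subspaces) and has \emph{strictly} smaller degree \emph{or} a top coefficient already in $\ker(\theta^*)$ — so that the recursion on $d$ terminates with a genuine element of $V_\theta$ rather than merely of $A$. This hinges entirely on the effectiveness of $\theta$ for $\phi(\Delta(t))$ and on the fact (Proposition~\ref{PROP:intersection}) that every $\Delta(w_i)$ has \emph{the same} leading coefficient $\phi(\Delta(t))$, which is what makes the cancellation uniform across degrees.
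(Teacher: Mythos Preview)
Your proposal is correct and follows essentially the same route as the paper: first establish $A=(\Delta(F[t])\cap A)\oplus V_\theta$ by degree-by-degree subtraction of echelon basis elements $\Delta(w_i)$ (using that each has leading coefficient $\phi(\Delta(t))$ with $\theta^*(\phi(\Delta(t)))\neq0$), prove directness via the leading-coefficient contradiction, then lift to $F[t]$ through Proposition~\ref{PROP:auxiliarySpace} and observe $\Delta(F[t])\cap V_\theta\subseteq(\Delta(F[t])\cap A)\cap V_\theta=\{0\}$. One minor remark: your reference to Lemma~\ref{Lemma:EffectiveBasisLifting} in the algorithmic part is unnecessary here, since $\theta\in\Theta$ is given by hypothesis and computing $\theta^*(a)$ only requires the standing assumption that $\Theta$ is an effective $C$-basis of $F$.
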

\begin{proof}
Let $(g_t, \, \phi(\Delta(t)))$ be a first pair associated to $F(t)$.
First we show that $A =(\Delta(F[t]) \cap A) + V_{\theta}$. Since
$V_{\theta}$ is contained in $A$, we have $(\Delta(F[t]) \cap A) +V_{\theta} \subset A$. Conversely, let $p \in A$ with $p\neq 0$. Write $p$ as $p_dt^d+\cdots+p_0$,
where $p_i \in F $ with $p_d \neq 0$. Let 
\begin{equation}\label{Equ:EchelonBasisForProof}
B=\{\Delta(w_i)\mid i\in \bN\}
\end{equation}
be the echelon basis induced by $(g_t, \, \phi(\Delta(t)))$ of
the intersection $\Delta(F[t])\cap A$ given in Proposition~\ref{PROP:intersection}. Set $c=\theta^*(p_d)/\theta^*(\phi(\Delta(t)))\in C$ and
$r=p-c\Delta\left(w_{d}\right).$  Then
$$r=\underbrace{(p_d-c\phi(\Delta(t))}_{=:r_d}t^d+\tilde{r}$$
with $\tilde{r} \in F[t]$ where $\deg_t(\tilde{r})<d$. Note that $p_d, \phi(\Delta(t)) \in \im(\phi)$. So $r_d \in \im(\phi)$. Furthermore
$$\theta^*(r_d)=\theta^*(p_d)-c\theta^*(\phi(\Delta(t)))=\theta^*(p_d)-\frac{\theta^*(p_d)}{\theta^*(\phi(\Delta(t)))}\theta^*(\phi(\Delta(t)))=0.$$
Thus $r_d\in \im(\phi)\cap \ker(\theta^{*})$. On the other hand, $\tilde{r} \in A$ because $p, \Delta\left(w_{d}\right), r_dt^d \in A$. 
The conclusion follows by induction on $d$. Namely, suppose that we get $v'\in V_{\theta}$ and $q'\in F[t]$ such that $\tilde{r}=v'+\Delta(q')$ where $\Delta(q')\in A$; for the base case $\tilde{r}=0$ this holds trivially.
Now define $v=r_d\,t^d+v'\in F[t]$ and $q=c\,w_d+q'\in F[t]$. Note that $\deg(v)\leq d$ ($r_d$ might be $0$) and 
$v\in V_{\theta}$ since $r_d\in\im(\phi)\cap \ker(\theta^{*})$ and $v'\in V_{\theta}$. Further note that $\Delta(q)\in A$ since $\Delta(q'),\Delta(w_d)\in A$. Thus with
\begin{align*}
v+\Delta(q)&=r_d\,t^d+v'+\Delta(c\,w_d+q')\\
&=c\,\Delta(w_d)+(r_d\,t^d+v'+\Delta(q'))\\
&=c\,\Delta(w_d)+(r_d\,t^d+\tilde{r})\\
&=c\,\Delta(w_d)+r=p
\end{align*}
we conclude that $p\in (\Delta(F[t]) \cap A) + V_{\theta}$. \\
Next, we prove $(\Delta(F[t])\cap A)\cap V_{\theta}=\{0\}.$ 
Suppose the contrary and take  $p \in (\Delta(F[t])\cap A)\cap V_{\theta}=\Delta(F[t])\cap V_{\theta}$ with $p \neq 0$. Then $p \in \Delta(F[t])\cap A $, which implies that $p=\sum_{i=0}^{d}c_i\Delta(w_i)$ for some $c_i \in C$ with $c_d\neq0$ by Proposition \ref{PROP:intersection}. 
Since $p \in V_{\theta}$, we have that $\lc_t(p)\in\ker(\theta^*)$ and thus $\theta^*(\lc_t(p))=0$. By Proposition~\ref{PROP:intersection} we have that $\deg_t(\Delta(w_d))=d$ and $\lc_t(\Delta(w_d))=\phi(\Delta(t))$. So $\lc_t(p)=c_d \lc_t(\Delta(w_d))=c_d\,\phi(\Delta(t))$. Thus $\theta^*(\lc_t(p))=c_d\,\theta^*(\phi(\Delta(t)))\neq0$ since $\theta$ is effective for $\phi(\Delta(t))$, a contradiction. Thus $(\Delta(F[t])\cap A)\cap V_{\theta}=\{0\}$ which implies that $A =(\Delta(F[t]) \cap A) \oplus V_{\theta}.$ \\
By Proposition \ref{PROP:auxiliarySpace} and $A =(\Delta(F[t]) \cap A) + V_{\theta}$, we have $F[t]=\Delta(F[t]) +V_{\theta}$. Furthermore, $\Delta(F[t]) \cap V_{\theta} \subset \Delta(F[t]) \cap V_{\theta} \cap A =\{0\}$ which implies the direct sum~\eqref{Equ:DirectSumPolyPart}. \\
Now let $p'\in F[t]$ with $\deg_t(p')=d$. Then by Proposition~\ref{PROP:auxiliarySpace} we get $p\in A$ with $\deg_t(p)\leq d$ and $q'\in F[t]$ such that $p'=\Delta(q')+p$. 
By the construction above we obtain $v\in  V_{\theta}$ with $\deg_t(v)\leq d$ and $q\in F[t]$ such that $p=\Delta(q)+v$. This gives $p'=\Delta(q')+p=\Delta(q')+\Delta(q)+v=\Delta(\tilde{q})+v$ with $\tilde{q}=q+q'\in F[t]$ which proves the statement in~\eqref{Equ:pDelta(q)Rel}. \\
If $(F,\sigma)$ is computable and $\phi$-computable, the first pair $(g_t,\phi(\Delta(t))$ and the first $d+1$ basis elements of the echelon basis $B$ induced by the first pair can be computed with Algorithm~\texttt{EchelonBasis}. Moreover, Algorithm \texttt{AuxiliaryReduction} is applicable to compute $p$ and $q'$. Hence also the constructions for $v$ and $q$ can be carried out. We refer to Algorithm~\texttt{ReductionForPolynomials} for a detailed summary.
\end{proof}

For the construction above one has to fix (among different possible choices) an effective $\theta\in\Theta$ to define the $\theta$-complement $V_{\theta}$. 
This motivates the following definition.

\begin{definition}
	Let $\theta \in \Theta$ be effective for $\phi(\Delta(t))$ and $c=\theta^*(\phi(\Delta(t)))\in C^{\times}$. Then $(\theta, c)$ is called a {\em second pair} associated to $F(t)$.
\end{definition}

We remark that such a pair can be computed if $\Theta$ is effective by executing Algorithm {\tt BasisElementForSummation}. 
We note further that the proof of Proposition~\ref{PROP:complement} is independent of the choice of the first-pair $(g_t+c, \, \phi(\Delta(t)))$ for some $c\in C$. More precisely, one might get another echelon basis~\eqref{Equ:EchelonBasisForProof} by using a different first pair. However, the result of $q$ and $v$ is invariant of this choice: for any other $\Sigma$-pair $(q',v')\in F[t]\times V_{\theta}$ of $p\in F[t]$ we have that $v=v'$ and $q-q'\in C$. Later the above construction will be applied several times for different inputs $p\in F[t]$. For efficiency reasons we will therefore also fix a first pair for a given $\Sigma^*$-monomial. In this way, we can reuse the already derived basis elements from Algorithm \texttt{EcholonBasis} when we run again in a problem to compute a complete reduction for $\Delta(F[t])$ on $F[t]$ (coming, e.g., from recursive calls). 

\medskip

Summarizing, we can extract from the proof of Proposition~\ref{PROP:complement} the following algorithm where we fix the first and second pairs associated to $F(t)$ accordingly.

\medskip \noindent
{\tt Algorithm ReductionForPolynomials}

\smallskip \noindent
\text{\tt Input:} a $\Sigma^*$-monomial $t$ over $(F,\sigma)$ which is computable and $\phi$-computable; $p\in F[t]$.

\smallskip\noindent From outside accessible: first and second pairs $(g_t, \, \phi(\Delta(t)))$, $(\theta,\,c)$ associated to $F(t)$.

\smallskip \noindent
\text{\tt Output:} $(q,\,v) \in F[t] \times V_{\theta}$ such that $p=\Delta(q)+v$

\begin{itemize}
	\item[(1)]~$(q, \, r) \leftarrow {\tt AuxiliaryReduction}(F(t),p)$
	\item[(2)]~  $v \leftarrow r$, $d \leftarrow \deg_t(r)$
	\item[(3)]~ $L \leftarrow {\tt EchelonBasis}(d,F(t),(g_t,\phi(\Delta(t))))$
	\item[(4)] ~{\tt for} $i$ {\tt from} $d+1$ {\tt to} $1$  {\tt do}
	\begin{itemize}
		\item[] $a \leftarrow$ the coefficient of $t^{i-1}$ in $v$,
		$\tilde{c} \leftarrow \theta^*(a)$
		
		\item[] $(w, b) \leftarrow L[i]$
		
		\item[] $q \leftarrow q + c^{-1}\tilde{c} w$, \, $v \leftarrow v - c^{-1}\tilde{c} b$\,
	\end{itemize}
	{\tt end do}
	
	\item[(5)]~ {\tt return} $(q, v)$
\end{itemize}

\begin{example}\label{EX:Proj}
For the $\Sigma^*$-monomial $t$ over $F$ and $r$ given in Example \ref{EX:pr}, we take the first-pair $(1/x,\, 1/x)$ associated to $F(t)$.  So $1/x$ is the only effective element in $\phi(\Delta(t))$. Thus as the second pair we can choose $(1/x,\, 1)$. Now we reduce $p$ from Example \ref{EX:aux} to the $\theta$-complement. 
We have already derived the decomposition $p=\Delta(q)+r$, where $q=t^2/x-1/x^3$ and $r=t/x-1/x^3 \in A$; compare~\eqref{Equ:ExpAuxSplit}. Since $\deg_t(r)=1$, we only need the first two elements of the echelon basis which have been derived in Example~\ref{EX:basis}. Finally we project $r$ to $V_{\theta}$. The result is $r=\Delta(w_1)+v$, where $w_1$ is given in Example~\ref{EX:basis} and 
$v=\frac{1}{2x^2}-\frac{1}{x^3} \in V_{\theta}.$
It follows that $p=\Delta(q+w_1)+v.$
\end{example}

\subsection{A complete reduction for rational functions}\label{Sec:CRForRatFul}

Combining Sections~\ref{SUBSECT:PR} and~\ref{SUBSECT:polyre}, we obtain a complete reduction for $\Delta(F(t))$.

\begin{theorem} \label{TH:decomp}
Let $t$ be a $\Sigma^*$-monomial over $(F,\sigma)$ and suppose that there is a complete reduction $\phi$ for $\Delta(F)$ on $F$. Let $\Theta$ be a $C$-basis of $F$ and $\theta \in \Theta$ be effective for $\phi(\Delta(t))$. Furthermore, let $S$ be a set of representatives of the equivalence classes induced by $\stackrel{\sigma}{\sim}$ on $M_t$.
Then there is a complete reduction $\psi_{(S,\theta)}$ for $\Delta(F(t))$ on $F(t)$ which establishes the direct sum
\begin{equation} \label{EQ:decomp}
  F(t) = \Delta(F(t)) \oplus U_S  \oplus V_{\theta},
\end{equation}
where $U_S$ and $V_{\theta}$ are given in \eqref{EQ:ProperCompl} and \eqref{EQ:Polycompl}, respectively.\\ 
If $(F,\sigma)$  is computable and $\phi$-computable and one can solve Problem~SE in $F[t]$, then $(F(t),\sigma)$ is $\psi_{(S,\theta)}$-computable. 
\end{theorem}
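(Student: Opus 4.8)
The plan is to assemble the complete reduction $\psi_{(S,\theta)}$ by combining the two decompositions established in Sections~\ref{SUBSECT:PR} and~\ref{SUBSECT:polyre} along the direct sum $F(t)=F[t]\oplus F(t)_{(r)}$ from~\eqref{Equ:PolyFracSum}. Recall that Lemma~\ref{LM:closure} tells us that $\Delta$ respects this splitting, i.e.\ $\Delta(F[t])\subseteq F[t]$ and $\Delta(F(t)_{(r)})\subseteq F(t)_{(r)}$. So for $f\in F(t)$ I would write $f=\polypart(f)+\proppart(f)$, apply $\phi_{\theta}$ (from Proposition~\ref{PROP:complement}) to the polynomial part to get $\polypart(f)=\Delta(q_1)+v$ with $v\in V_{\theta}$, and apply $\phi_S$ (from Proposition~\ref{Prop:decomposeProperpart}) to the proper part to get $\proppart(f)=\Delta(g_1)+h$ with $h\in U_S$. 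Adding these gives $f=\Delta(q_1+g_1)+(v+h)$ with $v+h\in V_{\theta}+U_S$, which proves $F(t)=\Delta(F(t))+(U_S+V_{\theta})$.

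Next I would verify that this sum is direct. Since $\Delta(F(t))=\Delta(F[t])\oplus\Delta(F(t)_{(r)})$ and $V_{\theta}\subseteq F[t]$ while $U_S\subseteq F(t)_{(r)}$, any element of $\Delta(F(t))\cap(U_S+V_{\theta})$ decomposes componentwise: its polynomial part lies in $\Delta(F[t])\cap V_{\theta}=\{0\}$ by~\eqref{Equ:DirectSumPolyPart}, and its proper part lies in $\Delta(F(t)_{(r)})\cap U_S=\{0\}$ by Proposition~\ref{Prop:decomposeProperpart}. Likewise $U_S\cap V_{\theta}=\{0\}$ trivially since they live in complementary summands of~\eqref{Equ:PolyFracSum}. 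This establishes~\eqref{EQ:decomp}. Defining $\psi_{(S,\theta)}:F(t)\to F(t)$ by $\psi_{(S,\theta)}(f)=v+h$ with $v,h$ as above, it is $C$-linear (being the sum of the two $C$-linear projections, each precomposed with the $C$-linear projection onto $F[t]$ resp.\ $F(t)_{(r)}$), satisfies $f-\psi_{(S,\theta)}(f)=\Delta(q_1+g_1)\in\Delta(F(t))$, and has kernel exactly $\Delta(F(t))$: indeed $\psi_{(S,\theta)}(f)=0$ iff $v=0$ and $h=0$ iff $f\in\Delta(F(t))$. So $\psi_{(S,\theta)}$ is a complete reduction for $\Delta(F(t))$ on $F(t)$ by Definition~\ref{Def:CompleteReductionVS}.

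For the algorithmic claim, under the hypotheses that $(F,\sigma)$ is computable and $\phi$-computable and Problem~SE is solvable in $F[t]$: Proposition~\ref{Prop:decomposeProperpart} (via Algorithm~\texttt{ReductionForProperRationalFunctions}) produces $(g_1,h)$ with $h=\phi_S(\proppart(f))$, and Proposition~\ref{PROP:complement} (via Algorithm~\texttt{ReductionForPolynomials}, which in turn invokes \texttt{AuxiliaryReduction} and \texttt{EchelonBasis}, all executable since $\phi$ is computable) produces $(q_1,v)$ with $v=\phi_{\theta}(\polypart(f))$. The splitting $f\mapsto(\polypart(f),\proppart(f))$ is computable by polynomial division in $F[t]$. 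Hence $(q_1+g_1,\,v+h)$ is a computable $\Sigma$-pair of $f$ with second component $\psi_{(S,\theta)}(f)$, so $(F(t),\sigma)$ is $\psi_{(S,\theta)}$-computable.

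I expect no genuine obstacle here — the theorem is essentially a bookkeeping assembly of the two preceding propositions, and the only point requiring a moment's care is confirming the directness of the three-term sum, which reduces cleanly to the two already-established pairwise intersections being trivial once one observes that $\Delta$, $U_S$, and $V_{\theta}$ all split compatibly along~\eqref{Equ:PolyFracSum}. One should also mention explicitly that the output is independent of the (non-unique) choice of first pair, as noted after Proposition~\ref{PROP:complement}, so that $\psi_{(S,\theta)}$ is well defined once $S$ and $\theta$ are fixed.
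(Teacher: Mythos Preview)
Your proof is correct and follows essentially the same approach as the paper's own proof: both assemble $\psi_{(S,\theta)}$ by splitting $f$ along~\eqref{Equ:PolyFracSum}, applying Propositions~\ref{Prop:decomposeProperpart} and~\ref{PROP:complement} to the proper and polynomial parts respectively, and reading off the direct sum~\eqref{EQ:decomp} from the two componentwise decompositions together with $\Delta(F(t))=\Delta(F[t])\oplus\Delta(F(t)_{(r)})$. Your treatment of directness is in fact slightly more explicit than the paper's, and the closing remark about independence from the choice of first pair is a welcome (if not strictly necessary) addition.
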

\begin{proof}
$F(t)$ can be written as the direct sum of $F[t]$ and $F(t)_{(r)}$. By Proposition~\ref{Prop:decomposeProperpart}, we have $F(t)_{(r)}=\Delta(F(t)_{(r)}) \oplus U_S$. On the other hand, Proposition \ref{PROP:complement} implies that $F[t]=\Delta(F[t]) \oplus V_{\theta}$. So $F(t)=\Delta(F[t]) \oplus \Delta(F(t)_{(r)}) \oplus U_S \oplus V_{\theta}$. Consequently,  $ F(t) = \Delta(F(t)) \oplus U_S  \oplus V_{\theta}$. Furthermore, take $f\in F(t)$ and let $f_1\in F(t)_{(r)}$ and $f_2\in F[t]$ with $f=f_1+f_2$. Then there exist $g\in F(t)_{(r)}$ and $h \in U_S$ such that $f_1=\Delta(g)+h$ by Proposition~\ref{Prop:decomposeProperpart}, and there exist $q\in F[t]$ and $v\in V_{\theta}$ such that $f_2=\Delta(q)+v$ by Proposition~\ref{PROP:complement}. This yields $f=f_1+f_2=h+v+\Delta(g')$ with $g'=g+q\in F(t)$ and defines the complete reduction $\psi_{(S,\theta)}$ for $\Delta(F(t))$ on $F(t)$ with $\psi_{(S,\theta)}(f)=h+v\in U_s\oplus V_{\theta}$. Finally, if
$(F,\sigma)$ is computable and $\phi$-computable and one can solve Problem~SE in $F[t]$, then $g'$, $h$ and $v$ can be computed explicitly and $(F(t),\sigma)$ is $\psi_{(S,\theta)}$-computable.
\end{proof}

The algorithmic version of the theorem can be summarized as follows.

\medskip \noindent 
{\tt Algorithm CompleteReduction} 

\smallskip \noindent
\text{\tt Input:} a $\Sigma^*$-monomial $t$ over $(F,\sigma)$ which is computable, $\phi$-computable and where Problem~SE is solvable in $F[t]$; $p\in F[t]$.

\smallskip\noindent From outside accessible: a first and second pair $(g_t, \, \phi(\Delta(t)))$  and $(\theta,\,c)$ associated to $F(t)$ and a finite set $S\subseteq M_t$ whose elements are $\sigma$-coprime.
\smallskip \noindent

\smallskip

\noindent\text{\tt Output:} $(g,\,r) \in K(t) \times (U_S\oplus V_{\theta})$ such that $f=\Delta(g)+r$

\begin{itemize}
  \item[(1)]\, $f_1 \leftarrow \proppart(f)$\, $f_2\leftarrow \polypart(f)$
  \item[(2)]\, $(g, h) \leftarrow$ {\tt ReductionForProperRationalFunctions}$(F(t),f_1)$
  \item[(3)]\, $(q, v) \leftarrow$ {\tt ReductionForPolynomials}$(F(t),f_2)$
  \item[(4)]\, {\tt return} $(g+q, h+v)$
\end{itemize}

\begin{example}\label{EX:cr}
We start with the sum
$$S(n)=\sum_{k=1}^{n}\frac{k(k^2+ 5k +4) H_k^3 +(k^2 +4 k+1) H_k^2- (k+1)^2 H_k^4-k - 2 k^2 - k^3}{ k (1 + k)^2 (1 + H_k + kH_k )H_k}$$
in terms of the harmonic numbers $H_k=\sum_{i=1}^k\frac{1}{k}$. Taking the difference field $(F,\sigma)$ and the $\Sigma^*$-monomial $t$ over $F$ as given in Example \ref{EX:pr} we can represent the summand by
$$ f=\frac{x(x^2+ 5x +4) t^3 +(x^2 +4 x+1) t^2- (x+1)^2 t^4-x - 2 x^2 - x^3}{ x (1 + x)^2 (1 + t + t x)t} \in F(t).$$
By polynomial division we get the splits
$$f_1=\proppart(f)=-\frac{1}{(1+x)t^2+t} \quad \text{and} \quad f_2=\polypart(f)= \frac{(x^2+4x+1)t-(1+x)t^2}{x(1+x)^2}.$$
By Example \ref{EX:pr}, we find $f_1=\Delta(1/t)$, which, together with Example \ref{EX:Proj}, implies that
\begin{equation}\label{Equ:RefinedTeleExp}
f=\Delta\left(\underbrace{\frac{2+x}{2x}t^2-\frac{t}{x}+\frac{x-2}{2x^3}+\frac{1}{t}}_{=g}\right)+\underbrace{\frac{x-2}{2x^3}}_{=r}
\end{equation}
where $g\in F(t)$ and $r\in V_{\theta}$. Reinterpreting the $\Sigma$-pair $(g,r)$ of $f$ to $(g(k),r(k))$ in terms of the harmonic numbers, equation~\eqref{Equ:RefinedTeleExp} can be restated as~\eqref{Equ:RefinedTele}, and summing this equation over $k$ from $1$ to $n$ leads to the identity~\eqref{Equ:RefinedTeleSummed} (with $a=1$, $b=n$) where the remainder sum
$$\sum_{k=1}^nr(k)=\sum_{k=1}^n\frac{k-2}{2k^3}=\frac{H_n^{(2)}}{2}-H_n^{(3)}$$
can be rewritten in terms of the generalized harmonic numbers $H_n^{(o)}=\sum_{i=1}^n\frac{1}{i^o}$. Summarizing, we obtain the simplification
$$S(n)=\tfrac{(n+1)^2(n+3)H_n^3+(n+1)(n+7)H_n^2-2(n^3+3n^2+3n-1)H_n+2n(n+1)^2}{2(n+1)^2(1+(n+1)H_n)} +\frac{H_n^{(2)}}{2}-H_n^{(3)}.$$
\end{example}

We note that $\psi_{(S,\theta)}|_{F}\neq\phi$. This observation is implied by the simple fact that
 $\Delta(t)\in F$ and $\Delta(t)\notin\Delta(F)$, i.e., $\phi(\Delta(t))\neq0$, but $\Delta(t)\in\Delta(F(t))$, i.e, $\psi_{(S,\theta)}(\Delta(t))=0$.
However $\phi$ and $\psi_{(S,\theta)}$ are closely related as carried out in the next corollary. The specialization to polynomials in statement (2) will be used in Corollary~\ref{Cor:SringExt} below.

\begin{corollary}\label{COR:remainder}
	Let $\phi: F\rightarrow F$ be a complete reduction for $\Delta(F)$, $(\theta, c)$ be a second pair associated to $F(t)$ and $\psi_{(S,\theta)}$ be the complete reduction given in Theorem \ref{TH:decomp}. Then
	\begin{enumerate}
		\item For every $f \in F$, we have that $\psi_{(S,\theta)}(f) = \phi(f) + \tilde{c} \phi(\Delta(t))$, where $\tilde{c}  = - \theta^*\left(\phi(f)\right) c^{-1}.$
		\item For every $f\in F[t]$ we have that $\psi_{(S,\theta)}(f)\in V_{\theta}$.
	\end{enumerate}
\end{corollary}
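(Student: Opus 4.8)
The plan is to trace the construction of $\psi_{(S,\theta)}$ from Theorem~\ref{TH:decomp} applied to the special inputs described in the two statements, and simply read off what the $\phi$-remainder must be. Recall that for $f\in F(t)$ we split $f=\proppart(f)+\polypart(f)$, reduce the proper part via Proposition~\ref{Prop:decomposeProperpart} into $U_S$ and the polynomial part via Proposition~\ref{PROP:complement} into $V_{\theta}$, and then $\psi_{(S,\theta)}(f)$ is the sum of the two remainders.

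For statement (2): if $f\in F[t]$ then $\proppart(f)=0$ and $\polypart(f)=f$, so the proper-part reduction contributes $0\in U_S$ and the polynomial reduction of Proposition~\ref{PROP:complement} produces a remainder $v\in V_{\theta}$. Hence $\psi_{(S,\theta)}(f)=v\in V_{\theta}$, which is exactly the claim. (Formally one should note that $U_S\cap V_{\theta}$ may be nonzero only at $0$, but since the proper-part remainder is literally $0$ here, no subtlety arises.)

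For statement (1): take $f\in F\subseteq F[t]$, so $\deg_t(f)\le 0$. By Algorithm~\texttt{ReductionForPolynomials}, we first call \texttt{AuxiliaryReduction} on $f$; since $f$ is a constant-degree polynomial, Proposition~\ref{PROP:auxiliarySpace} gives (from its base case $d=0$) the $\Sigma$-pair $(q_0,\phi(f))$ of $f$, i.e.\ $f=\Delta(q_0)+\phi(f)$ with $r=\phi(f)\in\im(\phi)\subseteq A$ of degree $0$. Then in step~(4) with $d=\deg_t(r)=0$ (assuming $\phi(f)\ne 0$; the case $\phi(f)=0$ is handled by $\tilde c=0$), the loop runs once at $i=1$: we take $a=\theta^*(\phi(f))$ (the $t^0$-coefficient of $r$), set $\tilde{c}=-\theta^*(\phi(f))c^{-1}$ after the sign conventions of the algorithm, and use the echelon basis element $(w_0,b_0)=(t-g_t,\phi(\Delta(t)))$ of Proposition~\ref{PROP:intersection}. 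The new remainder is $v=r-c^{-1}\theta^*(\phi(f))\,b_0=\phi(f)-c^{-1}\theta^*(\phi(f))\,\phi(\Delta(t))=\phi(f)+\tilde{c}\,\phi(\Delta(t))$ with $\tilde{c}=-\theta^*(\phi(f))c^{-1}$, which is precisely the asserted formula. One checks $v\in V_{\theta}$: its $t^0$-coefficient is $\phi(f)-c^{-1}\theta^*(\phi(f))\phi(\Delta(t))$, which lies in $\im(\phi)$ and is killed by $\theta^*$ since $\theta^*(\phi(\Delta(t)))=c$.

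The main obstacle I expect is purely bookkeeping: getting the sign and the $c^{-1}$ placement to match the precise conventions of Algorithm~\texttt{ReductionForPolynomials} and Algorithm~\texttt{EchelonBasis}, and making sure the base case of Proposition~\ref{PROP:auxiliarySpace} really delivers $r=\phi(f)$ (not merely some element of $\im(\phi)$) — this uses that $\phi$ is idempotent and that the $\Sigma$-pair produced is $(g,\phi(f))$ by the definition of $\phi$-computable. No degree estimates or new structural facts are needed; everything follows by specializing the already-established propositions.
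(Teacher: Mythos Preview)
Your proof of statement~(2) is correct and matches the paper's argument exactly.

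For statement~(1), your proof is correct but takes a different route from the paper. You trace the execution of Algorithm~\texttt{ReductionForPolynomials} on the degree-zero input~$f$, following the base case of \texttt{AuxiliaryReduction} and the single loop iteration with the echelon element~$b_0=\phi(\Delta(t))$. The paper instead argues abstractly from the direct sum: it sets $v:=\phi(f)+\tilde{c}\,\phi(\Delta(t))$ as a candidate, observes that $\phi(\Delta(t))\in\Delta(F[t])$ (Lemma~\ref{Lemma:SecondEntryIsNonzero}), so $f-v\in\Delta(F[t])$, checks $\theta^*(v)=0$ and $v\in\im(\phi)$ so that $v\in V_\theta$, and then invokes the uniqueness in $F[t]=\Delta(F[t])\oplus V_\theta$ to conclude $\psi_{(S,\theta)}(f)=v$. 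The paper's approach is independent of the algorithm's implementation details (in particular, it does not need to know that \texttt{AuxiliaryReduction} returns exactly $\phi(f)$ rather than some other element of $\im(\phi)$ in the same $\Delta(F)$-coset, nor that the loop uses precisely~$b_0$); your approach gives the same answer but is tied to those choices. Both are valid; the paper's argument is a bit more robust and conceptual, while yours has the virtue of making the connection to the algorithm explicit.
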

\begin{proof}
(1)	Let $f\in F$. Then there is a $g\in F$ such that $f=\Delta(g)+\phi(f)$. Set $v:=\phi(f) + \tilde{c} \phi(\Delta(t))$. Then  
	$f=\Delta(g)+\phi(f)=\Delta(g)+v-\tilde{c} \phi(\Delta(t))$. Since $\phi(\Delta(t)) \in \Delta(F[t])$ by Lemma~\ref{Lemma:SecondEntryIsNonzero}, there is a $g'\in F[t]$ with $f=v+\Delta(g')$. With
	$\tilde{c}=-\theta^*(\phi(f))c^{-1}=-\theta^*(\phi(f))/\theta^*(\phi(\Delta(t)))$
	we get
	$\theta^*(v)=\theta^*(\phi(f))+\tilde{c}\theta^*(\phi(\Delta(t)))=0$. Since $v \in A \cap F$, 
	$v$ belongs to the $\theta$-complement, which implies that $\psi_{(S,\theta)}(f)=v$ by Theorem \ref{TH:decomp}.\\
(2)	Let $f\in F[t]$. Then in the construction we get $f_1=\proppart(f)=0$ and $f_2=\polypart(f)=f$. Following the proof of Theorem~\ref{TH:decomp} we get $g\in F[t]$ and $v\in V_{\theta}$ such that $f=f_2=\Delta(g)+v$. In particular, $\psi_{S,\theta}(f)=v$. 
	\end{proof}

In~\cite{Schn2007} a similar construction has been provided to decompose $f\in F(t)$ in a summable and non-summable part where $t$ is a $\Sigma^*$-monomial (or a $\Pi$-monomial\cite{Karr1981} covering also products). 
While the construction for the rational part is only a streamlined version~\cite{Schn2007}, the construction of the polynomial part has been substantially improved in this article for the case of a $\Sigma^*$-monomial. First, our new construction provides a complete reduction and not only an additive decomposition (i.e., the complementary set forms a subspace of $F(t)$). Second, our algorithm
does not require to solve any underlying difference ring equation in $(F,\sigma)$ using classical tools such as degree/denominator bounds~\cite{Karr1981,Bron2000,Schn2001} and solving the underlying linear system. Finally, all the coefficients of the polynomial contributions are reduced and thus lead to significantly smaller representations. We conclude this section with the following optimality property of $\psi_{(S,\theta)}(f)$; compare~\cite{Schn2007} (see Corollaries~4.1 and~5.1 therein).


\begin{corollary}\label{COR:minimality}
	Let $f\in F(t)$ with the $\psi_{(S,\theta)}$-remainder $h$. Assume that there exist $\tilde{g},\tilde{h} \in F(t)$ such that $f=\Delta(\tilde{g})+\tilde{h}$. Then
	$$\deg_t(\den(\proppart(h)) \le \deg_t(\den(\proppart(\tilde{h}))) \, \text{ and } \, \deg_t(\polypart(h)) \le \deg_t(\polypart(\tilde{h})). $$
\end{corollary}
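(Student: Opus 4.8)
The strategy is to reduce the two inequalities to the direct-sum decomposition~\eqref{EQ:decomp} together with the degree-boundedness statements already established in Propositions~\ref{Prop:decomposeProperpart} and~\ref{PROP:complement}. The key observation is that any other $\Sigma$-pair $(\tilde g,\tilde h)$ of $f$ gives $\tilde h \equiv f \pmod{\Delta(F(t))}$, hence $h - \tilde h \in \Delta(F(t))$. So the first step is to split $\tilde h = \proppart(\tilde h) + \polypart(\tilde h)$ and likewise for $h$; since $\Delta(F(t)) = \Delta(F(t)_{(r)}) \oplus \Delta(F[t])$ by Lemma~\ref{LM:closure} (more precisely, since $\Delta$ preserves both summands of~\eqref{Equ:PolyFracSum}), we get $\proppart(h) - \proppart(\tilde h) \in \Delta(F(t)_{(r)})$ and $\polypart(h) - \polypart(\tilde h) \in \Delta(F[t])$, so the two inequalities can be treated separately.

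For the polynomial part: write $\polypart(\tilde h) = \Delta(q') + v'$ with $q' \in F[t]$ and $v' \in V_\theta$ by Proposition~\ref{PROP:complement}, where the construction there guarantees $\deg_t(v') \le \deg_t(\polypart(\tilde h))$. Since $\polypart(h) = \psi_{(S,\theta)}(f)$ restricted to the polynomial part lies in $V_\theta$ and $\polypart(h) - v' \in \Delta(F[t]) \cap V_\theta = \{0\}$ by the direct sum~\eqref{Equ:DirectSumPolyPart}, we conclude $\polypart(h) = v'$, and therefore $\deg_t(\polypart(h)) = \deg_t(v') \le \deg_t(\polypart(\tilde h))$. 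For the proper rational part the argument is analogous using Proposition~\ref{Prop:decomposeProperpart}: write $\proppart(\tilde h) = \Delta(g') + h'$ with $g' \in F(t)_{(r)}$ and $h' \in U_S$ satisfying $\deg_t(\den(h')) \le \deg_t(\den(\proppart(\tilde h)))$; then $\proppart(h) - h' \in \Delta(F(t)_{(r)}) \cap U_S = \{0\}$ by the direct sum in Proposition~\ref{Prop:decomposeProperpart}, so $\proppart(h) = h'$ and $\deg_t(\den(\proppart(h))) = \deg_t(\den(h')) \le \deg_t(\den(\proppart(\tilde h)))$.

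The only point requiring a small amount of care — and the step I expect to be the mildest obstacle — is justifying that the decomposition $\Delta(F(t)) = \Delta(F(t)_{(r)}) \oplus \Delta(F[t])$ is compatible with the polynomial/proper splitting, i.e.\ that for $u \in F(t)$ one has $\proppart(\Delta(u)) = \Delta(\proppart(u))$ and $\polypart(\Delta(u)) = \Delta(\polypart(u))$. This is immediate from Lemma~\ref{LM:closure}: $\sigma$ preserves $F[t]$ and $F(t)_{(r)}$ since $\sigma(t) = t + a$ with $a \in F$, so $\Delta$ maps each summand of~\eqref{Equ:PolyFracSum} into itself, and uniqueness of the decomposition~\eqref{Equ:PolyFracSum} then forces the componentwise identities. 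Once this is in hand, the two chains of equalities above close the proof; no genuinely new computation is needed beyond invoking the already-established propositions.
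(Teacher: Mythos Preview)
Your proof is correct and follows essentially the same approach as the paper's own argument: split $h-\tilde h\in\Delta(F(t))$ into its proper and polynomial components via Lemma~\ref{LM:closure}, then apply Propositions~\ref{Prop:decomposeProperpart} and~\ref{PROP:complement} to $\proppart(\tilde h)$ and $\polypart(\tilde h)$ respectively, and use the direct-sum decompositions to force equality with $\proppart(h)$ and $\polypart(h)$. Your additional paragraph justifying the componentwise compatibility of $\Delta$ with the splitting~\eqref{Equ:PolyFracSum} simply makes explicit what the paper absorbs into its citation of Lemma~\ref{LM:closure}; otherwise the arguments coincide (the paper just treats the proper part first and invokes Lemma~\ref{LM:proppart} rather than the direct-sum statement, which is equivalent).
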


\begin{proof}
	Since $f=\Delta(\tilde{g})+\tilde{h}$ and $h$ is the remainder of $f$, we have $h-\tilde{h} \in \Delta(F(t))$. Thus
	\begin{equation}\label{Equ:ProperPolyRel}
	\proppart(h)-\proppart(\tilde{h}) \in \Delta(F(t)_{(r)}) \quad \text{and} \quad \polypart(h)-\polypart(\tilde{h}) \in \Delta(F[t])
	\end{equation}
	by Lemma \ref{LM:closure}. The minimality for both components can be shown as follows.
	By Proposition \ref{Prop:decomposeProperpart} we have that $\proppart(\tilde{h})-u \in \Delta(F(t)_{(r)})$ for some $u\in U_S$ with $\deg_t(\den(u)) \le \deg_t(\den(\proppart(\tilde{h})))$. Thus $\proppart(h)-u \in \Delta(F(t)_{(r)})$ by~\eqref{Equ:ProperPolyRel}. Since $\proppart(h),u\in U_S$, $\proppart(h)-u\in U_S$. Hence $\proppart(h)-u \in \Delta(F(t)_{(r)}) \cap U_S$. 
	 So $\proppart(h)-u=0$ by Lemma~\ref{LM:proppart}, which implies that $\deg_t(\den(\proppart(h)))=\deg_t(\den(u))\leq \deg_t(\den(\proppart(\tilde{h})))$.\\	
	Similarly, by Proposition \ref{PROP:complement}, there is $v \in V_{\theta}$ with $\deg_t(v)\le \deg_t(\polypart(\tilde{h}))$ such that $\polypart(\tilde{h})-v \in \Delta(F[t])$. 
	So $\polypart(h)-v \in \Delta(F[t])$ by~\eqref{Equ:ProperPolyRel}. In addition, $\polypart(h) \in V_{\theta}$ implies $\polypart(h)-v \in \Delta(F[t]) \cap V_{\theta}=\{0\}$. Hence, $\polypart(h)-v=0$ which implies $\deg_t(\polypart(h))=\deg_t(v) \le \deg_t(\polypart(\tilde{h})).$
\end{proof}

\section{Complete reductions in towers of $\Sigma^*$ extensions}\label{SECT:towers}

This section divides into three parts. In Section \ref{SUBSECT:recursiveMethod}, we construct complete reductions in towers of $\Sigma^*$-extensions (also called $\Sigma^*$-towers) and elaborate on the algorithmic aspects of the obtained recursive method. In particular, an experimental comparison
between the complete reduction and the built-in algorithm of the package {\tt Sigma}  in Mathematica is given. In Section~\ref{Sec:PT} we illustrate how complete reductions can be utilized to solve the parameterized telescoping problem in such $\Sigma^*$-towers. Finally, we connect to Karr's reduced $\Sigma^*$-extensions in Section~\ref{SUBSECT:wgt} and show how complete reductions can be used to reduce the nesting depth of input sums. 

\subsection{A recursive telescoping algorithm based on complete reductions}\label{SUBSECT:recursiveMethod}

A tower of $\Sigma^*$-extensions can be introduced as follows.

\begin{definition} \label{DEF:tower}
	Let $(F,\sigma)$ be a difference field with constant field $C$ and let $(E,\sigma)$ be a tower of $\Sigma^*$-extensions of $(F,\sigma)$, i.e, 
\begin{equation} \label{EQ:tower}
	\begin{array}{cccccccc}
		F=F_0       & \leq &  F_1      & \leq & \cdots         & \leq     & F_n=E  \\
		&         & \shortparallel &         &            &             & \shortparallel \\
		&         &  F_0(t_1) &         &                   &  & F_{n-1}(t_n)
	\end{array}
\end{equation}
	is a tower of field extensions where for all $1\leq i\leq
	n$ each $F_i=F_{i-1}(t_i)$ is a transcendental field
	extension of $F_{i-1}$, and $\sigma(t_i)=t_i+a_i$ with $a_i\in F_{i-1}$ and $C_{E}=C$.
	In other words, $t_i$ is a $\Sigma^*$-monomial over $F_{i-1}$ for $1\leq i\leq n$. We call $(t_1,\dots,t_n)$ also a {\em tower of $\Sigma^*$-monomials} over $(F,\sigma)$ and  $(E,\sigma)$ a {\em$\Sigma^*$-tower} over $(F,\sigma)$.
\end{definition}

\begin{theorem}\label{TH:CR}	
	Let $(F_n,\sigma)$  be a $\Sigma^*$-tower over $(F_0,\sigma)$ with constant field $C$ as given in Definition~\ref{DEF:tower} and assume that there is a complete reduction $\phi=\phi_0: F_0 \rightarrow F_0$ for $\Delta(F_0)$. Furthermore, let $\Theta_0$ be a $C$-basis of $F_0$ and consider the  $\Theta_0$-canonical basis $\Xi$ given in~\eqref{Equ:FullCThetaBasis}. 
	 Then for each $i$ with $0 \le i \le n$, there is a complete reduction $\phi_i$ for $\Delta(F_i)$ with a second pair $(\theta_i, c_i)\in\Xi\times C$ associated to $F_{i-1}(t_i)$ with $\ind_n(\theta_i)=\ind_{n}(\phi_{i-1}(\Delta(t_{i})))$. 
\end{theorem}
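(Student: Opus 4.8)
The plan is to prove Theorem~\ref{TH:CR} by induction on $i$, using Theorem~\ref{TH:decomp} as the single-step engine and Lemma~\ref{Lemma:EffectiveBasisLifting} to control indicators. For the base case $i=0$, take $\phi_0=\phi$; there is nothing to construct (the statement about a second pair is vacuous for $i=0$, or one simply notes $F_0=F_{-1}(t_0)$ is not part of the tower). Assume inductively that we have a complete reduction $\phi_{i-1}$ for $\Delta(F_{i-1})$ on $F_{i-1}$, together with the claimed second pairs for levels $1,\dots,i-1$. Since $F_{i-1}$ is computable-or-not aside, it has the $C$-basis obtained from $\Xi$ restricted to the sub-tower, and in particular the $\Theta_0$-canonical basis of $F_{i-1}$ is a sub-basis of $\Xi$; call it $\Xi_{i-1}$.

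The key step: apply Theorem~\ref{TH:decomp} with $F=F_{i-1}$, $t=t_i$, $\phi=\phi_{i-1}$, and basis $\Theta=\Xi_{i-1}$. By Corollary~\ref{Cor:SigmaCheck} (or directly Lemma~\ref{Lemma:SecondEntryIsNonzero}) we have $\phi_{i-1}(\Delta(t_i))\in F_{i-1}^{\times}$, since $t_i$ is a $\Sigma^*$-monomial over $F_{i-1}$ and hence $\Delta(t_i)\notin\Delta(F_{i-1})$ by Theorem~\ref{Thm:SigmaKarr}. Now invoke Lemma~\ref{Lemma:EffectiveBasisLifting} applied to the nonzero element $\phi_{i-1}(\Delta(t_i))\in F_{i-1}^{\times}$ and the basis $\Xi_{i-1}$: this produces a basis element $\theta_i\in\Xi_{i-1}\subseteq\Xi$ that is effective for $\phi_{i-1}(\Delta(t_i))$ and, crucially, satisfies $\ind_n(\theta_i)=\ind_n(\phi_{i-1}(\Delta(t_i)))$. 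Set $c_i=\theta_i^*(\phi_{i-1}(\Delta(t_i)))\in C^{\times}$; then $(\theta_i,c_i)$ is by definition a second pair associated to $F_{i-1}(t_i)$. Fixing also a first pair $(g_{t_i},\phi_{i-1}(\Delta(t_i)))$ and a set $S_i$ of $\stackrel{\sigma}{\sim}$-representatives in $M_{t_i}$, Theorem~\ref{TH:decomp} hands us the complete reduction $\phi_i:=\psi_{(S_i,\theta_i)}$ for $\Delta(F_i)$ on $F_i$. This closes the induction.

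The main obstacle, and the only place real care is needed, is the compatibility of bases: one must justify that the $\Theta_0$-canonical basis $\Xi_{i-1}$ of $F_{i-1}$ really is a subset of the $\Theta_0$-canonical basis $\Xi$ of $F_n$ (so that $\theta_i$, chosen as an element of $\Xi_{i-1}$, lies in $\Xi$ and that the notions $\theta_i^*$ and $\ind_n$ agree whether computed in $F_{i-1}$ or in $F_n$), and that the ``effective basis'' hypothesis of Theorem~\ref{TH:decomp} is met at level $i-1$. The former follows directly from the definition in~\eqref{Equ:FullCThetaBasis}: an element $\theta_0\theta_1\cdots\theta_{i-1}\in\Xi_{i-1}$ equals $\theta_0\theta_1\cdots\theta_{i-1}\cdot 1\cdots 1\in\Xi$ (padding with the unit basis elements $1\in\Theta_{j}$ for $j\geq i$), and for such $\theta$ the map $\theta_n^*\circ\cdots$ collapses to $\theta^*$ computed in $F_{i-1}$, while $\ind_n$ restricted to $F_{i-1}$ is $\ind_{i-1}$ shifted trivially. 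The latter is not needed in the purely structural statement of the theorem (which asserts existence of $\phi_i$ and of the second pair, not an algorithm); if one additionally wants the algorithmic refinement, one assumes $F_0$ computable with $\Theta_0$ effective and $\phi$-computable, so that Lemma~\ref{Lemma:EffectiveBasisLifting} guarantees $\Xi_{i-1}$ is effective and Problem~SE is solvable at each level, propagating $\phi_i$-computability up the tower.

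The statement $\ind_n(\theta_i)=\ind_n(\phi_{i-1}(\Delta(t_i)))$ is then exactly the indicator-preservation output of Lemma~\ref{Lemma:EffectiveBasisLifting}, so no further work is required; the theorem follows.
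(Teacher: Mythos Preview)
Your proof is correct and follows essentially the same route as the paper's: induction on the tower, with Lemma~\ref{Lemma:EffectiveBasisLifting} supplying the second pair (and the indicator equality) and Theorem~\ref{TH:decomp} supplying the complete reduction at each level. You are more explicit than the paper on two points it leaves implicit---that $\phi_{i-1}(\Delta(t_i))\neq 0$ (so Lemma~\ref{Lemma:EffectiveBasisLifting} applies) and that the canonical basis $\Xi_{i-1}$ of $F_{i-1}$ embeds in $\Xi$ via padding by $1$'s, making $\theta_i\in\Xi$ and the indicator computations consistent---but otherwise the arguments coincide.
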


\begin{proof}
	We proceed by induction on $n$. The statement clearly holds for $n=0$. Assume that there is a complete reduction $\phi_{n}$ for $\Delta(F_{n})$ with $n\neq0$ as claimed in the theorem and consider the $\Sigma^*$-monomial $t_{n+1}$ over $(F_n,\sigma)$.
	It follows by Lemma~\ref{Lemma:EffectiveBasisLifting} that there exists a second pair $(\theta_{n+1}, c_{n+1})$ associated to $F_{n+1}$ with $\ind_{n+1}(\phi_{n}(\Delta(t_{n+1})))=\ind_{n+1}(\theta_{n+1})$. Consequently, we obtain a complete reduction $\phi_{n+1}$ for $\Delta(F_{n+1})$ by replacing $F$ with $F_{n}$ and $\phi$ with $\phi_{n}$ in Theorem~\ref{TH:decomp}.
\end{proof}

The algorithmic version can be summarized in the following theorem if one can solve Problem~SE in each of the arising $\Sigma^*$-monomials $t_i$ over $(F_0,\sigma)$. Precisely this can be accomplished if the ground field $(F_0,\sigma)$ is $\sigma^*$-computable. We omit the technical details given in~\cite[Definition~1]{KS:2006} and mention only that such difference fields cover Karr's general $\Pi\Sigma^*$-fields or $\Pi\Sigma^*$-field extensions over the free difference field~\cite{KS:2006}. In this article we will only exploit a very special case of such a $\sigma^*$-computable difference field $(F_0,\sigma)$: $F_0=C$  and the field of constants $C$ is given by a rational function field over an algebraic number field. 

\begin{theorem}\label{Thm:CRAlgorithmDF}
	Let $(F_n,\sigma)$  be a $\Sigma^*$-tower over $(F_0,\sigma)$ with constant field $C$  as given in Definition~\ref{DEF:tower} and assume that $(F_0,\sigma)$ is computable, has an effective $C$-basis, and is $\phi_0$-computable w.r.t.\ a complete reduction $\phi=\phi_0: F_0 \rightarrow F_0$ for $\Delta(F_0)$. Then for all $1\leq i\leq n$, $F_i$ is $\phi_i$-computable with the complete reduction $\phi_i$ for $\Delta(F_i)$ as given in Theorem~\ref{TH:CR} if one of the following holds.
	\begin{enumerate}
	 \item One can solve Problem~SE in $F_{i-1}[t_i]$ for each $1\leq i\leq n$. 
	  \item $(F,\sigma)$ is $\sigma^*$-computable. 
	  \item $F=C$ is a rational function field over an algebraic number field.
	\end{enumerate}	
\end{theorem}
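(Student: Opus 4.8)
The plan is to establish part~(1) by induction along the tower~\eqref{EQ:tower}, feeding the algorithmic form of Theorem~\ref{TH:decomp} into the inductive step, and then to deduce parts~(2) and~(3) from part~(1) by showing that each of their hypotheses guarantees a solver for Problem~SE in every $F_{i-1}[t_i]$.

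For part~(1) I would carry up the tower, by induction on $i$, the assertion that $F_i$ is computable and $\phi_i$-computable. The base case $i=0$ is the hypothesis. For the step from $i-1$ to $i$: first, $F_i=F_{i-1}(t_i)$ is a rational function field over the computable field $F_{i-1}$, hence computable. Second, $\phi_{i-1}$-computability of $F_{i-1}$ lets one compute a first pair $(g_{t_i},\phi_{i-1}(\Delta(t_i)))$ associated to $F_{i-1}(t_i)$; effectiveness of the $\Theta_0$-canonical basis of $F_{i-1}$, which holds by Lemma~\ref{Lemma:EffectiveBasisLifting} since $F_0$ is computable and $\Theta_0$ is effective, lets one compute a second pair $(\theta_i,c_i)$ with $\ind_n(\theta_i)=\ind_n(\phi_{i-1}(\Delta(t_i)))$ via a run of Algorithm~\texttt{BasisElementForSummation}; and the set $S\subseteq M_{t_i}$ of $\sigma$-class representatives can be grown on demand as in Remark~\ref{Remark:ExpandSetS}, using the assumed Problem~SE solver in $F_{i-1}[t_i]$. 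With these data the algorithmic conclusion of Theorem~\ref{TH:decomp} yields that $(F_i,\sigma)$ is $\phi_i$-computable, where $\phi_i=\psi_{(S,\theta_i)}$ is the complete reduction furnished by Theorem~\ref{TH:CR}. This closes the induction.

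For parts~(2) and~(3) the decisive reduction is to Problem~SE, after which part~(1) finishes the argument. For~(2), I would invoke that the notion of a $\sigma^*$-computable difference field, as introduced in~\cite[Definition~1]{KS:2006}, is tailored precisely so that in any $\Sigma^*$-tower over $(F_0,\sigma)$ one can decide shift-equivalence of monic irreducible polynomials in $F_{i-1}[t_i]$ and, when it holds, compute the corresponding integer shift; hence Problem~SE is solvable at every level and part~(1) applies. For~(3), the field $C$ carries the identity automorphism, polynomial factorisation is available over a rational function field over an algebraic number field, and the remaining auxiliary problems reduce to effectively solvable problems over $C$, so $(C,\id)$ is $\sigma^*$-computable; thus~(3) is a special case of~(2).

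The step I expect to be the main obstacle is this last reduction: certifying that $\sigma^*$-computability of the ground field really does produce a Problem~SE solver in $F_{i-1}[t_i]$ at \emph{every} level of the tower, and not merely in $F_0$. This requires unwinding the algorithmic primitives packaged into~\cite[Definition~1]{KS:2006} and checking that shift-equivalence testing for polynomials over an iterated $\Sigma^*$-extension is among the problems that stay effective under adjunction of further $\Sigma^*$-monomials. By comparison, the propagation of computability and $\phi$-computability in part~(1) is routine once Theorem~\ref{TH:decomp} and Lemma~\ref{Lemma:EffectiveBasisLifting} are available.
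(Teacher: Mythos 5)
Your proposal is correct and follows essentially the same route as the paper: an induction up the tower using the algorithmic form of Theorem~\ref{TH:decomp} together with Lemma~\ref{Lemma:EffectiveBasisLifting} for the second pair, and the reduction chain $(3)\Rightarrow(2)\Rightarrow(1)$ via the cited literature on $\sigma^*$-computability. The "main obstacle" you flag (that $\sigma^*$-computability yields a Problem~SE solver at every level, not just in $F_0$) is precisely what the paper discharges by citing~\cite[Cor.~1]{KS:2006}, and $(3)\Rightarrow(2)$ by~\cite[Thm.~3.5]{Schneider:2005}.
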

\begin{proof}
If $n=0$ the statement clearly holds. Otherwise, we observe that by Lemma~\ref{Lemma:EffectiveBasisLifting} a second pair in the proof of Theorem~\ref{TH:CR} can be computed explicitly.
Moreover, Property~3 implies Property~2 by~\cite[Thm.~3.5]{Schneider:2005} and Property~2 implies Property~1 by~\cite[Cor.~1]{KS:2006}. Finally, if Property~1 holds, the construction of Theorem~\ref{TH:decomp} is algorithmic (see Algorithm \texttt{CompleteReduction}). Thus also the induction step in the proof of Theorem~\ref{TH:CR} can be carried out explicitly which proves the statement.
\end{proof}

When one applies Algorithm \texttt{CompleteReduction} to $f$ in $(F_n,\sigma)$ one enters Algorithm \texttt{\tt ReductionForPolynomials}($F_n,f_2$) in Step~3. 
There one executes in Step~1 the command \texttt{AuxiliaryReduction}($p$) and restarts in the while loop the Algorithm \texttt{CompleteReduction} in the field $F_{n-1}$ below. Similarly, one enters also the sub-algorithm \texttt{EchelonBasis} which inside executes again \texttt{CompeteReduction} in the field $F_{n-1}$ below. Summarizing, the machinery is highly recursive and a tree of reductions is generated that call several instances of \texttt{CompeteReduction} in each extension level $F_i$ for $1\leq i\leq n$.

The following technical aspects are in place.
\begin{itemize}
\item When executing {\tt ReductionForProperRationalFunctions} in $(F_i,\sigma)$ within the recursive calls, one has to take care that the set $S_i$ of representants of shift equivalent factors used for the definition of~\eqref{EQ:ProperCompl} (with $S=S_i$) is fixed during all recursive calls. As explained in Remark~\ref{Remark:ExpandSetS} we initialize these sets $S_i$ for $1\leq i\leq n$ to the empty set and append new representant to $S_i$ whenever a new element $p$ in Algorithm \texttt{ReductionForProperRationalFunctions} arises.

\item  The determination of a second pair is essential for the construction of the complementary space given in Definition~\ref{Def:ComplementSpaceForPoly}. Since the second component of the first pair is needed to compute the second pair, it is natural to fix the first pair also in advance. Thus both pairs are fixed in a preprocessing step for each arising $\Sigma^*$-monomial $t_i$. This leads to the following extra improvement.

\item Using always the same first pair of a $\Sigma^*$-monomial $t_i$, also the echelon basis induced by this first pair is fixed. In particular, one can store the basis elements for each $t_i$ in a separate list $L_i$ and can reuse the basis elements or can enlarge the list with extra basis elements with higher degrees whenever this is necessary within the recursion. 
\end{itemize}


Subsequently we make the following notational convention for the rest of this section.

\begin{convention}\label{CON:cr}
	Let $(F_n,\sigma)$  be a $\Sigma^*$-tower over $(F_0,\sigma)$ as given in \eqref{EQ:tower}, and $\phi_0$ be a complete reduction on $F_0$ for $\Delta(F_0)$.  
	Let $\Xi$ be the effective $C$-basis of $F_n$ given in~\eqref{Equ:FullCThetaBasis}. 
	For all $i$ with  $1 \le i \le n$,
	$\phi_i: F_i \rightarrow F_i$ stands for the complete reduction
	for $\Delta(F_i)$, and $\left(g_{t_i}, \phi(\Delta(t_i)) \right)$ and $\left(\theta_i, c_i\right)$ for fixed first and second pairs associated to $F_i$, respectively.
\end{convention}

\bigskip

\begin{corollary}\label{COR:ind}
	Let $f \in F_n$. 
	Then $\ind_n(\phi_n(f)) \le \ind_n(f)$. 
\end{corollary}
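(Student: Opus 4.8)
The plan is to induct on the height $n$ of the $\Sigma^*$-tower, using Corollary~\ref{COR:remainder}(1) to control what $\phi_n$ does on the ground field $F_{n-1}$, and the index-matching property of the second pair recorded in Theorem~\ref{TH:CR}. The base case $n=0$ is immediate: $\phi_0$ maps $F_0$ into $F_0$, so $\ind_0(\phi_0(f))=0=\ind_0(f)$ for every $f\in F_0$. For the inductive step I would restrict the recursive construction to the first $n-1$ levels, which by construction yields exactly $\phi_{n-1}$ with the same fixed pairs as in Convention~\ref{CON:cr}, and assume the statement holds for that shorter tower; I also note that $\ind_{n-1}$ and $\ind_n$ agree on elements of $F_{n-1}$, so hypotheses about $\ind_{n-1}$ transfer to statements about $\ind_n$.

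Now fix $f\in F_n$ and set $k=\ind_n(f)$. If $k=n$ there is nothing to prove, since $\phi_n(f)\in F_n$ forces $\ind_n(\phi_n(f))\le n=k$. So assume $k<n$; then $f\in F_k\subseteq F_{n-1}$, and applying Corollary~\ref{COR:remainder}(1) with ground field $F_{n-1}$, $\Sigma^*$-monomial $t_n$, complete reduction $\phi_{n-1}$, second pair $(\theta_n,c_n)$ and $\psi_{(S,\theta)}=\phi_n$ gives
\[
\phi_n(f)=\phi_{n-1}(f)+\tilde c\,\phi_{n-1}(\Delta(t_n)),\qquad \tilde c=-\theta_n^*(\phi_{n-1}(f))\,c_n^{-1}\in C.
\]
I would then bound the index of the two summands separately. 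The first summand lies in $F_{n-1}$, so the induction hypothesis for the sub-tower gives $\ind_n(\phi_{n-1}(f))=\ind_{n-1}(\phi_{n-1}(f))\le \ind_{n-1}(f)=k$. For the second summand: if $\tilde c=0$ it contributes nothing; if $\tilde c\neq 0$ then $\theta_n^*(\phi_{n-1}(f))\neq 0$, so $\theta_n$ occurs with nonzero coefficient in the $\Theta_0$-canonical expansion of $\phi_{n-1}(f)$, whence $\ind_n(\theta_n)\le\ind_n(\phi_{n-1}(f))\le k$ by~\eqref{Equ:IndToBasisEl}. Here the key property of the second pair from Theorem~\ref{TH:CR} enters: $\ind_n(\theta_n)=\ind_n(\phi_{n-1}(\Delta(t_n)))$, so $\ind_n(\phi_{n-1}(\Delta(t_n)))\le k$, and since $\tilde c\in C\subseteq F_0$ we get $\ind_n(\tilde c\,\phi_{n-1}(\Delta(t_n)))\le k$. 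Combining, $\phi_n(f)$ is a sum of two elements of $F_k$, hence $\ind_n(\phi_n(f))\le k=\ind_n(f)$, which closes the induction.

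The main obstacle is exactly the correction term $\tilde c\,\phi_{n-1}(\Delta(t_n))$: a priori $\Delta(t_n)=a_n$ may have index as large as $n-1$, so without further information this term could raise the index of $\phi_n(f)$ above $\ind_n(f)$. The resolution is the interplay between the vanishing of the scalar $\tilde c$ and the index-matching built into the choice of second pair (Lemma~\ref{Lemma:EffectiveBasisLifting} and Theorem~\ref{TH:CR}): the term is nonzero only when $\theta_n$ already appears inside $\phi_{n-1}(f)$, and in that case the index of $\theta_n$ — equivalently of $\phi_{n-1}(\Delta(t_n))$ — is capped by $\ind_n(\phi_{n-1}(f))$. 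Beyond this, the only care needed is the routine bookkeeping that $\ind_n$ is subadditive and is not increased by multiplication by constants, and that restricting the recursive construction to the bottom $n-1$ levels reproduces the data to which the induction hypothesis applies.
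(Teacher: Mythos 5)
Your proof is correct and follows essentially the same route as the paper's: both rest on the identity $\phi_{i+1}(f)=\phi_i(f)+\tilde c\,\phi_i(\Delta(t_{i+1}))$ from Corollary~\ref{COR:remainder}(1), the index-matching property $\ind_n(\theta_{i+1})=\ind_n(\phi_i(\Delta(t_{i+1})))$ of the second pair from Theorem~\ref{TH:CR}, and the observation via~\eqref{Equ:IndToBasisEl} that the correction term can only be nonzero when $\theta_{i+1}$ already occurs in the expansion of $\phi_i(f)$. The only cosmetic difference is that you organize this as an induction on the tower height $n$, whereas the paper iterates the same one-step estimate from level $m=\ind_n(f)$ up to level $n$.
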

\begin{proof}
	Let $m=\ind_n(f)$. Then $f\in F_m$ but $f \notin F_{m-1}$.  It follows that $\phi_m(f) \in F_m$. So $\ind_n(\phi_m(f)) \le \ind_n(f)$. By Corollary \ref{COR:remainder}.1, we have 
	\begin{equation}\label{Equ:phim+1Id}
	\phi_{m+1}(f)=\phi_m(f)+c_{m}\phi_{m}(\Delta(t_{m+1})),
	\end{equation}
	 where 
	$c_m=-\theta_{m+1}^*(\phi_m(f))/\theta_{m+1}^*(\phi_m(\Delta(t_{m+1})))$. Write $\phi_m(f)=\sum_{w_i \in \Xi} c_iw_i$ in the effective $C$-basis $\Xi$ given in Convention \ref{CON:cr}.
	If $\phi_m(f)$ is free of $\theta_{m+1}$, then $c_m=0$. So $\phi_{m+1}(f)=\phi_m(f)$, in particular $\ind_n(\phi_{m+1}(f))=\ind_n(\phi_m(f))$. Otherwise, we conclude with $\ind_n(\phi_m(f))=\max\{\ind_n(w_i)\mid c_i \neq 0 \}$ by~\eqref{Equ:IndToBasisEl} and the property that the coefficient of $\theta_{m+1}$ in $\phi_m(f)$ is nonzero that $\ind_n(\theta_{m+1})\leq \ind_n(\phi_m(f))$. With
	Theorem~\ref{TH:CR} we have $\ind_n(\phi_{m}(\Delta(t_{m+1})))=\ind_n(\theta_{m+1})$ and thus 
	$\ind_n(\phi_{m}(\Delta(t_{m+1})))\leq \ind_n(\phi_m(f))$. Finally, with~\eqref{Equ:phim+1Id} it follows that
	$$\ind_n(\phi_{m+1}(f))\le \max \{\ind_n(\phi_m(f)), \ind_n(\phi_{m}(\Delta(t_{m+1}))) \}=\ind_n(\phi_m(f)).$$ 
	So in any case we have that $\ind_n(\phi_{m+1}(f))\leq \ind_n(\phi_m(f))$. 
	Repeating the above analysis in a finite number of steps, we conclude that 
	$$\ind_n(\phi_{n}(f)) \le \ind_n(\phi_{n-1}(f)) \le \cdots \le \ind_n(\phi_{m+1}(f)) \le\ind_n(\phi_m(f)) \le \ind_n(f).$$
Note that $f\in F_m \setminus F_{m-1}$, i.e., $f$ depends on $t_m$. By construction, $\phi_m(f)\in F_m$ and thus the last inequality follows.
\end{proof}

\begin{example}\label{EX:HN}
	We try to simplify the sum 
	$$\sum_{k=1}^{n}\frac{1}{k}\sum_{j=1}^{k}\frac{H_j}{j}.$$
	Let $F_0=\bQ(x)$ with $\sigma(x)=x+1$ and take the $\Sigma^*$-tower $F_2=F_0(t_1,t_2)$ over $F_0$, where 
	$$\sigma(t_1)=t_1+\frac{1}{x+1} \quad \text{and} \quad \sigma(t_2)=t_2+\sigma\left(\frac{t_1}{x}\right)=t_2+\frac{(x+1)t_1+1}{(1+x)^2}.$$
	To $F_1$, we associate 
	$$(g_{t_1},\phi_{0}(\Delta(t_1)))=\left(\frac{1}{x}, \frac{1}{x}\right), \quad (\theta_1, c_1)=\left(\frac{1}{x},1 \right)$$
	and to $F_2$, we associate
	$$(g_{t_2},\phi_{1}(\Delta(t_2)))= \left(\frac{1+x^2t_1^2}{2x^2},\frac{1}{2x^2}\right), \quad (\theta_2,c_2)=\left( \frac{1}{x^2},\frac{1}{2}\right).$$
	
	Here we choose $f:=t_2/x$ as the element in $F_2$ that represents $\frac{1}{k}\sum_{j=1}^{k}H_j/j$. Applying Algorithm {\tt CompleteReduction} to $f$ yields the $\Sigma$-pair
	$$(g,\,r)=\left( \frac{3x^3t_1t_2-x^3t_1^3-3x^2t_2+1}{3x^3},\frac{1}{3x^3}\right)$$
	of $f$. Since $r$  is nonzero, $f$ is not summable in $F_2$. In particular, using Corollary~\ref{Cor:SigmaCheck} it follows that we can adjoin the $\Sigma^*$-monomial $t_3$ to $F_2$ with $\Delta(t_3)=r$ which gives $f=\Delta(g+t_3)$. Note that this relation is reflected by the identity
	\begin{equation}\label{EQ:identity1}
		\sum_{k=1}^{n}\frac{1}{k}\sum_{j=1}^{k}\frac{H_j}{j}=H_n \sum_{k=1}^{n} \frac{H_k}{k}-\frac{H_n^3}{3}+\frac
		{H_n^{(3)}}{3}.
	\end{equation}

For concrete problem solving (see, e.g.~\cite{Schneider:2021,PS2003,BMSS:22b}) one can often restrict to the following special case.
Let $(F_n,\sigma)$ be as above a $\Sigma^*$-tower over $(F_0,\sigma)$ with the restriction that $\Delta(t_i)\in F_0[t_1,\dots,t_{i-1}]$ for $1\leq i\leq n$, i.e., inside of sums only polynomial expressions of sums arise. Furthermore define the subring of polynomials $E_i=F_0[t_1,\dots,t_i]$. Then it follows straightforwardly that $\sigma$ restricted to $E_i$ forms a ring automorphism. In short, $(E_n,\sigma)$ is a difference ring (which in~\cite{Schn2016a,Schn2017} are also called $\Sigma^*$-ring extensions). 
Observing that $\Delta(E_i)$ is a $C$-subspace of $F_i$ we get the following connection.

\begin{corollary}
$\phi_i|_{E_i}$ is a complete reduction of $\Delta(E_i)$ on $E_i$ for $1\leq i\leq n$. 
\end{corollary}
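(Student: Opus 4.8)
The plan is to show that the restriction $\phi_i|_{E_i}$ takes values in $E_i$, kills exactly $\Delta(E_i)$, and realizes $f-\phi_i(f)\in\Delta(E_i)$ for every $f\in E_i$, so that Definition~\ref{Def:CompleteReductionVS} applies with $V=E_i$ and $U=\Delta(E_i)$. The proof proceeds by induction on $i$, with the case $i=0$ being trivial since $E_0=F_0$ and $\phi_0$ is already a complete reduction for $\Delta(F_0)=\Delta(E_0)$. For the inductive step the key observation is that, under the standing assumption $\Delta(t_j)\in F_0[t_1,\dots,t_{j-1}]=E_{j-1}$, the restriction of $\sigma$ to $E_i$ is a ring automorphism (stated in the excerpt), so $\sigma$ preserves $F_{i-1}[t_i]$ inside $F_{i-1}(t_i)$, and in particular the polynomial part $\polypart$ of an element of $E_i$ is again in $E_i$ while its proper part is $0$.

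First I would record the crucial fact that, on a polynomial input, the complete reduction $\psi_{(S,\theta)}$ of Theorem~\ref{TH:decomp} stays within the polynomial world: this is precisely Corollary~\ref{COR:remainder}(2), which says $\psi_{(S,\theta)}(f)\in V_\theta$ for $f\in F[t]$, and the accompanying construction in the proof of Proposition~\ref{PROP:complement} produces $q\in F[t]$ with $f=\Delta(q)+v$. Applying this with $F=F_{i-1}$, $t=t_i$ and with the inductively built complete reduction $\phi=\phi_{i-1}$ on $F_{i-1}$: for $f\in E_i=E_{i-1}[t_i]\subseteq F_{i-1}[t_i]$, Algorithm \texttt{ReductionForPolynomials} returns $(q,v)$ with $q\in F_{i-1}[t_i]$, $v\in V_{\theta_i}$, and $f=\Delta(q)+v$. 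The content I must verify is that $q$ and $v$ actually lie in $E_i$, not merely in $F_{i-1}[t_i]$; for this I would trace through the algorithm and use the induction hypothesis that $\phi_{i-1}$ restricts to a complete reduction on $E_{i-1}$, so that every $\Sigma$-pair $(q_d,r_d)$ of a coefficient $p_d\in E_{i-1}$ produced by \texttt{CompleteReduction} in $F_{i-1}$ can be taken with $q_d,r_d\in E_{i-1}$, whence \texttt{AuxiliaryReduction} returns a pair in $E_{i-1}[t_i]\times(A\cap E_i)$, and similarly the echelon-basis elements $w_\ell$ and $\Delta(w_\ell)$ lie in $E_i$ because the first pair $(g_{t_i},\phi_{i-1}(\Delta(t_i)))$ can be chosen with $g_{t_i}\in E_{i-1}$ (as $\Delta(t_i)\in E_{i-1}$). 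Thus $f=\Delta(q)+\phi_i(f)$ with $q\in E_i$ and $\phi_i(f)=v\in V_{\theta_i}\cap E_i\subseteq E_i$, giving $f-\phi_i(f)\in\Delta(E_i)$.

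It remains to identify the kernel. One inclusion is clear: if $f\in\Delta(E_i)\subseteq\Delta(F_i)=\ker(\phi_i)$ then $\phi_i(f)=0$. Conversely, if $f\in E_i$ and $\phi_i(f)=0$, then $f=\Delta(q)$ with $q\in E_i$ by the previous paragraph, so $f\in\Delta(E_i)$. Hence $\ker(\phi_i|_{E_i})=\Delta(E_i)$, and combined with $f-\phi_i(f)\in\Delta(E_i)$ and the $C$-linearity of $\phi_i$ this shows $\phi_i|_{E_i}$ is a complete reduction for $\Delta(E_i)$ on $E_i$. I expect the main obstacle to be exactly the bookkeeping in the middle step: confirming that every auxiliary object entering Algorithm \texttt{CompleteReduction} on a polynomial input — the $\Sigma$-pairs of coefficients, the first/second pairs, the echelon basis, the proper-part reduction (which is vacuous here since $\proppart(f)=0$) — can be kept inside the polynomial ring $E_i$ rather than leaking into $F_{i-1}(t_i)$. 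This is where one must invoke the induction hypothesis on $\phi_{i-1}$ together with the hypothesis $\Delta(t_i)\in E_{i-1}$, and nothing deeper is needed.
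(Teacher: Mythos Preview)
Your approach is correct and differs from the paper's proof in one substantive place. Both arguments use induction on $i$ (the paper phrases it as a minimal counterexample) and both establish $\phi_i(f)\in E_i$ by re-running the proof of Proposition~\ref{PROP:auxiliarySpace} with $E_{i-1}$ in place of $F_{i-1}$, which is legitimate because the induction hypothesis gives $\Sigma$-pairs in $E_{i-1}$ for coefficients in $E_{i-1}$. Where you diverge is in the $g$-part: you continue to trace the algorithm, checking that the fixed first pair satisfies $g_{t_i}\in E_{i-1}$ (this is forced: any two first pairs differ by a constant in $C\subseteq E_{i-1}$, and one choice lies in $E_{i-1}$ by induction), hence the echelon basis $\{w_\ell,\Delta(w_\ell)\}$ sits in $E_i$, and therefore the $q$ returned by \texttt{ReductionForPolynomials} is in $E_i$. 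The paper instead short-circuits this bookkeeping by invoking an external structural result \cite[Thm.~2.7]{Schneider:2010}: once $\phi_i(f)\in E_i$ is known, $\Delta(g)=f-\phi_i(f)\in E_i$ with $g\in F_i$ already forces $g\in E_i$.

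Your route is more self-contained and illuminates exactly why the polynomial restriction is preserved step by step; the paper's route is shorter but imports a nontrivial closure property of $\Sigma^*$-ring extensions. One small wording point: rather than saying the first pair ``can be chosen'' in $E_{i-1}$, you should note that the \emph{fixed} first pair of Convention~\ref{CON:cr} necessarily lies in $E_{i-1}$, since it differs from a pair in $E_{i-1}$ only by a constant; otherwise the echelon basis you use might not be the one defining $\phi_i$.
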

\begin{proof}
Suppose that there is an $i$ with $1\leq i\leq n$ such that $\phi'_i:=\phi_i|_{E_i}$ is not a complete reduction of $\Delta(E_i)$ on $E_i$. Let $i\geq1$ be minimal with this property. Then there is an $f\in E_i$ such that $\phi'_i(f)\notin E_i$ or we have $\phi'_i(f)\in E_i$ but there is no $g\in E_i$ with $f=\Delta(g)+\phi'_i(f)$. 
The minimality of $i$ implies $\im(\phi'_{i-1})=\im(\phi_{i-1}|_{E_{i-1}})\subseteq E_{i-1}$. For each $p\in E_i$, there is $q \in E_i$ such that $p-\Delta(q)\in \bigoplus_{i\in \bN} \im(\phi_{i-1}')t^i$ by replacing $F$ with $E_{i-1}$, $t$ with $t_i$ and $\phi$ with $\phi_{i-1}$ in Proposition \ref{PROP:auxiliarySpace}. So
 we conclude that $\phi'_i(f)\in \bigoplus_{i\in \bN} \im(\phi_{i-1}')t^i$ by Corollary~\ref{COR:remainder}.2. Thus $\phi'_i(f)\in E_i$ and it follows that $g\in F_i\setminus E_i$ such that $f=\Delta(g)+\phi'_i(f)$. However, $\Delta(g)=f-\phi'_i(f)\in E_i$. With~\cite[Thm.~2.7]{Schneider:2010} we conclude that $g\in E_i$, a contradiction. 
\end{proof}


Even more, one can streamline Algorithm~\texttt{CompleteReduction} for such a difference ring $(E_n,\sigma)$ by observing that execution {\tt ReductionForProperRationalFunctions} is obsolete and one simply sets $(g,h)=(0,0)$. This follows by induction on $n$. Namely, suppose that the modified algorithm is correct with less than $n$ $\Sigma^*$-monomials and executes the modified algorithm \texttt{CompleteReduction} in $E_n$. Then by Corollary~\ref{COR:remainder}.2 we may set $(g,h)=(0,0)$. Further we observe that when entering the steps for \texttt{ReductionForPolynomials} one only carries out operations in the difference ring $(E_{n},\sigma)$ leading only to expressions in $E_n$. Thus whenever one calls \texttt{CompeteReduction} in $E_{n-1}$ (in \text{AuxilaryReduction} and \text{EchelonBasis}) the argument is again a polynomial in $E_{n-1}$. Thus by induction one never deals with proper rational function contributions in $(F_{i-1}(t_i))_{(r)}$ and the claim is certified. As a consequence, solving Problem~SE is obsolete and Theorem~\ref{TH:decomp} simplifies as follows.

\begin{corollary}\label{Cor:SringExt}
Let $(E_n,\sigma)$  be a tower of $\Sigma^*$-ring extensions over $(F_0,\sigma)$ with constant field $C$  as given above and assume that $(F_0,\sigma)$ is computable, has an effective $C$-basis, and is $\phi_0$-computable w.r.t.\ a complete reduction $\phi_0$ for $\Delta(F_0)$ on $F_0$. Then for all $1\leq i\leq n$, $E_i$ is $\phi_i$-computable with the complete reduction $\phi_i$ for $\Delta(E_i)$ as given in Theorem~\ref{TH:CR}.
\end{corollary}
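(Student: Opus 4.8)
This statement is the algorithmic counterpart of the fact established just above, namely that $\phi_i|_{E_i}$ is a complete reduction of $\Delta(E_i)$ on $E_i$, together with the streamlining of Algorithm~\texttt{CompleteReduction} for difference rings discussed just before the corollary. The plan is to prove, by induction on $i$, that $E_i$ is computable and that the streamlined version of Algorithm~\texttt{CompleteReduction} --- the one in which the call to \texttt{ReductionForProperRationalFunctions} is suppressed and $(g,h)\leftarrow(0,0)$ is used instead --- computes, for every $f\in E_i$, a $\Sigma$-pair $(g,\phi_i(f))\in E_i^2$. Since this version never calls \texttt{ReductionForProperRationalFunctions}, Problem~SE is never needed, which is exactly what lets us drop the extra hypotheses of Theorem~\ref{Thm:CRAlgorithmDF}.

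In the base case $i=0$ there is nothing to do, as $E_0=F_0$ is computable and $\phi_0$-computable by assumption. In the inductive step I would assume that $E_{i-1}$ is computable and $\phi_{i-1}$-computable via the streamlined algorithm, take $f\in E_i=E_{i-1}[t_i]$, and proceed as follows. Since $f$ is a polynomial in $t_i$, $\proppart(f)=0$, so the suppressed proper-rational step legitimately contributes $(0,0)$ and it remains to run \texttt{ReductionForPolynomials} on $f$ relative to the $\Sigma^*$-monomial $t_i$ over $(F_{i-1},\sigma)$. This needs, first, a first pair $(g_{t_i},\phi_{i-1}(\Delta(t_i)))$ associated to $F_i$: in the difference-ring case $\Delta(t_i)=a_i\in E_{i-1}$, so the induction hypothesis applied to $a_i$ provides such a pair with both components in $E_{i-1}$. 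It needs, second, a second pair $(\theta_i,c_i)$; since $F_0$ has an effective $C$-basis, Lemma~\ref{Lemma:EffectiveBasisLifting} (used as in the proof of Theorem~\ref{Thm:CRAlgorithmDF}) produces an effective $\theta_i$ for $\phi_{i-1}(\Delta(t_i))$ together with $c_i=\theta_i^*(\phi_{i-1}(\Delta(t_i)))\in C^\times$.

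With these pairs fixed, \texttt{ReductionForPolynomials} triggers \texttt{AuxiliaryReduction} and \texttt{EchelonBasis}, both of which amount to repeated calls of \texttt{CompleteReduction} in the ground field $F_{i-1}$. I would then verify that every argument of such a call is in fact a polynomial in $E_{i-1}$: the coefficients of $f$ passed to \texttt{AuxiliaryReduction} lie in $E_{i-1}$, and inside \texttt{EchelonBasis} the auxiliary polynomials $t_i^{j+1}/(j+1)-g_{t_i}t_i^j$ and $\Delta\bigl(t_i^{j+1}/(j+1)-g_{t_i}t_i^j\bigr)-\phi_{i-1}(\Delta(t_i))\,t_i^j$ lie in $E_{i-1}[t_i]$, because $g_{t_i},\phi_{i-1}(\Delta(t_i))\in E_{i-1}$ and $\sigma(t_i)=t_i+a_i$ with $a_i\in E_{i-1}$; hence only $E_{i-1}$-coefficients are ever forwarded downward. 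By the induction hypothesis each such subcall terminates and returns a $\Sigma$-pair in $E_{i-1}^2$. Assembling these outputs exactly as in the proofs of Proposition~\ref{PROP:auxiliarySpace}, Proposition~\ref{PROP:complement} and Theorem~\ref{TH:decomp} then yields $g\in E_i$, and by Corollary~\ref{COR:remainder}.2 the remainder satisfies $\phi_i(f)\in V_{\theta_i}$; moreover every basis contribution produced along the way lies in $\bigoplus_{j}\bigl(\im(\phi_{i-1})\cap\ker(\theta_i^*)\bigr)\,t_i^j\subseteq E_i$. Thus $(g,\phi_i(f))\in E_i^2$ is a $\Sigma$-pair of $f$, which completes the induction.

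The step I expect to be the main obstacle --- and the one I would write out most carefully --- is the claim that the recursion never escapes the polynomial rings, so that \texttt{ReductionForProperRationalFunctions}, and with it Problem~SE, is genuinely never invoked. This rests on Corollary~\ref{COR:remainder}.2 (which permits the proper-rational part to be taken as $0$ at every level) together with the syntactic observation that \texttt{ReductionForPolynomials} produces only polynomial outputs and forwards only polynomial inputs to its subcalls, which in turn hinges on $\Delta(t_i)\in E_{i-1}$. Everything else --- termination and correctness of the returned $\Sigma$-pair --- is inherited unchanged from the analysis of Algorithm~\texttt{CompleteReduction} behind Theorem~\ref{TH:decomp} together with the induction hypothesis, so no further work is required there.
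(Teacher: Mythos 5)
Your proof is correct and takes essentially the same route as the paper: induction on the tower height, Corollary~\ref{COR:remainder}.2 to justify suppressing the proper-rational step, and the syntactic observation that \texttt{ReductionForPolynomials} (via \texttt{AuxiliaryReduction} and \texttt{EchelonBasis}) only ever forwards elements of $E_{i-1}$ to the recursive calls, so Problem~SE is never invoked; you merely spell out a few details (the first/second pairs, the auxiliary polynomials) that the paper leaves implicit. The one harmless imprecision is the literal inclusion $\bigoplus_{j}\bigl(\im(\phi_{i-1})\cap\ker(\theta_i^*)\bigr)t_i^j\subseteq E_i$, which is false for the full space since $\im(\phi_{i-1})$ on the field $F_{i-1}$ contains non-polynomial elements; what you actually need (and what your argument delivers) is that the remainders produced for polynomial inputs lie in $\im(\phi_{i-1}|_{E_{i-1}})\subseteq E_{i-1}$, as in the corollary preceding this one.
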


\subsubsection*{Implementation aspects}

We have implemented  Algorithm {\tt CompleteReduction}~({\tt CR}) in the computer algebra system Mathematica using subroutines of {\tt Sigma} in the setting of $\Pi\Sigma^*$-fields to compute, e.g., the $\sigma$-factorization needed in \texttt{ReductionForProperRationalFunctions}. In the following we will compare it with the function {\tt RefinedTelescoping}~({\tt RT}) in the summation package {\tt Sigma} that implements the algorithms given in~\cite{Schn2008,Schn2007,Schneider:2015,Schn2016a}.
In our case study we take the $\Sigma^*$-tower $(F_2,\sigma)$ over $(\bQ,\id)$ with $F_2=\bQ(x)(t_1)(t_2)$, where
$$\sigma(x)=x+1, \quad \sigma(t_1)=t_1+\frac{1}{x+1},\quad \text{and} \quad \sigma(t_2)=t_2+\frac{1}{(x+1)^2}.$$
Note that $t_1$ and $t_2$ represent $H_n$ and $H_n^{(2)}$, respectively.

In each test suite we chose for $i\in\bN$ three random polynomials $p_i\in\bQ(x)[t_1,t_2]$ where the selected generators have total degree $i$ (details are given below) and measured the average time to solve the telescoping equation $\Delta(g)=f_i$ for the given summand $f_i=\Delta(p_i)\in \bQ(x)[t_1,t_2]$, i.e., the time needed to reconstruct $p_i$ from $f_i$.
More precisely, we applied for the generators $t_2$ and $t_1$ to the simplified version given in Corollary~\ref{Cor:SringExt} (i.e., {\tt ReductionForProperRationalFunctions} is not used), but used the full machinery of {\tt CR} in the difference field $(\bQ(x),\sigma)$ which reduces to Abramov's algorithm~\cite{Abra1975}; compare Corollary~\ref{COR:ADRational}. Similarly, we exploit other refined algorithms in \texttt{Sigma} for this special structure when applying {\tt RT}.

All timings are measured in seconds on a computer with Linux, CPU 3.00 GHZ, Intel Core i7-9700, 32G memory. The average timings are summarized as follows.\\
$\bullet$ In the first suite, we generated randomly $p_i \in \bQ[x,t_1,t_2]$ where the total degree of all three variables is $i$.
\begin{center}
	\begin{tabular}{|c|c|c|c|c|c|c|c|c|c|} \hline
		$i$ & $10$ & $15$ &  $20$ & $25$ & $30$ & $35$ & $40$ & $45$ & $50$   \\ \hline
		{\tt RT}  & ~$1.74$~& ~$4.55$~ &  ~$11.82$~  & ~$23.73$~  &  ~$66.96$~  &  ~$143.87$~ &  ~$207.23$~&~$368.09$~&~$547.52$~ \\ \hline
		{\tt CR}  &  1.19 &  1.69 &   3.68 &   3.04 &   6.19  &   13.40  &  18.03 & 32.04 & 49.57  \\ \hline
	\end{tabular}
\end{center}
$\bullet$ In the second suite, we picked polynomials $p_i \in \bQ(x)[t_1,t_2]$ where the total degree of the polynomials in $t_1,t_2$ is $i$ and the coefficients are quotients of random polynomials in $\bQ[x]$ with degree $5$.
\begin{center}
	\begin{tabular}{|c|c|c|c|c|c|c|c|c|} \hline
		$i$  & $11$ & $12$ & $13$ &  $14$ & $15$ & $16$ & $17$ & $18$   \\ \hline
		{\tt RT} & ~428.15~ & ~650.26~ & ~1428.50~  & ~1928.40~  & ~4191.05~ & ~6650.85~ &  ~$>36000$~ &  ~$>36000$~\\ \hline
		{\tt CR}  & 59.19 & 81.68   &  127.28 &   163.28 &   303.58 &   381.82  &   431.12  &   681.55  \\ \hline
	\end{tabular}
\end{center}

The timings show that {\tt CR} outperforms {\tt RT} in a small $\Sigma^*$-tower where the input polynomials have high degrees. At this point one may remark that \texttt{Sigma}'s algorithms are tailored for big $\Pi\Sigma^*$-towers with up to 100 generators (where the degrees of the $\Sigma^*$-monomials in the summand are not large) within the setting of depth-optimal $\Pi\Sigma^*$-extensions~\cite{Schn2008}. Further case studies will be necessary (with further extensions of our new approach) to be able to compare also such scenarios fairly.


\end{example}

\subsection{Parameterized telescoping}\label{Sec:PT}

	A key problem of symbolic summation is the task to solve parameterized telescoping equations. In the difference setting this can be formalized as follows. Given $f_1,\dots,f_m \in F_n$, find $c_1,\dots,c_m\in C$, not all zero, and $g\in F_n$ such that
	\begin{equation} \label{EQ:PT}
		c_1f_1+\cdots+c_nf_m =\Delta(g),
	\end{equation}
	holds; it reduces to the telescoping problem for the special case $m=1$, but also covers Zeilberger's creative telescoping paradigm~\cite{Zeil1990b,Zeil1991,PWZ1996} in the setting of difference fields~\cite{Schn2001,PS2003}. More generally,	
	define for $\vf=(f_1,\cdots,f_m) \in F_n^{m}$ the solution set~\cite{Karr1981}	
	$$V(\vf,F_n)=\{(c_1,\cdots,c_m,g)\in C^m \times F_n\mid c_1f_1+\cdots+c_nf_m =\Delta(g)\}.$$
	Since this set forms a $C$-linear space with dimension no more than $m+1$ (see~\cite[Cor.~3.1.1]{Schn2001}), the task to solve the parameterized telescoping problem is covered by computing a basis of $V(\vf,F_n)$. In~\cite{PS2003,Schn2005,Schn2008,Schneider:2015,Schn2017} various improvements have been elaborated how the parameterized telescoping can be tackled in the difference ring and field approach.
	
	As illustrated in~\cite{BCCLX2013,CKK2016,CDK2021,BCPS2018,CDK2023}
in symbolic integration, reduction based methods can be used to significantly speed up the task to solve the parameterized telescoping problem. Using our reduction based approach for $\Sigma^*$-extensions these ideas can be exploited as follows.

	For $1\leq i\leq m$, we compute the $\Sigma$-pairs $(g_i,\phi_n(f_i))$ of $f$ w.r.t.\ the complete reduction $\phi_n$, i.e., we get
	$$f_i=\Delta(g_i)+\phi_n(f_i).$$	
	Now the crucial observation is that applying $\phi_n$ to~\eqref{EQ:PT} yields 
	\begin{equation}\label{Equ:CConstraintForPT}
	c_1 \phi_n(f_1)+\cdots+c_m\phi_n(f_m)=0
	\end{equation} 
	for the given $\phi_n(f_i)$. This leads to a linear algebraic system $L$ in $c_1, \ldots, c_m$ and the task to determine the $(c_1,\dots,c_m)\in C^m$ reduces to the problem to solve the system $L$ by linear algebra methods. Suppose that we get the $C$-basis  
	$B:=\{(c_{i,1},\cdots,c_{i,m})\mid 1\le i \le j\}$ of $L$ for some $1\leq j\le m$. Then this leads immediately to the desired $C$-basis 
	$$\{(0,\dots,0,1)\}\cup\{(c_{i,1},\cdots,c_{i,m},w_i)\mid 1\le i \le j\}$$
	of $V(\vf,F_n)$ with
	$$w_i=\sum_{\ell=1}^{m}c_{i,\ell}g_{\ell}\in F_n.$$ 
	We remark that for the standard approach in difference rings and fields~\cite{PS2003,Schn2005,Schn2008,Schneider:2015,Schn2017} 
	it is often a challenge to compute the expressions $w_i\in F_n$; they usually blow up if one brings them over a common denominator. Using complete reductions as a preprocessing step to compute the $\Sigma$-pairs  $(g_i,\phi_n(f_i))$ and to solve afterward the underlying linear system~\eqref{Equ:CConstraintForPT} to determine the $c_i$, makes the reduction based methods so interesting and will be explored further in future investigations.
 	
	\begin{example}\label{EX:PT}
		For the $\Sigma^*$-tower $F_2$ over $F_0$ given in Example \ref{EX:HN}, let $\vf=(f_1,f_2,f_3) \in F_2^{3}$, where
		$$ f_1=\frac{1+t_1-t_2-xt_2}{(1+t_1)(1+x)}, \quad f_2=\frac{xt_1+t_1-x}{(xt_1+t_1+x)t_1} \quad \text{and} \quad f_3=\frac{3t_2}{1+t_1}.$$
		With the Algorithm {\tt CompleteReduction} we compute $\Sigma$-pairs 
		$$\left( t_1,\, -\frac{t_2}{t_1+1}\right), \quad \left( \frac{x}{t_1},\, 0 \right), \quad \text{and} \quad \left( 0,\, \frac{3t_2}{t_1+1}\right)$$
		of $f_1$, $f_2$ and $f_3$ w.r.t.\ $\phi_2$, respectively.  Thus there exist $c_1,c_2,c_3 \in C$ such that 
		$$c_1f_1+c_2f_2+c_3f_3=\Delta(g)$$ 
		for some $g\in F_2$ if and only if 
		$$c_1\frac{-t_2}{(1+t_1)}+c_2\frac{3t_2}{(1+t_1)}=0.$$ 
		From the linear system we compute the $C$-basis $\{(3,0,1),\,(0,1,0)\}$. Therefore, $\left\{\left(0,0,0, 1\right), \left(3,0,1, 3t_1\right),\,\left(0,1,0,x/t_1\right) \right\}$ is a $C$-basis of $V(\vf,F_2)$.
	\end{example}
	
\begin{example}	
Given $f(n,k)=\frac{H_k}{n-k+1}$,
we are looking for $c_1,c_2,c_3\in\bQ(n)$ and $g(n,k)$ such that the creative telescoping equation
\begin{equation}\label{Equ:CreaSol}
c_1\,f(n,k)+c_2\,f(n+1,k)+c_3\,f(n+2,k)=g(n,k+1)-g(n,k)
\end{equation}
holds for all $n\geq0$ and $0\leq k\leq n$; the left-hand side is also called a telescoper and $g(n,k)$ is called the certificate.
	For this task, we take the rational function field $C=\bQ(n)$ and construct the $\Sigma^*$-tower $(F_2,\sigma)$ over $(C,\id)$ with $F_2=C(x)(t_1)$ where $\sigma(x)=x+1$ and $\sigma(t_1)=t_1+1/(x+1)$. In $F_2$ we can represent $f(n,k)$ by
	$f_1=t_1/(n-x+1)$, $f(n+1,k)$ by $f_2=t_1/(n-x+2)$ and $f(n+2,k)$ by $f_3=t_1/(n-x+3)$.
	Let $\phi$ be the complete reduction for $\Delta(F_2)$ on $F_2$. Since $f_1\in C(x)[t_1]$ whose the coefficient w.r.t. $t_1$ is $\sigma$-simple and free of $1/x$, we have $\phi(f_1)=f_1$ and we get the $\Sigma$-pair $(g_1,\phi(f_1))=(0,f_1)$. Applying Algorithm {\tt CompleteReduction} to $f_2$ and $f_3$ produces the $\Sigma$-pairs 
	\begin{align*}
		(g_2,\phi(f_2)) &=\left( -\tfrac{xt_1}{(n+2)(n-x+2)},\,\tfrac{t_1}{n-x+1}+\tfrac{1}{(n+2) (n-x+1)}\right),\\
		(g_3, \phi(f_3))&=\left(\tfrac{t_1}{-n+x-3}+\tfrac{n t_1+3 t_1+1}{(n+3) (-n+x-2)}+\tfrac{(2 n+5) t_1}{(n+2) (n+3)},\tfrac{t_1}{n-x+1}+\tfrac{2 n+5}{(n+2) (n+3) (n-x+1)}\right).
	\end{align*}
	By linear system solving we find $c_1=-n-2,\,c_2=2n+5,\, c_3=-n-3$ such that 
	$$c_1\phi(f_1)+c_2\phi(f_2)+c_3\phi(f_3)=0.$$
	Taking $g=c_1g_1+c_2g_2+c_3g_3=\frac{n t_1+2 t_1-1}{x-n-2}-\frac{n t_1+3 t_1}{x-n-3}$ solves~\eqref{EQ:PT} with $m=3$. Reinterpreting the $g$ as $g(n,k)$ in terms of our given summation objects provides a solution of~\eqref{Equ:CreaSol} and summing this equation over $k$ from $0$ to $n$ and taking care of compensating terms we obtain the linear recurrence	
	$$(-n-2)S(n)+(2n+5)S(n+1)+(-n-3)S(n+2)=-\tfrac{2}{n+2}$$
	for the sum $S(n)=\sum_{k=1}^nf(n,k)$.
\end{example}

\subsection{Well generated and reduced towers of $\Sigma^*$-extensions}\label{SUBSECT:wgt}
Following~\cite{Karr1981} a tower of $\Sigma^*$-extensions $(F_n,\sigma)$ over $(F_0,\sigma)$ with~\eqref{EQ:tower} is said to be {\em reduced over $F_0$}, if for each $\Sigma^*$-monomial $t_i$  with $1 \le i \le n$ the following property holds: if there exist $g_i\in F_{i-1}$ and $r_i\in F_0$ such that  
\begin{equation}\label{Equ:RefinedTeleReduced}
\Delta(t_i)=\Delta(g_i)+r_i,  
\end{equation}
then $\Delta(t_i)\in F_0$. 
Reduced $\Sigma^*$-towers can be useful to find closed forms of given sums by using the following result;
see~\cite{Karr1981} (compare also \cite[Theorem 4.2.1]{Schn2001} and \cite[Theorem 9]{Schn2010}). 

\begin{theorem}\label{Thm:KarrFT}
	(Karr's fundamental theorem)~Let $(F_n,\sigma)$ be a reduced $\Sigma^*$-tower over $(F_0,\sigma)$. Assume that $f\in F_0$ and $g\in F_n$ such that $f=\Delta(g)$. Then
	$$g=\sum_{i\in S}c_it_i+w$$
	where $w\in F_0$, $c_i\in C$ and 
	$$S=\{ i\mid \Delta(t_i) \in F_0\}.$$
\end{theorem}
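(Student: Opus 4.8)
The plan is to prove the statement by induction on the height $n$ of the $\Sigma^*$-tower. The base case $n=0$ is trivial: $g\in F_0$, so $g=w$ with $w\in F_0$ and the sum over $S=\emptyset$ is empty. For the inductive step assume the theorem holds for reduced $\Sigma^*$-towers of height $n-1$; note that $(F_{n-1},\sigma)$ is itself a reduced $\Sigma^*$-tower over $(F_0,\sigma)$, since the defining property~\eqref{Equ:RefinedTeleReduced} for the monomials $t_1,\dots,t_{n-1}$ is simply inherited. So let $f\in F_0$ and $g\in F_n=F_{n-1}(t_n)$ with $f=\Delta(g)$.

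First I would show that $g$ already lies in $F_{n-1}[t_n]$. Splitting $g=\polypart(g)+\proppart(g)$ according to~\eqref{Equ:PolyFracSum} (with $t=t_n$) and applying Lemma~\ref{LM:closure}, the image $\Delta(g)$ splits accordingly; since $f\in F_{n-1}$ is $t_n$-free, the $t_n$-proper part $\Delta(\proppart(g))=\proppart(f)=0$ vanishes, so $\proppart(g)\in\ker(\Delta|_{F_n})=C$. As a nonzero constant lies in $F_0\subseteq F_{n-1}$ and is therefore not $t_n$-proper, we get $\proppart(g)=0$, i.e.\ $g\in F_{n-1}[t_n]$. Next I would bound the degree: writing $g=g_dt_n^d+\dots+g_0$ with $g_j\in F_{n-1}$, $g_d\neq0$, and recalling $\sigma(t_n)=t_n+a_n$ with $a_n=\Delta(t_n)\in F_{n-1}$, a direct expansion of $\sigma(g)-g$ shows that the coefficient of $t_n^d$ in $\Delta(g)$ is $\Delta(g_d)$ and (for $d\ge1$) that of $t_n^{d-1}$ is $d\,a_n\,\sigma(g_d)+\Delta(g_{d-1})$. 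Since $\Delta(g)=f\in F_{n-1}$ kills all positive powers of $t_n$, we get $\Delta(g_d)=0$, so $g_d\in C$; and if $d\ge2$ then $d\,g_d\,a_n+\Delta(g_{d-1})=0$ forces $a_n=\Delta\!\left(-g_{d-1}/(d\,g_d)\right)\in\Delta(F_{n-1})$, contradicting $\Delta(t_n)\notin\Delta(F_{n-1})$ from Theorem~\ref{Thm:SigmaKarr} (as $t_n$ is a $\Sigma^*$-monomial over $F_{n-1}$). Hence $\deg_{t_n}(g)\le1$ and $g=c_n\,t_n+g_0$ with $c_n\in C$, $g_0\in F_{n-1}$.

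Finally comes the case distinction. If $c_n=0$, then $g=g_0\in F_{n-1}$ with $\Delta(g_0)=f\in F_0$, and the induction hypothesis applied to the tower $(F_{n-1},\sigma)$ gives $g_0=\sum_{i\in S'}c_it_i+w$ with $S'=\{1\le i\le n-1\mid\Delta(t_i)\in F_0\}\subseteq S$ and $w\in F_0$, which is already of the asserted form (attach coefficient $0$ to any missing $t_i$). If $c_n\neq0$, then by the Leibniz rule~\eqref{Equ:Leibnitzrule} and $c_n\in C$ we have $\Delta(g)=c_n\,a_n+\Delta(g_0)=f$, hence
\[
\Delta(t_n)=a_n=\frac{f}{c_n}-\Delta\!\left(\frac{g_0}{c_n}\right)
\]
with $f/c_n\in F_0$ and $g_0/c_n\in F_{n-1}$; by the definition of a reduced tower this forces $\Delta(t_n)\in F_0$, i.e.\ $n\in S$. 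Then $\Delta(g_0)=f-c_n\,a_n\in F_0$, so the induction hypothesis yields $g_0=\sum_{i\in S''}c_it_i+w$ with $S''=\{1\le i\le n-1\mid\Delta(t_i)\in F_0\}$ and $w\in F_0$; adding back $c_n\,t_n$ and using $S=S''\cup\{n\}$ finishes the induction.

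I expect the degree-bound step to be the main obstacle, since it requires combining the explicit $t_n$-expansion of $\Delta(g)$ with the $\Sigma^*$-property of $t_n$ from Theorem~\ref{Thm:SigmaKarr}; the reduced-tower hypothesis then enters exactly in the case $c_n\neq0$ to certify which monomials $t_i$ occur in $g$.
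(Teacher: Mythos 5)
Your proof is correct. Note that the paper does not actually prove Theorem~\ref{Thm:KarrFT} — it only cites Karr~1981 and Schneider's thesis — so there is no in-paper argument to compare against; your induction is the standard one from that literature and every step checks out: the reduction to $g\in F_{n-1}[t_n]$ via the direct sum~\eqref{Equ:PolyFracSum} and Lemma~\ref{LM:closure} together with $\ker(\Delta)=C$, the degree bound $\deg_{t_n}(g)\le 1$ with leading coefficient in $C$ obtained by comparing the $t_n^d$ and $t_n^{d-1}$ coefficients and invoking $\Delta(t_n)\notin\Delta(F_{n-1})$ from Theorem~\ref{Thm:SigmaKarr} (this is the content of the cited \cite[Theorem~14]{Karr1981} that the paper itself uses in Lemma~\ref{LM:lc}), and the observation that $(F_{n-1},\sigma)$ inherits reducedness so the induction hypothesis applies. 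The reducedness hypothesis is deployed exactly where it must be, namely to conclude $\Delta(t_n)\in F_0$ from $\Delta(t_n)=\Delta(-g_0/c_n)+f/c_n$ when $c_n\neq0$, and the bookkeeping of the index set $S$ in both cases is fine.
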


Given a $\Sigma^*$-tower, there are algorithms available in~\cite{Schn2004,Schneider:2015}, to decide algorithmically if for given $\Delta(t_i)\in F_n$ there exist $g_i\in F_n$ and $r_i\in F_0$ such that~\eqref{Equ:RefinedTeleReduced} holds. In particular, this problem has been solved in~\cite[Thm.~6.1]{Schn2007} using partially reduction-based ideas.
Then given any of these algorithms and using \cite[Algorithm 1]{Schn2010}, one can compute a reduced $\Sigma^*$-tower which is isomorphic to the input tower. These algorithms rely on recursive reductions where one has to solve parameterized telescoping problems in the fields below. 
With our reduction based methods presented above we can solve this problem efficiently without solving explicitly any linear difference equation in the subfields below. 

In the following we make this result more precise.
First we refine $\Sigma^*$-towers using the construction of complete reductions and show afterwards that they are reduced.

\begin{definition}
	Let $(F_n,\sigma)$ with $F_n:=F_0(t_1,\cdots,t_n)$ be a $\Sigma^*$-tower over $(F_0,\sigma)$. If $\Delta(t_i)\in\im(\phi_{i-1})$ for each $i$ with $1\le i\le n$, then  $F_n$ is said to be {\em well generated}.
\end{definition}

\begin{lemma}
	Any well generated $\Sigma^*$-tower $(F_n,\sigma)$ over $(F_0,\sigma)$ is reduced over $F_0$.
\end{lemma}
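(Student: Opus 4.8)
The plan is to unwind the definitions and use the characterization of $\im(\phi_{i-1})$ as a complementary space of $\Delta(F_{i-1})$. Suppose $(F_n,\sigma)$ is well generated, so $\Delta(t_i)\in\im(\phi_{i-1})$ for each $i$ with $1\le i\le n$. I want to show that $(F_n,\sigma)$ is reduced over $F_0$, i.e., for each $i$, whenever there exist $g_i\in F_{i-1}$ and $r_i\in F_0$ with $\Delta(t_i)=\Delta(g_i)+r_i$, then $\Delta(t_i)\in F_0$.

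First I would fix such an $i$ and suppose $\Delta(t_i)=\Delta(g_i)+r_i$ with $g_i\in F_{i-1}$ and $r_i\in F_0$. Since $\phi_{i-1}$ is a complete reduction for $\Delta(F_{i-1})$ on $F_{i-1}$, we have the direct sum $F_{i-1}=\Delta(F_{i-1})\oplus\im(\phi_{i-1})$, and $\phi_{i-1}$ is the projection onto $\im(\phi_{i-1})$ along $\Delta(F_{i-1})$. Applying $\phi_{i-1}$ to the equation $\Delta(t_i)=\Delta(g_i)+r_i$ and using that $\Delta(g_i)\in\Delta(F_{i-1})=\ker(\phi_{i-1})$, I get $\phi_{i-1}(\Delta(t_i))=\phi_{i-1}(r_i)$. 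But by the well-generated hypothesis $\Delta(t_i)\in\im(\phi_{i-1})$, and since $\phi_{i-1}$ is idempotent (it is the identity on its image), $\phi_{i-1}(\Delta(t_i))=\Delta(t_i)$. Hence $\Delta(t_i)=\phi_{i-1}(r_i)$.

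It remains to argue that $\phi_{i-1}(r_i)\in F_0$. Here I would invoke Corollary~\ref{COR:ind} (with the indicator function): since $r_i\in F_0$, we have $\ind_{n}(r_i)=0$ (more precisely $\ind_{i-1}(r_i)=0$, working in the tower up to level $i-1$), and Corollary~\ref{COR:ind} gives $\ind_{n}(\phi_{i-1}(r_i))\le\ind_{n}(r_i)=0$, so $\phi_{i-1}(r_i)\in F_0$. Therefore $\Delta(t_i)=\phi_{i-1}(r_i)\in F_0$, which is exactly the reducedness condition for the $i$-th monomial. Since $i$ was arbitrary, $(F_n,\sigma)$ is reduced over $F_0$.

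The main subtlety — really the only point requiring care — is making the indicator bookkeeping rigorous: Corollary~\ref{COR:ind} is stated for the complete reduction $\phi_n$ on the full tower $F_n$, so I would either apply it to the sub-tower $(F_{i-1},\sigma)$ over $(F_0,\sigma)$ with its own complete reduction $\phi_{i-1}$ (the corollary applies verbatim), or note that in the recursive construction $\phi_{i-1}$ restricted appropriately agrees with the relevant reduction and that $\phi_{i-1}$ maps $F_0$ into $F_0$. Everything else is a direct application of the direct-sum decomposition and idempotence of $\phi_{i-1}$, so no real obstacle arises beyond this normalization of which field and which reduction one is working with.
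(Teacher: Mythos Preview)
Your proof is correct and follows essentially the same route as the paper: apply $\phi_{i-1}$ to the equation $\Delta(t_i)=\Delta(g_i)+r_i$, use idempotence together with the well-generated hypothesis to obtain $\Delta(t_i)=\phi_{i-1}(r_i)$, and then invoke Corollary~\ref{COR:ind} on the sub-tower $F_{i-1}$ to conclude $\phi_{i-1}(r_i)\in F_0$. The paper phrases it as a proof by contradiction and writes $\ind_{i-1}$ rather than $\ind_n$, but the argument and the handling of the indicator subtlety you flagged are identical.
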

\begin{proof} 
Suppose that $F_n$ is not reduced. Then we can take an $i$ with $1 \le i \le n$  and $\Delta(t_i) \notin F_0$ such that there are $g_i \in F_{i-1}$ and $r_i\in F_0$ with~\eqref{Equ:RefinedTeleReduced}. Since $\Delta(t_i)\in\im(\phi_{i-1})$ and $\phi_{i-1}$ is idempotent, it follows that $\phi_{i-1}(\Delta(t_i))=\Delta(t_i)$. Hence applying $\phi_{i-1}$ to both sides of~\eqref{Equ:RefinedTeleReduced}, it follows that $\Delta(t_i)=\phi_{i-1}(\Delta(g_i))+\phi_{i-1}(r_i)=\phi_{i-1}(r_i)$. Moreover, $\ind_{i-1}(\phi_{i-1}(r_i)) \le \ind_{i-1}(r_i)$ by Corollary \ref{COR:ind}. So $\ind_{i-1}(\Delta(t_i)) \le \ind_{i-1}(r_i)$, which implies that $\Delta(t_i)\in F_0$, a contradiction. 
\end{proof}

For difference fields $(F,\,\sigma)$ and $(K,\,\tilde{\sigma})$ a map $\tau:F \rightarrow K$ is called a {\em difference field isomorphism}  if $\tau$ is a field isomorphism with $\tilde{\sigma}(\tau(a))=\tau(\sigma(a))$ for all $a\in F$. In this case we say that $(F,\,\sigma)$ and $(K,\,\tilde{\sigma})$ are {\em isomorphic}.

Next, we show how a $\Sigma^*$-tower can be transformed (via a difference ring isomorphism) to a well generated version; compare~\cite[Algorithm~1]{Schn2010}.

\begin{proposition}\label{PROP:diffiso}
	Let $(F_n,\sigma)$ be a $\Sigma^*$-tower over $(F_0,\sigma)$. Then there exists a well generated $\Sigma^*$-tower $(K_n,\sigma)$ over $(F_0,\sigma)$ with $K_n=F_0(u_1,\cdots,u_n)$ 
	and a difference field isomorphism $\tau_n$ from $F_n$ onto $K_n$ with $\tau_n |_{F_0}=\id$.
\end{proposition}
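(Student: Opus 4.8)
The plan is to construct $(K_n,\sigma)$ and the isomorphism $\tau_n$ recursively, level by level, using the complete reduction machinery developed in Section~\ref{SECT:construct} together with Corollary~\ref{Cor:SigmaCheck}. I proceed by induction on $n$. For $n=0$ there is nothing to do: take $K_0=F_0$ and $\tau_0=\id$. Suppose the claim holds for $n-1$, so we have a well generated $\Sigma^*$-tower $(K_{n-1},\sigma)$ over $(F_0,\sigma)$ with $K_{n-1}=F_0(u_1,\dots,u_{n-1})$ and a difference field isomorphism $\tau_{n-1}:F_{n-1}\to K_{n-1}$ with $\tau_{n-1}|_{F_0}=\id$. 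By the inductive hypothesis and Theorem~\ref{TH:CR} there is a complete reduction $\phi_{n-1}$ for $\Delta(K_{n-1})$ on $K_{n-1}$ (here I use that $(K_{n-1},\sigma)$ carries the complete-reduction structure because it is a $\Sigma^*$-tower over $F_0$ once $F_0$ does; if needed one only requires the existence, not the computability).

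Now consider $a_n:=\Delta(t_n)\in F_{n-1}$ and set $b_n:=\tau_{n-1}(a_n)\in K_{n-1}$. Since $t_n$ is a $\Sigma^*$-monomial over $F_{n-1}$, Theorem~\ref{Thm:SigmaKarr} gives $a_n\notin\Delta(F_{n-1})$, hence $b_n\notin\Delta(K_{n-1})$ because $\tau_{n-1}$ is a difference isomorphism. By Corollary~\ref{Cor:SigmaCheck} applied to $b_n\in K_{n-1}^{\times}$ we have $\phi_{n-1}(b_n)\neq 0$, and there is a $\Sigma^*$-monomial $u_n$ over $K_{n-1}$ with $\Delta(u_n)=\phi_{n-1}(b_n)\in\im(\phi_{n-1})$. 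Define $K_n:=K_{n-1}(u_n)$; by construction $\Delta(u_n)\in\im(\phi_{n-1})$, so adjoining $u_n$ keeps the tower well generated, and the constant field is unchanged. It remains to extend $\tau_{n-1}$ to a difference field isomorphism $\tau_n:F_n\to K_n$ fixing $F_0$. Write $\phi_{n-1}(b_n)=b_n-\Delta(w_n)$ for some $w_n\in K_{n-1}$ (the summable part coming out of the complete reduction). Since $t_n$ and $u_n$ are transcendental over $F_{n-1}$ and $K_{n-1}$ respectively, there is a unique field isomorphism $\tau_n:F_n\to K_n$ extending $\tau_{n-1}$ with $\tau_n(t_n)=u_n+w_n$; I then must verify $\sigma(\tau_n(x))=\tau_n(\sigma(x))$ for all $x\in F_n$, and since $\tau_n$ is a ring homomorphism it suffices to check this on the generator $t_n$.

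The verification on $t_n$ runs as follows: $\tau_n(\sigma(t_n))=\tau_n(t_n+a_n)=u_n+w_n+b_n$, while $\sigma(\tau_n(t_n))=\sigma(u_n+w_n)=u_n+\Delta(u_n)+w_n+\Delta(w_n)=u_n+\phi_{n-1}(b_n)+w_n+\Delta(w_n)=u_n+(b_n-\Delta(w_n))+w_n+\Delta(w_n)=u_n+w_n+b_n$, so the two agree. This shows $\tau_n$ is a difference field isomorphism, clearly $\tau_n|_{F_0}=\tau_{n-1}|_{F_0}=\id$, and $(K_n,\sigma)$ is a well generated $\Sigma^*$-tower over $(F_0,\sigma)$, completing the induction.

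\textbf{Main obstacle.} The delicate point is ensuring that the complete reduction $\phi_{n-1}$ on $K_{n-1}$ is actually available at the inductive step: the theorems in Section~\ref{SECT:construct} produce $\phi_i$ for a $\Sigma^*$-tower built over a fixed base $F_0$ equipped with a complete reduction $\phi_0$ for $\Delta(F_0)$, so one must carry through the induction the data of $\phi_0$ (or merely its existence) and note that $(K_{n-1},\sigma)$, being a $\Sigma^*$-tower over $F_0$, inherits $\phi_{n-1}$ via Theorem~\ref{TH:CR} --- the existence of $\phi_0$ is a harmless standing hypothesis since one may always extend $F_0$ or, in the pure existence statement, invoke that $\Delta(F_0)$ has a complement. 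The only other thing to watch is the transcendence of $u_n$ over $K_{n-1}$, which is exactly the $\Sigma^*$-monomial property guaranteed by Corollary~\ref{Cor:SigmaCheck} from $\phi_{n-1}(b_n)\neq 0$; everything else is a routine check that $\tau_n$ respects $\sigma$, carried out above on the single generator.
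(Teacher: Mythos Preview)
Your proof is correct and follows essentially the same route as the paper's: induction on $n$, use the complete reduction on $K_{n-1}$ to produce a $\Sigma^*$-monomial $u_n$ with $\Delta(u_n)=\phi_{n-1}(\tau_{n-1}(\Delta(t_n)))$, and extend $\tau_{n-1}$ by sending $t_n$ to $u_n$ plus the summable part. The paper factors the extension through an intermediate $\Sigma^*$-monomial $\tilde u_n$ with $\Delta(\tilde u_n)=\tau_{n-1}(\Delta(t_n))$ and cites external lemmas (\cite[Prop.~18]{Schn2008}, \cite[Lemma~21]{Schn2010}) for the two resulting isomorphisms, whereas you compose these into a single map and verify $\sigma\circ\tau_n=\tau_n\circ\sigma$ on the generator by hand --- a purely presentational difference, not a different approach.
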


\begin{proof}
	We proceed by induction on $n$. For the base case $n=0$ nothing has to be shown.
	Assume that $n>1$ and that the conclusion holds for $n-1$. So there is a difference field isomorphism $\tau_{n-1}$ from $(F_{n-1},\sigma)$ onto  a well generated $\Sigma^*$-tower $(K_{n-1},\tilde{\sigma})$ with $\tau_{n-1} |_{F_0}=\id$. Then $(K_{n-1}(\tilde{u}_n),\tilde{\sigma})$ with $\tilde{\sigma}(u_n)-u_n =\tau_{n-1}(\Delta(t_n))$ is a $\Sigma^*$-extension because $t_n$ is a $\Sigma^*$-monomial over $F_{n-1}$ and $\tau_{n-1}$ is an isomorphism. 
	Moreover,  It follows from \cite[Proposition~18]{Schn2008} that $\tau_{n-1}$ can be extended to a difference field isomorphism from $F_{n-1}(t_n)$ to $K_{n-1}(\tilde{u}_n)$ by sending $t_n$ to $\tilde{u}_n$.
	On the other hand, there is a complete reduction $\tilde{\phi}_{n-1}$ on $K_{n-1}$ for $\Delta(K_{n-1})$ by Theorem \ref{TH:CR}.  Let $(g_n,r_n) $ be a $\Sigma$- pair of $\Delta(\tilde{u}_n)$ with $r_n=\tilde{\phi}_{n-1}(\Delta(\tilde{u}_n))$. It follows that $K_{n-1}(u_n)$  with $\Delta(u_n)=r_n$ is a $\Sigma^*$-extension by Corollary~\ref{Cor:SigmaCheck}. Moreover, there is a difference isomorphism 
\[
\begin{array}{cccc}
	\mu: & K_{n-1}(\tilde{u}_n) & \longrightarrow &  K_{n-1}(u_n)\\
	& \tilde{u} _n   &  \mapsto &  u_n+g_n.      
\end{array}
\]
with $\mu|_{K_{n-1}}=\id$ by \cite[Lemma~21]{Schn2010}. Thus $\mu$ is a difference field isomorphism and therefore $\tau_{n}=\mu \circ \tau_{n-1}$ establishes a difference isomorphism between $F_n$ and $K_n$.
\end{proof}

The following example illustrates that well-generated $\Sigma^*$-towers may help us to reduce the depth of a given input sum.
\begin{example}
Let $F_2:=F_0(t_1,t_2)$ and $f$ be given in Example \ref{EX:HN}. We transform $F_2$ to a well generated one.
We start with the ground field $F_0$ and take the isomorphism $\tau:F_0 \rightarrow F_0$ with $\tau(a)=a$ for all $a\in F_0$. By Example \ref{EX:HN}, we find a $\Sigma$-pair 
$(t_1-1/x,\,1/x)$ of $\Delta(t_1)$. It follows that $F_0(u_1)$ with $\Delta(u_1)=1/x$ is a $\Sigma^*$-extension by Corollary~\ref{Cor:SigmaCheck}. Furthermore, we extend the isomorphism $\tau$ to $\tau:F_0(t_1) \rightarrow F_0(u_1)$ by sending $t_1$ to $u_1+1/x$. Finally, applying Algorithm {\tt CompleteReduction} to $\tau\left( \sigma(t_1/x)\right)$, we obtain the $\Sigma$-pair 
$$\left(\frac{u_1^2}{2}+\frac{u_1}{x}+\frac{1}{x^2},\, \frac{1}{2x^2}\right),$$
which implies that $F_0(u_1,u_2)$ with $\Delta(u_2)=1/2x^2$ is a $\Sigma^*$-tower over $F_0$. In addition, the isomorphism can be further extended to $\tau:F_0(t_1,t_2) \rightarrow F_0(u_1,u_2)$ with $\tau(t_2)=u_2+u_1^2/2+u_1/x+1/x^2$. \\
Consequently, $K_2:=F_0(u_1,u_2)$ is a well generated $\Sigma^*$-tower over $F_0$ and is isomorphic to $F_2$. 
Next, we simplify $\tau(f)$ in $K_2$. The result is
$$\tau(f)=\Delta\left(\frac{u_1^3}{6}+u_1u_2\right)+\frac{1}{3x^3},$$
which produces
$$\sum_{k=1}^{n}\frac1k\sum_{j=1}^{k}\frac{H_j}{j}=\frac{H_n^3}{6}+\frac{H_nH_n^{(2)}}{2}+\frac{H_n^{(3)}}{3}.$$
Compared with the right side of \eqref{EQ:identity1}, we managed to reduce the nesting depth (the number of recursive summation quantifiers) from $2$ to $1$.
\end{example}
 
\section{Conclusion}\label{Sec:Conclusion}

In this article we combined ideas from the differential field~\cite{DGLL2025} and difference field setting~\cite{Schn2007} to derive a new algorithmic approach to solve the refined telescoping equation~\eqref{Equ:RefinedTele} in a tower of $\Sigma^*$-extensions that relies only on complete reductions. In particular, we demonstrated that the complete reduction approach applied to summands in small $\Sigma$-towers (with $\sim3$ generators) but involving big polynomials with large total degree outperforms the telescoping algorithms of {\tt Sigma} that rely on solving telescoping equations in subfields below (by using denominator and degree bounds and computing the solution by recursive system solving). Further experiments (and enhanced implementations) will be necessary to compare these two different approaches for more complicated towers of $\Sigma^*$-extensions.\\
In addition, we illustrated how these reduction based methods can be used to solve parameterized telescoping (including creative telescoping as a special case) problems and to reduce the nesting depth of the input sums. In particular, we connected our constructions  to Karr's reduced $\Sigma^*$-extensions in the context of his Fundamental Theorem (see Theorem~\ref{Thm:KarrFT}). Further developments will be necessary to link complete reductions to more refined structural theorems derived in~\cite{Schn2010}. In particular, it will be interesting to see how depth-optimal $\Sigma^*$-extensions~\cite{Schn2008} can be connected to complete reductions in order to compute sum representations with minimal nesting depth more efficiently.

In general, we developed a general framework that solves the problem to compute complete reductions in a tower of $\Sigma^*$-extensions if the problem can be solved in the ground difference field (see Theorem~\ref{Thm:CRAlgorithmDF} and Corollary~\ref{Cor:SringExt}). This will be the basis for further investigations to extend the existing machinery.
In particular, using results from~\cite{CHKL2015,DHL2018,CDGHL2025} it will be important to develop complete reductions for $\Pi$-extensions and $R\Pi$-ring extensions~\cite{Schn2016a,Schn2017} in which one can model \hbox{($q$--)}hypergeometric products and their mixed versions; for further details we refer to~\cite{Schn2021} and literature therein.

Finally, we remark that the proposed construction of complete reductions are inspired by the differential case and thus these symbolic integration and summation approaches can be compared straightforwardly. In this regard, a natural question is to which extend our results (like reducing the nesting depth of sums in complete reductions) can be taken up to the differential field setting.   
	
\subsection*{Acknowledgment} This research was funded in whole or in part by the Austrian Science Fund (FWF) [Grant-DOI 10.55776/PAT1332123], [Grant-DOI
10.55776/P20347] and [Grant-DOI 10.55776/I6130]. 

S.\ Chen was partially 
supported by the National Key R\&D Programs of
China (No.\ 2020YFA0712300 and No.\ 2023YFA1009401), the NSFC grants (No.\ 12271511 and No.\ 11688101), 
the CAS Funds of the Youth Innovation Promotion Association (No.\ Y2022001), and 
the Strategic Priority Research Program of the Chinese Academy of Sciences (No.\ XDB0510201).

H.\ Huang was partially supported by the NSFC grant (No.\ 12101105) and the Natural Science Foundation of Fujian 
Province of China (No.\ 2024J01271).
	

\newcommand{\Gathen}{\relax}\newcommand{\Hoeij}{\relax}\newcommand{\Hoeven}{\relax}\def\cprime{$'$}
\def\cprime{$'$} \def\cprime{$'$} \def\cprime{$'$} \def\cprime{$'$}
\def\cprime{$'$} \def\cprime{$'$} \def\cprime{$'$} \def\cprime{$'$}
\def\polhk#1{\setbox0=\hbox{#1}{\ooalign{\hidewidth
			\lower1.5ex\hbox{`}\hidewidth\crcr\unhbox0}}} \def\cprime{$'$}

\end{document}